\documentclass[11pt]{article}

\input glyphtounicode
\pdfgentounicode=1

\usepackage[T1]{fontenc}
\usepackage[utf8]{inputenc}

\usepackage[english]{babel}

\usepackage{mathtools}
\usepackage{amsthm}

\usepackage{enumitem}

\usepackage{xcolor}
\definecolor{hypercolor}{rgb}{0,0.2,0.7}

\usepackage{tikz}

\usepackage[numbers,sort&compress]{natbib}
\setlength{\bibsep}{0.5ex plus 0.3ex}

\usepackage{hyperref}
\hypersetup{
  final,
  unicode,
  breaklinks,
  colorlinks,
  linkcolor=hypercolor,
  urlcolor=hypercolor,
  citecolor=hypercolor,
  bookmarksnumbered,
  bookmarksdepth=2
}


\setlength{\textwidth}{380pt}
\setlength{\evensidemargin}{(\paperwidth-\textwidth)/2 - 1in}
\setlength{\oddsidemargin}{\evensidemargin}
\setlength{\marginparwidth}{(\paperwidth-\textwidth)/2 - 20pt}

\setlength{\textheight}{660pt}
\setlength{\topmargin}{
  (\paperheight-\textheight-\headheight-\headsep-\footskip)/2 - 1in
}

\allowdisplaybreaks

\usepackage[labelfont=bf]{caption}

\numberwithin{equation}{section}


\newif\iffancyfont%
\fancyfontfalse%

\IfFileExists{MinionPro.sty}{\IfFileExists{MyriadPro.sty}{\fancyfonttrue}{}}{}

\iffancyfont
  \usepackage[final]{microtype}
  \usepackage[lf,medfamily,italicgreek,openg]{MinionPro}
  \usepackage[cal=dutchcal]{mathalfa}

  \renewcommand{\dotsb}{{\mathinner{\cdotp\cdotp\cdotp}}}
  \renewcommand{\dotsm}{{\mathinner{\cdotp\cdotp\cdotp}}}

  \DeclareRobustCommand{\defn}{\mathrel{{\vdotdot}{\equal}}}

  \usepackage[toc,eqno,bib,enum]{tabfigures}
  \newtagform{tabular}[\figureversion{tabular}]{(}{)}
  \usetagform{tabular}

  \usepackage{dsfont}
  \newcommand{\one}{\mathds{1}}

  \linespread{1.05}

  \DeclarePairedDelimiter{\parap}{\lsem}{\rsem}
\else
  \usepackage{lmodern}
  \usepackage{amssymb,amsfonts}

  
  \let\uppi\pi

  \newcommand{\defn}{\coloneqq}

  \newcommand{\one}{{{\mathchoice {\rm 1\mskip-4mu l} {\rm 1\mskip-4mu l} {\rm 1\mskip-4.5mu l} {\rm 1\mskip-5mu l}}}}

  \usepackage{mathabx}
  \DeclarePairedDelimiter{\parap}{\ldbrack}{\rdbrack}
\fi



\newcounter{smallarabics}
\newenvironment{arabicenumerate}
{\begin{list}{{\normalfont\textrm{(\arabic{smallarabics})}}}
  {\usecounter{smallarabics}\setlength{\itemindent}{0cm}
   \setlength{\leftmargin}{5ex}\setlength{\labelwidth}{4ex}
   \setlength{\topsep}{0.75\parsep}\setlength{\partopsep}{0ex}
   \setlength{\itemsep}{0ex}}}
{\end{list}}

\newcounter{smallroman}

\def\ben#1\een{\begin{arabicenumerate}#1\end{arabicenumerate}}


\newcommand\Diag{\mathrm{Diag}}

\newcommand\p{{\rm \partial}}

\newcommand\T{{\rm T}}

\newcommand\rd{\mathrm{d}}

\newcommand\starprod{\star}

\newcommand\tr{{\rm tr}}



\renewcommand\i{{\rm i}}

\def\otimesal{\mathop{\hbox{\raise 1.5 ex
  \hbox{$\scriptscriptstyle\rm al$}
\kern -0.92 em \hbox{$\otimes$}}}}
\def\oplusal{\mathop{\hbox{\raise 1.5 ex
  \hbox{$\scriptscriptstyle\rm al$}
\kern -0.92 em \hbox{$\oplus$}}}}
\def\Gammal{\hbox{\raise 1.68 ex
\hbox{$\scriptscriptstyle\rm al$}\kern -0.50 em $\Gamma$}}


\newcommand\rr{{\mathbb R}}



\newcommand\cD{{\mathcal D}}

\newcommand\cS{{\mathcal S}}

\newcommand\cX{{\mathcal X}}

\renewcommand{\Re}{\operatorname{Re}}

\DeclareMathOperator{\supp}{supp}

\DeclareMathOperator{\Op}{Op}

\DeclareMathOperator{\Tr}{Tr}

\DeclarePairedDelimiter{\abs}{\lvert}{\rvert}

\DeclarePairedDelimiter{\jnorm}{\langle}{\rangle}

\renewcommand\bar{\overline}

\newcommand\e{{\rm e}}

\renewcommand\d{\mathop{}\!\mathrm{d}}
\renewcommand\i{{\mathrm i}}

\newcommand{\conj}[1]{\overline{#1}}
\newcommand{\nnabla}{\nabla}
\newcommand{\ppartial}{{\partial_p}}
\newcommand{\rlapprime}{\mathrlap{{}'}}


\newcommand{\bs}{\boldsymbol}

\def\beq#1\eeq{\begin{equation}#1\end{equation}}
\def\bep#1\eep{\begin{proposition}#1\end{proposition}}
\def\ber#1\eer{\begin{remark}#1\end{remark}}
\def\bel#1\eel{\begin{lemma}#1\end{lemma}}
\def\bet#1\eet{\begin{theorem}#1\end{theorem}}
\def\bex#1\eex{\begin{example}#1\end{example}}
\def\bed#1\eet{\begin{definition}#1\end{definition}}
\def\bea#1\eet{\begin{assumption}#1\end{assumption}}
\def\bec#1\eec{\begin{corollary}#1\end{corollary}}


\newtheorem{theorem}{Theorem}[section]
\newtheorem{proposition}[theorem]{Proposition}
\newtheorem{lemma}[theorem]{Lemma}
\newtheorem{definition}[theorem]{Definition}
\newtheorem{corollary}[theorem]{Corollary}
\newtheorem{remark}[theorem]{Remark}
\newtheorem{example}[theorem]{Example}
\newtheorem{assumption}{Assumption}[section]

\begin{document}

\hypersetup{
  pdfauthor={Jan Dereziński, Adam Latosiński, Daniel Siemssen},
  pdftitle={Pseudodifferential Weyl Calculus on (Pseudo-)Riemannian Manifolds}
}

\title{Pseudodifferential Weyl Calculus on (Pseudo-)Riemannian Manifolds}

\author{
  Jan Derezi\'{n}ski,
  Adam Latosi\'{n}ski
  \\[.5em]
  Department of Mathematical Methods in Physics, Faculty of Physics\\
  University of Warsaw\\
  Pasteura 5, 02-093 Warszawa, Poland\\
  E-mail:
  \href{mailto:jan.derezinski@fuw.edu.pl}{jan.derezinski@fuw.edu.pl},
  \href{mailto:adam.latosinski@fuw.edu.pl}{adam.latosinski@fuw.edu.pl}\\
  \\[.5em]
  Daniel Siemssen\thanks{corresponding author}
  \\[.5em]
  Department of Mathematics\\
  University of York\\
  Heslington, York YO10 5DD, United Kingdom\\
  E-mail: \href{mailto:daniel.siemssen@york.ac.uk}{daniel.siemssen@york.ac.uk}
}

\maketitle

\begin{abstract}
  One can argue that on flat space $\mathbb{R}^d$ the Weyl quantization is the most natural choice and that it has the best properties (e.g., symplectic covariance, real symbols correspond to Hermitian operators).
  On a generic manifold, there is no distinguished quantization, and a quantization is typically defined chart-wise.
  Here we introduce a quantization that, we believe, has the best properties for studying natural operators on pseudo-Riemannian manifolds.
  It is a generalization of the Weyl quantization -- we call it the \emph{balanced geodesic Weyl quantization}.
  Among other things, we prove that it maps square integrable symbols to Hilbert-Schmidt operators, and that even (resp.\ odd) polynomials are mapped to even (resp.\ odd) differential operators.
  We also present a formula for the corresponding star product and give its asymptotic expansion up to the 4th order in Planck's constant.
\end{abstract}

\section{Introduction}

\emph{Quantization} means representing operators by their \emph{classical symbols}, that is, functions on phase space.
This concept first appeared in quantum physics in the early 20th century.
Only later this idea was adopted in pure mathematics.
Mathematicians, in particular, introduced the so-called \emph{pseudodifferential operators}, defined as quantizations of certain symbol classes.
\emph{Pseudodifferential calculus} is nowadays a major tool in the study of partial differential equations.

Pseudodifferential calculus is useful both on~$\rr^d$ and on manifolds.
In the case of~$\rr^d$, one can use its special structure to define several distinguished quantizations.
One of them is the \emph{Weyl} or \emph{Weyl--Wigner quantization}.
One can argue that it is the most natural quantization and that it has the best properties.
Let us list some of its advantages:
\begin{enumerate}
  \item The Weyl quantization is invariant with respect to the group of linear symplectic transformations of the phase space $\T^*\rr^d=\rr^d\oplus\rr^d$.
  \item Real symbols correspond to Hermitian operators.
  \item Error terms in various formulas have a smaller order in~$\hbar$ if we use the Weyl quantization than if we use other kinds of quantizations.
  \item The Weyl quantization of even, resp.\ odd polynomials is an even, resp.\ odd differential operator.
  \item The Weyl quantization is proportional to a unitary operator if symbols are equipped with the natural scalar product and operators are equipped with the Hilbert--Schmidt scalar product.
\end{enumerate}

In order to define a quantization on a generic manifold $M$, one typically covers it by local charts, and then uses the formalism from the flat case within each chart.
This construction obviously depends on coordinates and thus there is no distinguished quantization on a generic manifold.
In particular, considering the natural symplectic structure of the cotangent bundle~$\T^*M$, quantizations are symplectically invariant only on the level of the so-called \emph{principal symbol}.
This is not a problem for many applications of pseudodifferential calculus, which are quite rough and qualitative.
In such applications, it is even not very important which kind of quantization one uses.
In more quantitative applications, it is more important to choose a good quantization, which usually means a version of the Weyl quantization.

Suppose in addition that $M$ is pseudo-Riemannian.
One can try to use its structure to define a quantization that depends only on the geometry of $M$.
In order to discuss various possibilities, let us first assume that $M$ is \emph{geodesically simple}, that is, each pair of points $x,y \in M$ can be joined by exactly one geodesic.

Here is one possible proposal of a geometric\footnote{Geometric in the sense of relying on the pseudo-Riemannian geometry of the manifold.} generalization of the Weyl quantization: given a function $\T^*M\ni(z,p)\mapsto b(z,p)$, we define its \emph{naive geodesic Weyl quantization} to be the operator with the integral kernel, which for $x,y\in M$ is defined as
\begin{equation}\label{eq:symbol-to-kernel-}
  \Op_{\text{naive}}(b)(x,y) \defn \int_{\T_z^*M} b(z,p) \e^{-\i u \cdot p} \frac{\d p}{(2\uppi)^d},
\end{equation}
where $z$ is the middle point of the geodesic joining $x$ and $y$, $u$ is the tangent vector at $z$ such that $x=\exp_z(-\frac12 u)$ and $y=\exp_z(\frac12 u)$, and we integrate over the variable $p$ in the cotangent space $\T_z^*M$.
The quantization $\Op_{\text{naive}}$ is geometric (does not depend on coordinates) and it reduces to the Weyl quantization if $M=\rr^d$.
However, there exist better definitions, as we argue below.

In our paper we propose to multiply the right-hand side of~\eqref{eq:symbol-to-kernel-} by
\begin{equation}\label{pre1}
  \frac{\abs{g(x)}^\frac14 \abs{g(y)}^\frac14} {\abs{g(z)}^{\frac12}},
\end{equation}
the product of the appropriate roots of the determinants of the metric at $x$, $y$ and $z$.
With this factor we obtain an integral kernel which is a half-density in both $x$ and $y$.
Then we multiply it by the biscalar
\begin{equation}\label{pre2}
  \Delta(x,y)^\frac12,
\end{equation}
the square root of the so-called \emph{Van Vleck--Morette determinant}.\footnote{%
  The (complicated) story of the Van Vleck--Morette determinant can be found in an interesting article~\cite{ChoSt}, where it is argued that its correct name should be the Morette--van Hove determinant.
  We, however, use the name that seems to be established in the differential geometry community.
}
We obtain
\begin{equation}\label{eq:symbol-to-kernel+}
  \Op(b)(x,y) \defn \frac{\Delta(x,y)^\frac12 \abs{g(x)}^\frac14 \abs{g(y)}^\frac14} {\abs{g(z)}^{\frac12}} \int_{\T_z^*M} b(z,p) \e^{-\i u \cdot p} \frac{\d p}{(2\uppi)^d},
\end{equation}
which we call the \emph{balanced geodesic Weyl quantization of the symbol $b$}.
This quantization belongs to a family of quantizations considered in~\cite{fulling:aa}, although for scalar functions instead of half-densities.

Note that the Van Vleck--Morette determinant is a biscalar, and therefore $\Op(b)(x,y)$ is a half-density in both~$x$ and~$y$, which is appropriate for the integral kernel of an operator acting on smooth, compactly supported half-densities on~$M$.
Moreover, the balanced geodesic Weyl quantization has a remarkable property: it satisfies
\begin{equation}\label{paza}
  \Tr\bigl(\Op(a)^*\Op(b)\bigr) = \int_{\T^*M} \bar{a(z,p)} b(z,p) \d z\, \frac{\d p}{(2\uppi)^d}.
\end{equation}
This is the analog of property (5) from the list of advantages of the Weyl quantization in the flat case.
Actually, the balanced geodesic Weyl quantization has all the advantages from that list except for (1).

It is easy to see that a quantization which reduces to the Weyl quantization in the flat case and satisfies~\eqref{paza} is essentially unique and is given by~\eqref{eq:symbol-to-kernel+}.

In general pseudo-Riemannian manifolds, there can be no or many geodesics joining pairs of points.
However, there always exists a certain neighborhood $\Omega$ of the diagonal such that each pair $(x,y)\in\Omega$ is joined by a distinguished geodesic.
Therefore, on a general pseudo-Riemannian manifold the definition~\eqref{eq:symbol-to-kernel+} (and also~\eqref{eq:symbol-to-kernel-}) makes sense only inside $\Omega$.
To make it global, one can insert a smooth cutoff supported in $\Omega$, equal to $1$ in a neighborhood of the diagonal.
This is not a serious drawback, since in practice pseudodifferential calculus is mostly used to study properties of operators close to the diagonal.
Note also that this cutoff does not affect $\Op(a)$ if $a$ is a polynomial in the momenta, since the kernel of $\Op(a)$ is then supported inside the diagonal.

We are convinced that our quantization is a good tool for studying natural operators on~$M$, such as the Laplace--Beltrami operator~$\Delta$ in the Riemannian case and the d'Alembert (wave) operator~$\Box$ in the Lorentzian case.
In our future papers, we plan to describe applications of this quantization to computing singularities at the diagonal of the kernel of the heat semigroup $\e^{\tau\Delta}$, Green's operator $(\Delta+m^2)^{-1}$, the proper time dynamics $\e^{\i\tau\Box}$, the Feynman propagator $(\Box-\i0)^{-1}$, etc.

Besides the geodesic Weyl quantization, we introduce also a whole family of quantizations parametrized by $\tau\in[0,1]$.
The Weyl quantization corresponds to $\tau=\frac12$.
All of them satisfy obvious analogs of the identity~\eqref{paza}.
One can argue that the cases $\tau=0$ and $\tau=1$ are also of practical interest.
However, the case $\tau=\frac12$ typically leads to the most symmetric algebraic expressions.
As mentioned above, the main application of our pseudodifferential calculus is to obtain asymptotic expansions of integral kernels $B(x,y)$ around the diagonal $x=y$.
The geodesic Weyl quantization gives such expansions around the middle point of the geodesic joining $x$ and~$y$.
If we use the $\tau=0$ quantization, then the expansion uses $x$ as the central point, which is less symmetric and involves less cancellations.

For many authors, the philosophy of quantization is quite different from ours.
Some authors study quantization as an end in itself.
Others are interested only in applications which are quite robust and to a large extent insensitive to the choice of the quantization.
Such applications include propagation of singularities, elliptic regularity, computation of the index of various operators.
Other applications, such as spectral asymptotics, are more demanding in the choice of quantization.
Our aim is to have an efficient tool for computing the asymptotics of various operators, giving an expansion which is as simple as possible.
We have already checked this when computing the Feynman propagator on a Lorentzian manifold.
We started from the naive geodesic Weyl quantization, and we discovered empirically that inserting the prefactors \eqref{pre1} and \eqref{pre2} decreases substantially proliferation of various error terms.

There is another argument why the balanced Weyl quantization is superior to the naive one:
If one computes asymptotics of heat kernels or Feynman propagators using the traditional methods, without the momentum variables, as, e.g., in \cite{avramidi:ab,avramidi:heat}, then the
prefactors \eqref{pre1} and \eqref{pre2} appear.
So these prefactors simplify the expressions one looks for.

Our original motivation for introducing the balanced geodesic Weyl quantization comes from quantum field theory on curved spacetimes.
One would like to define renormalized Wick powers of fields and its derivatives using a scheme that depends only on the local geometry.
On a (flat) Minkowski space, renormalization is usually done in the momentum representation.
There are in fact several (essentially equivalent) schemes for renormalization that use momentum representation.
It is usually stated that on curved spaces the momentum representation is not available, and one has to use the position representation, which is much more complicated.
The main tool for renormalization is then the asymptotics around the diagonal of the Feynman propagator, or what is equivalent, of the so-called Hadamard state defining the two point function.
The use of a quantization allows us to use the momentum representation for renormalization on curved spacetimes.
In our opinion, the balanced geodesic Weyl quantization will lead to the simplest computations in this context.

Quantization and pseudodifferential calculus is an old subject with an interesting history and large literature.
The Weyl quantization was first proposed by Weyl in 1927 \cite{Weyl}.
Wigner was the first who considered its inverse, called sometimes the \emph{Wigner function} or the \emph{Wigner transform}~\cite{Wig}.
The \emph{star product} (the product of two operators on the level of symbols) and the identity~\eqref{paza} were first described by Moyal~\cite{Moy}.

Pseudodifferential calculus became popular after the paper by Kohn{--\penalty0\hskip0pt\relax}Nirenberg~\cite{KN}.
Kohn--Nirenberg used the $\tau=1$ quantization.
The usefulness of the Weyl quantization in PDE's was stressed by H\"ormander~\cite{hormander:aa,hormander:ae}.
Weyl quantization is also discussed, e.g., in~\cite{shubin:ad,derezinski:ac}.

Several quantizations on manifolds were considered before in the literature.
Working either with the Levi-Civita connection on a Riemannian manifold, with an arbitrary connection, or with the so-called linearization introduced in~\cite{bokobza-haggiag:aa,bokobza-haggiag:ab}, most works attempt to generalize the $\tau=0$ or $\tau=1$ quantization, see e.g.~\cite{bokobza-haggiag:aa,bokobza-haggiag:ab,fulling:ab,widom:aa}, the more recent works~\cite{pflaum:aa,pflaum:ab,sharafutdinov:aa,sharafutdinov:ab,safarov,bordemann} and the references therein.
An extensive introduction to early results on this topic can be found in~\cite{fulling:ab}.
Besides addressing the intricacies of defining a quantization on a manifold (e.g., caustics of geodesics), some of these works discuss symbol classes, the star product, heat kernel and resolvent computations (although only to low order).

A generalization of the Weyl quantization to manifolds with a connection was advocated by Safarov~\cite{safarov,demuth:aa} and is essentially equivalent to what we call the naive geodesic Weyl quantization.
Similar definitions can be found also in other places in the literature, e.g.~\cite{Nest}.
More recently, Levy~\cite{levy:aa} considered a similar generalization for manifolds with linearization.
Like in our manuscript, these papers consider the whole class of $\tau$-quantizations on manifolds.
They, however, do not use the geometric factor involving the Van-Vleck--Morette determinant, which as we argued above, improves the properties of a quantization.

Ideas very close to those of our paper were discussed by Fulling~\cite{fulling:aa}, who analyzed the effect of various powers of the Van Vleck--Morette determinant on the quantization.
Fulling, in particular, remarked that the square root of this determinant may be viewed as the distinguished choice.
He also noticed the term $\frac16R$, which appears when one tries to define the Laplace--Beltrami or d'Alembert operator using the Weyl-type quantization involving the Van Vleck--Morette determinant.
There is one difference between our approach and Fulling's: he used scalars, whereas we use half-densities -- of course, this is a minor difference, since it is easy to pass from one framework to the other.
We go much further than Fulling in the analysis of the balanced geodesic Weyl quantization:
we prove the identity~\eqref{paza}, which we view as a key advantage of this quantization, and we analyze the corresponding star product.

Güntürk in his PhD thesis \cite{gunturk:aa} attempted to develop a Weyl calculus on Riemannian manifolds, including the star product.
His approach, however, involved some constructions depending on coordinates, hence it was not fully geometric.

As we stressed above, the balanced geodesic Weyl quantization is geometric -- by this we mean that it essentially depends only on the geometry of a pseudo-Riemannian manifold $M$.
However,  one should mention that the name \emph{geometric quantization} has already a well-established meaning, which involves a somewhat different setting.
The usual starting point of geometric quantization is a symplectic or, more generally, a Poisson manifold.
Then one tries to define a noncommutative associative algebra which is a deformation of the commutative algebra of functions on this manifold.
This deformation is often performed only on the level of formal power series in a small parameter, usually called Planck's constant $\hbar$.
The resulting construction goes under the name of (formal) \emph{deformation quantization}.
See e.g.~\cite{woodhouse,ali-englis,echeverria-enriquez} for an overview on geometric quantization and \cite{hawkins} for its relation to deformation quantization.

In our construction, the symplectic manifold is always $\T^*M$, the cotangent bundle to a pseudo-Riemannian manifold $M$.
We obtain a certain unique realization of deformation quantization -- a formal power series in $\hbar$ that gives an associative product on functions on $\T^*M$, which depends only on the geometry of $M$.

The plan of the paper is as follows:
In Sect.~\ref{Elements of differential geometry} we introduce notation for various objects of differential geometry of pseudo-Riemannian manifolds, which we will use when presenting our results.
In Sect.~\ref{Quantization} we define a family of quantizations depending on a $\tau\in[0,1]$.
The most important is the geodesic Weyl quantization, which corresponds to $\tau=\frac12$.
We discuss its basic properties.
Among these properties the most demanding technically is the formula for the star product.
We give its expansion up to the terms of the 4th order.
In Sect.~\ref{Methods} we explain the methods for the derivation of the expansion of the star product.
This is the most technical part of our paper.
The methods that we use are essentially known from the works of Synge~\cite{synge:ab}, DeWitt~\cite{dewitt:aa} (see also~\cite{christensen:aa}), Avramidi~\cite{avramidi:aa,avramidi:ab} and others.

There are several systems of notation in differential geometry.
We will use more than one.
For the presentation of our results, we use mostly a coordinate-free and index-free notation, which is rather concise and transparent.
It works especially well around the diagonal and is convenient for presentation of the results of our work.
However, to compute quantities it is preferable to use other notations, which typically involve coordinates and indices.

\section{Elements of differential geometry}
\label{Elements of differential geometry}

\subsection{Basic notation}

Let $M$ be a connected manifold.
The tangent and cotangent space at $x \in M$ are denoted $\T_xM$ and $\T_x^*M$, respectively.
$\T_x^{p,q}M$ will denote the space of $p$-contravariant and $q$-covariant tensors at $x$.

Often we will use a coordinate dependent notation, which involves indices, denoted by Greek letters.
Sometimes we will also use multiindices, which will be indicated by boldface letters.
For instance, $\boldsymbol\alpha = (\alpha_1, \dotsc, \alpha_n)$ and $\abs{\boldsymbol\alpha} = n$.
Thus $T\in\T_x^{p,q}M$ after fixing a system of coordinates can be written as $T_{\boldsymbol\alpha}^{\boldsymbol\beta}$ with $|{\boldsymbol\alpha}|=q$, $|{\boldsymbol\beta}|=p$.

From now on we assume that $M$ is a (pseudo-)Riemannian manifold~$M$ with the metric tensor~$g$.
For any $x \in M$ let $U_x \subset \T_xM$ be the set of vectors $u$ such that the inextendible geodesic $\gamma_u(\tau)$ starting at $x$ with initial velocity $u$ is defined at least for $\tau \in [0,1]$.
The \emph{exponential map} is then defined as
\begin{equation*}
  U_x \ni u \mapsto \exp_x(u) = \gamma_u(1) \in M.
\end{equation*}
For brevity we will often write
\begin{equation*}
  x+u \defn \exp_x(u),
  \quad
  x-u \defn \exp_x(-u).
\end{equation*}
We say that $M$ is \emph{geodesically complete} if for any $x\in M$ we have $U_x=\T_xM$.

\subsection{Bitensors}

Given two points $x,y \in M$, a \emph{bitensor} is an element $T \in \T_x^{p,q}M\otimes\T_y^{t,s}M$ for some $p,q,t,s$.
A bitensor field is a function $M \times M \ni (x,y) \mapsto T(x,y)$ such that $T(x,y)$ is a bitensor.
In other words, a bitensor field $T$ is a section of the exterior tensor product bundle $\T^{p,q}M\boxtimes\T^{t,s}M$.
Below will generally not distinguish between bitensors and bitensor fields in notation, and call both simply `bitensors'.

If we use coordinate notation, we distinguish indices belonging to the second point by primes.
For example, $T(x, y)_{\boldsymbol\alpha\,\boldsymbol\mu'}^{\boldsymbol\beta\,\boldsymbol\nu'}$ is a $|\boldsymbol\beta|$-contravariant, $|\boldsymbol\alpha|$-covariant tensor at~$x$ and $|\boldsymbol\nu|$-contravariant, $|\boldsymbol\mu|$-covariant tensor at~$y$.
Note that in the context of bitensors the prime is not a part of the name of the corresponding indices, and only the indication to which point they belong.

As another example, consider a bitensor $T(x,y)_{\mu\nu'}$ and two vector fields $v^\mu$ and $w^\mu$.
Then
\begin{equation*}
  f(x,y) = T(x,y)_{\mu\nu'} v(x)^\mu w(y)^\nu
\end{equation*}
is a biscalar, i.e., a scalar in $x$ and $y$.

If no ambiguity arises, we will often omit the dependence on $x,y$.

For the \emph{coincidence limit} $y \to x$ we use Synge's bracket notation
\begin{equation*}
  [T](x) \defn \lim_{y \to x} T(x, y),
\end{equation*}
whenever the limit exists and is independent of the path $y \to x$.

\subsection{Parallel transport and covariant derivative}
\label{Parallel transport and covariant derivative}

The metric defines the \emph{parallel transport} along an arbitrary curve.
We will most often use the parallel transport along geodesics.
Given $u \in \T_xM$ and a tensor $T \in \T_x^{p,q}M$, denote by
\begin{equation}\label{eq:parallel-bracket}
  \parap{T}^u \in \T_{x+u}^{p,q}M
\end{equation}
the tensor $T$ \emph{parallel transported} from $x$ to the point $x+u$ along the unique geodesic given by~$u$.

Let $S \in \T_{x+u}^{p,q}M$.
The tensor $S$ \emph{backward parallel transported} to $x$ is the unique $\parap{S}_u\in \T_x^{p,q}M$ such that
\begin{equation*}
  \parap[\big]{\parap{S}_u}^u=S.
\end{equation*}

In coordinates, the backward parallel transport on vectors is defined by the bitensor $g(x,x+u)^{\nu}{}_{\mu'}$ at $x$ and $x+u$, and on covectors by its inverse $g(x,x+u)_\nu{}^{\mu'}$.
More precisely, for a tensor $S_{\boldsymbol\mu}^{\boldsymbol\nu}\in\T_{x+u}^{p,q}M$ we have
\begin{equation*}
  g(x,x+u)^{\boldsymbol\alpha}{}_{\boldsymbol\mu'} g(x,x+u)_{\boldsymbol\beta}{}^{\boldsymbol\nu'} S_{\boldsymbol\nu}^{\boldsymbol\mu} = \bigl(\parap{S}_u\bigr)^{\boldsymbol\alpha}_{\boldsymbol\beta},
\end{equation*}
where
\begin{align*}
  g(x,x+u)^{\boldsymbol\alpha}{}_{\boldsymbol\mu'} &= g(x,x+u)^{\alpha_1}{}_{\mu_1'} \,\cdots\, g(x,x+u)^{\alpha_p}{}_{\mu_p'}, \quad p=|\boldsymbol\alpha|, \\
  g(x,x+u)_{\boldsymbol\beta}{}^{\boldsymbol\nu'} &= g(x,x+u)_{\beta_1}{}^{\nu_1'} \,\cdots\, g(x,x+u)_{\beta_q}{}^{\nu_q'}, \quad q=|\boldsymbol\beta|,
\end{align*}
and we use the Einstein summation convention.
We have the identities
\begin{align}
  g(x,x+u)^\alpha{}_{\mu'} g(x,x+u)_\beta{}^{\mu'} &= \delta_\beta^\alpha, \nonumber \\
  g(x)_{\alpha\beta} g(x,x+u)^\alpha{}_{\mu'} g(x,x+u)^\beta{}_{\nu'} &= g(x+u)_{\mu\nu}. \label{byby}
\end{align}
The latter identity means that the metric is covariantly constant.

Let $M \ni x \mapsto T(x) \in \T_x^{r,s}M$ be a tensor field, i.e., a tensor-valued function.
The \emph{covariant derivative} of $T$ in direction $u$ is defined as
\begin{align*}
  u \cdot \nabla T(x)
  &\defn \lim_{\tau \to 0} \frac{1}{\tau} \bigl( T(x) - \parap{T(x+\tau u)}_{\tau u} \bigr) \\
  &\mathrel{\mathrlap{\phantom\defn}=} \frac{\d}{\d\tau} \parap{T(x+\tau u)}_{\tau u} \Big|_{\tau=0}
\end{align*}
or, in coordinates,
\begin{equation*}\begin{split}
  \nabla_\mu T^{\boldsymbol\alpha}_{\boldsymbol\beta} \defn T^{\boldsymbol\alpha}_{\boldsymbol\beta;\mu} &\defn \partial_\mu T^{\boldsymbol\alpha}_{\boldsymbol\beta} + \Gamma^{\alpha_1}_{\mu\nu} T^{\nu\alpha_2\dotsm\alpha_r}_{\boldsymbol\beta} + \dotsb + \Gamma^{\alpha_r}_{\mu\nu} T^{\alpha_1\dotsm\alpha_{r-1}\nu}_{\boldsymbol\beta} \\&\quad - \Gamma^\nu_{\mu\beta_1} T^{\boldsymbol\alpha}_{\nu\beta_2\dotsb\beta_s} - \dotsb - \Gamma^\nu_{\mu\beta_s} T^{\boldsymbol\alpha}_{\beta_1\dotsb\beta_{s-1}\nu},
\end{split}\end{equation*}
where $\Gamma^\lambda_{\mu\nu}$ are the Christoffel symbols.

We can also take covariant derivatives of bitensors.
For that case, note that derivatives with respect to the two base points commute with each other.
That is (suppressing all other indices), every bitensor field $(x,y) \mapsto T(x, y)$ satisfies the identity $T_{;\mu\nu'} = T_{;\nu'\mu}$.

An important result concerning the covariant derivative of bitensors and their coincidence limit is \emph{Synge's rule}.
It states:
\begin{equation}\label{eq:synge-rule}
  [T]_{;\mu} = [T_{;\mu}] + [T_{;\mu'}].
\end{equation}
We refer to Chap.~I.4.2 of the excellent review article~\cite{poisson:ab} for a proof.

\subsection{Horizontal and vertical derivatives}

Let $\T^*M \ni (x,p) \mapsto S(x,p) \in \T_x^{r,s}M$ be a tensor-valued function on the cotangent bundle.
The \emph{horizontal derivative} of $S$ in direction $u$ is defined as
\begin{align*}
  u \cdot \nnabla S(x,p)
  &\defn \lim_{\tau \to 0} \frac{1}{\tau} \bigl( S(x,p) - \parap[\big]{S(x+\tau u, \parap{p}^{\tau u})}_{\tau u} \bigr) \\
  &\mathrel{\mathrlap{\phantom\defn}=} \frac{\d}{\d\tau} \parap[\big]{S(x+\tau u, \parap{p}^{\tau u})}_{\tau u} \Big|_{\tau=0}
\end{align*}
or, in coordinates,
\begin{equation*}\begin{split}
  \nnabla_\mu S^{\boldsymbol\alpha}_{\boldsymbol\beta} \defn S^{\boldsymbol\alpha}_{\boldsymbol\beta;\mu} &\defn \partial_\mu S^{\boldsymbol\alpha}_{\boldsymbol\beta} + \Gamma^\nu_{\mu\rho} p_\nu \frac{\partial}{\partial p_\rho} S^{\boldsymbol\alpha}_{\boldsymbol\beta} + \Gamma^{\alpha_1}_{\mu\nu} S^{\nu\alpha_2\dotsm\alpha_r}_{\boldsymbol\beta} + \dotsb + \Gamma^{\alpha_r}_{\mu\nu} S^{\alpha_1\dotsm\alpha_{r-1}\nu}_{\boldsymbol\beta} \\&\quad - \Gamma^\nu_{\mu\beta_1} S^{\boldsymbol\alpha}_{\nu\beta_2\dotsb\beta_s} - \dotsb - \Gamma^\nu_{\mu\beta_s} S^{\boldsymbol\alpha}_{\beta_1\dotsb\beta_{s-1}\nu}.
\end{split}\end{equation*}
Note that the horizontal derivative can be viewed as a natural generalization of the covariant derivative.
Therefore, it is natural to denote it by the same symbols -- it will not lead to ambiguous expressions.

The \emph{vertical derivative} of $S$ at $x$ in direction $q \in \T^*_xM$ is defined as
\begin{align*}
  q \cdot \ppartial S(x,p)
  &\defn \lim_{\tau \to 0} \frac{1}{\tau} \bigl( S(x,p) - S(x,p+\tau q) \bigr) \\
  &\mathrel{\mathrlap{\phantom\defn}=} \frac{\d}{\d\tau} S(x,p+\tau q) \Big|_{\tau=0},
\end{align*}
or, in coordinates,
\begin{equation*}
  \ppartial^\mu S^{\boldsymbol\alpha}_{\boldsymbol\beta} \defn \frac{\partial}{\partial p_\mu} S^{\boldsymbol\alpha}_{\boldsymbol\beta}.
\end{equation*}
Note that the vertical derivatives commutes with the horizontal derivative.

\subsection{Geodesically convex neighbourhood of the diagonal}

In general, a pair of points of $M$ can have no joining geodesics, a single one or many.
We will say that $M$ is \emph{geodesically simple} if every pair of points is joined by a unique geodesic $\gamma_{x,y}$.

Unfortunately, many interesting connected geodesically complete manifolds are not geodesically simple.
However, in the general case we have a weaker property, which we describe below.

Let $\Diag \defn \{(x,x) \mid x \in M\}$ denote the diagonal.
There exists a neighbourhood $\Omega \subset M \times M$ of $\Diag$ with the property: for all $(x,y) \in \Omega$ there is a unique geodesic $[0,1] \ni \tau \mapsto \gamma_{x,y}(\tau) \in M$ joining $x$ and $y$, and
\begin{equation*}
  \gamma_{x,y}([0,1]) \times \gamma_{x,y}([0,1]) \subset \Omega.
\end{equation*}
Such a neighbourhood will be called a \emph{geodesically convex neighbourhood of the diagonal}.

For $(x,y) \in \Omega$, we introduce the suggestive notation
\begin{equation}\label{eq:tangent_vector}
  (y-x) \defn \exp^{-1}_x(y) \in \T_xM,
\end{equation}
which is the tangent to the distinguished geodesic joining $x$ and $y$.
Note that $(y-x)$ is a bitensor.
More precisely, it is a vector in $x$ and a scalar in $y$, i.e., an element of $T^{1,0}M \boxtimes T^{0,0}M$.

We have
\begin{equation*}
  x + \tau(y-x) = \gamma_{x,y}(\tau).
\end{equation*}
Parallel transporting $(y-x)$, we define for $\tau \in [0,1]$
\begin{equation*}
  (y-x)_\tau \defn \parap{(y-x)}^{\tau(y-x)} \in \T_{x+\tau (y-x)}M
\end{equation*}
as a short-hand.
Clearly $(y-x) = (y-x)_0$ and
\begin{equation*}
  (y-x)_\tau = (1-\tau)^{-1} (y-z), \quad z = x+\tau (y-x)
\end{equation*}
for $\tau \neq 1$.
Furthermore, note the coincidence limit
\begin{equation}\label{eq:deriv_y-x}
  \frac{\partial(y-x)^\mu}{\partial y^\nu} \biggr|_{x=y} = \delta_\nu^\mu,
\end{equation}
which follows directly from~\eqref{eq:tangent_vector}.

\subsection{Synge's world function}

\emph{Synge's world function} $\Omega\ni (x,y) \mapsto \sigma(x, y)$ is defined as half the squared geodesic distance between~$x$ and~$y$.
That is, using the notation of \eqref{eq:tangent_vector},
\begin{equation}\label{eq:defn-synge}
  \sigma(x,y) \defn \frac12 (y-x)^\alpha (y-x)_\alpha = \frac12 (x-y)^\alpha (x-y)_\alpha.
\end{equation}
It is an example of a biscalar.

In the covariant derivatives of $\sigma$, the semicolon $;$ will usually be omitted.
Thus for any multiindices $\boldsymbol{\alpha,\beta,\mu,\nu}$
\begin{equation*}
  \sigma_{;\boldsymbol{\alpha\mu'}}^{;\boldsymbol{\beta\nu'}}
  = \sigma_{\boldsymbol{\alpha\mu'}}^{\boldsymbol{\beta\nu'}}.
\end{equation*}

We have
\begin{equation*}
  \sigma(x,y)^{\mu} = -(y-x)^\mu.
\end{equation*}

Introduce the \emph{transport operators} $\cD \defn \sigma^\mu \nabla_\mu$ and $\cD' \defn \sigma^{\mu'} \nabla_{\mu'}$.
The definition~\eqref{eq:defn-synge} of Synge's world function immediately implies
\begin{equation}\label{eq:defn-synge1}
  (\cD - 2) \sigma = 0,
  \quad
  (\cD' - 2) \sigma = 0.
\end{equation}

Differentiating~\eqref{eq:defn-synge1} once, we obtain
\begin{equation}\label{eq:synge-ident}\begin{alignedat}{2}
  (\cD - 1) \sigma^\mu &= 0,
  &\quad
  (\cD - 1) \sigma^{\mu\rlapprime} &= 0,
  \\
  (\cD' - 1) \sigma^\mu &= 0,
  &\quad
  (\cD' - 1) \sigma^{\mu\rlapprime} &= 0,
\end{alignedat}\end{equation}
which are useful identities for the calculation of the coincidence limits of derivatives of the world function, see Subsect.~\ref{sec:synge-coinc}.

\subsection{Bitensor of parallel transport}

Another example of a bitensor is the \emph{bitensor of the parallel transport}, $g^\mu{}_{\nu'}$, which transports
$\T_yM$ onto $\T_xM$ along the geodesics $\gamma_{y,x}$.
In the notation of~\eqref{eq:parallel-bracket}, for a vector field~$v$,
\begin{equation*}
  g(x,y)^\mu{}_{\nu'} v(y)^\nu = \bigl(\parap{v}_{(y-x)}\bigr)^\mu(x).
\end{equation*}
Note that in Subsect.~\ref{Parallel transport and covariant derivative} we considered a similar object $g_\mu{}^\nu(x,x+u)$, except that there it was viewed as a function of $x\in M$, $u\in\T_xM$, and now it is viewed as a function of $(x,y) \in \Omega$.

Equivalently, the bitensor $g^\mu{}_{\nu'}$ is defined by the transport equations
\begin{equation}\label{eq:covd_parallel_transport}
  \cD g^\mu{}_{\nu'} = 0 = \cD' g^\mu{}_{\nu'}
\end{equation}
with the initial condition $[g^\mu{}_{\nu'}] = \delta^\mu{}_\nu$.

\subsection{Van Vleck--Morette determinant}

If $T\in \T_x^{1,1}M$, we will write $|T|$ for $\abs{\det T}$.
Note that $|T|$ is well defined independently of coordinates.

If $S \in \T_x^{0,2}M$ or $S \in \T_x^{2,0}M$, we will use the same notation $|S|$ for the absolute value of the determinant.
$|S|$ may now depend on the coordinates, but it can still be a useful object.
For instance, the metric $g(x)$ belongs to $\T_x^{0,2}$, however $|g(x)|$ will play a considerable role in our analysis.

Let $T(x,y)^\mu{}_{\nu'}$ be a bitensor, contravariant in~$x$ and covariant in~$y$.
Then it is easy to see that
\begin{equation*}
  |T(x,y)| \frac{|g(x)|^\frac12}{|g(y)|^\frac12}
\end{equation*}
does not depend on coordinates, and hence is a biscalar.
For instance, by~\eqref{byby},
\begin{equation}\label{coord}
  \bigl|g(x,y)^\mu{}_{\nu'}\bigr| \frac{\abs{g(x)}^\frac12}{\abs{g(y)}^{\frac12}}=1.
\end{equation}

For $(x,y) \in \Omega$, the matrix
\begin{equation}\label{bite}
  \frac{\partial(y-x)^\mu}{\partial y^{\nu'}} = \sigma(x,y)^{\mu}{}_{\nu'}
\end{equation}
is a bitensor, contravariant in~$x$ and covariant in~$y$.
Note that the derivatives on the right-hand side of~\eqref{bite} can be applied in any order, and instead of the covariant derivatives we can use the usual derivatives.

The \emph{Van Vleck--Morette determinant} is defined as
\begin{equation*}
  \Delta(x,y) \defn \abs[\bigg]{\frac{\partial(y-x)}{\partial y}} \frac{|g(x)|^\frac12}{|g(y)|^\frac12}.
\end{equation*}
By the discussion above, its definition does not depend on the choice of coordinates.
For coinciding points, we have $\Delta(x,x)=1$ by~\eqref{eq:deriv_y-x}.

\begin{theorem}
  $\Delta$ is continuous on $\Omega$ and $\Delta(x,y)=\Delta(y,x)$.

  Besides, for any $\tau \in [0,1]$, we have
  \begin{subequations}\label{eq:vanvleck-morette-ident}\begin{align}
    \Delta(x,y)
    &= \abs*{\frac{\partial(y-x)_\tau}{\partial y}} \frac{\abs[\big]{g\bigl(x + \tau(y-x) \bigr)}^\frac12}{\abs{g(y)}^\frac12}  \label{eq:vanvleck-morette-ident1} \\
    &= \abs*{\frac{\partial(x-y)_\tau}{\partial x}} \frac{\abs[\big]{g\bigl(y + \tau(x-y) \bigr)}^\frac12}{\abs{g(x)}^\frac12}. \label{eq:vanvleck-morette-ident2}
  \end{align}\end{subequations}
\end{theorem}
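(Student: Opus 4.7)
My plan is to handle continuity, symmetry, and the two $\tau$-identities in turn. Continuity of $\Delta$ is immediate on $\Omega$: both the bitensor $(x,y)\mapsto(y-x)=\exp_x^{-1}(y)$ and the metric determinants are smooth, and the latter are non-vanishing on $M$, so the defining expression is smooth (and in particular continuous).

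For the symmetry $\Delta(x,y)=\Delta(y,x)$, I would first rewrite the definition in terms of Synge's world function. Using $\sigma^\mu(x,y)=-(y-x)^\mu$ and raising an index with $g(x)^{-1}$, one obtains
\[
\Delta(x,y)=\frac{|\det\sigma_{\mu\nu'}(x,y)|}{|g(x)|^{1/2}\,|g(y)|^{1/2}},
\]
whose denominator is already manifestly symmetric in $x,y$. The biscalar identity $\sigma(x,y)=\sigma(y,x)$ gives $\sigma_{\mu\nu'}(x,y)=\sigma_{\nu\mu'}(y,x)$ (the mixed second derivative exchanges roles under $x\leftrightarrow y$), and since a matrix and its transpose have the same determinant, this yields $\Delta(x,y)=\Delta(y,x)$.

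For \eqref{eq:vanvleck-morette-ident1}, the key observation is the factorization $(y-x)_\tau^\mu=P^\mu{}_\rho(x,z)\,(y-x)^\rho$, where $z=x+\tau(y-x)$ and $P^\mu{}_\rho:=g(z,x)^\mu{}_{\rho'}$ is the bitensor of parallel transport from $x$ to $z$. Two properties then do the work. First, because parallel transport is an isometry from $(\T_xM,g(x))$ to $(\T_zM,g(z))$, the analogue of \eqref{coord} gives $|\det P|\,|g(z)|^{1/2}=|g(x)|^{1/2}$. Second, because $P$ is covariantly constant in both arguments (equation \eqref{eq:covd_parallel_transport}), its action commutes with differentiation in $y$, so the Jacobian factorizes (in the bitensor sense) as $\partial(y-x)_\tau/\partial y=P\cdot\partial(y-x)/\partial y$. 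Taking absolute determinants and inserting the appropriate metric densities,
\[
\left|\frac{\partial(y-x)_\tau}{\partial y}\right|\frac{|g(z)|^{1/2}}{|g(y)|^{1/2}}=|\det P|\,|g(z)|^{1/2}\cdot\left|\frac{\partial(y-x)}{\partial y}\right|\frac{1}{|g(y)|^{1/2}}=\Delta(x,y),
\]
which is \eqref{eq:vanvleck-morette-ident1}. Identity \eqref{eq:vanvleck-morette-ident2} then follows by applying the already-proven symmetry $\Delta(x,y)=\Delta(y,x)$ to \eqref{eq:vanvleck-morette-ident1} with $x$ and $y$ interchanged.

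The principal technical obstacle I anticipate lies in making the Jacobian factorization rigorous. Read naively in coordinates, $\partial(y-x)_\tau^\mu/\partial y^\nu$ is not a bitensor, because the basepoint $z=z(y)$—and with it the coordinate frame at $z$—moves as $y$ varies, producing an extra non-tensorial term involving $\partial P/\partial z\cdot\partial z/\partial y$. The formula must therefore be interpreted as a covariant derivative along the family of base points (equivalently, computed in a frame parallel-propagated from a fixed reference point near $z$); only after this reinterpretation does $\nabla P=0$ deliver the clean factorization $\partial(y-x)_\tau/\partial y=P\cdot\partial(y-x)/\partial y$ used above, and only then is the combination $\left|\partial(y-x)_\tau/\partial y\right||g(z)|^{1/2}/|g(y)|^{1/2}$ a genuine biscalar.
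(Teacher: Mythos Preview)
Your proposal follows essentially the same route as the paper: symmetry via the mixed second derivatives $\sigma_{\mu\nu'}$ of the world function, and the $\tau$-identities via the factorization $(y-x)_\tau = P\cdot(y-x)$ with $P=g(z,x)$ the parallel-transport bitensor, followed by the determinant identity \eqref{coord}. The paper's proof is in fact terser than yours and simply asserts the matrix identity $\partial(y-x)_\tau/\partial y = P\cdot\partial(y-x)/\partial y$ with the gloss ``we only parallel transport the part of the bitensor at $x$'', so your explicit acknowledgment of the moving-basepoint issue is more careful than the original.

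One small correction: equation \eqref{eq:covd_parallel_transport} asserts only $\cD g^\mu{}_{\nu'}=\cD' g^\mu{}_{\nu'}=0$, i.e.\ covariant constancy of $P$ \emph{along the geodesic}, not in all directions. So it does not by itself justify commuting $P$ past $\partial/\partial y$. Your final paragraph already anticipates this, and your proposed resolution---interpret the $z$-index in a parallel-propagated frame so that the expression becomes a genuine biscalar---is exactly what is needed to make both your argument and the paper's rigorous.
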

\begin{proof}
  We use the symmetry of the world-function and then we raise and lower the indices:
  \begin{align*}
    \sigma(x,y)^\mu{}_{\nu'}
    &= \sigma(y,x)_\nu{}^{\mu'} \\
    &= \sigma(y,x)^\alpha{}_{\beta'} g(x)_{\alpha\nu} g(y)^{\beta'\mu'}.
  \end{align*}
  Therefore,
  \begin{align*}
    \bigl|\sigma(x,y)^\mu{}_{\nu'}\bigr|
    &= \bigl|\sigma(y,x)^\alpha{}_{\beta'}\bigr| \frac{|g(x)|}{|g(y)|},
  \end{align*}
  which shows the symmetry of $\Delta(x,y)$.

  We have $(y-x)_\tau=\parap{(y-x)}^{\tau(y-x)}$.
  Therefore,
  \begin{equation*}
    \frac{\partial (y-x)_\tau}{\partial y} = \parap*{\frac{\partial (y-x)}{\partial y}}^{\tau(y-x)},
  \end{equation*}
  where we only parallel transport the part of the bitensor at~$x$, that is,
  \begin{equation*}
    \frac{\partial (y-x)_\tau{}^\mu}{\partial y^{\nu'}} = g(x+\tau(y-x),x)^\mu{}_\alpha \frac{\partial (y-x)^\alpha}{\partial y^{\nu'}}
  \end{equation*}
  Using~\eqref{coord}, we obtain for the determinant
  \begin{equation*}
    \abs*{\frac{\partial (y-x)_\tau}{\partial y}} = \abs*{\frac{\partial (y-x)}{\partial y}} \frac{\abs[\big]{g\bigl(x+\tau(y-x)\bigr)}^\frac12}{\abs{g(x)}^\frac12}.
  \end{equation*}
  Now~\eqref{eq:vanvleck-morette-ident1} and \eqref{eq:vanvleck-morette-ident2} follow.
\end{proof}

The Van Vleck--Morette determinant enters in the expression for the Jacobian of a certain useful change of coordinates:
\begin{proposition}
  Set $z_\tau = x + \tau(y-x)$, and $u_\tau = (y-x)_\tau$.
  Then
  \begin{equation}\label{eq:measure-trafo3}
    \biggl|\frac{\partial(z_\tau,u_\tau)}{\partial(x,y)}\biggr| = \Delta(x,y) \frac{|g(x)|^\frac12 |g(y)|^\frac12}{|g(z_\tau)|}.
  \end{equation}
\end{proposition}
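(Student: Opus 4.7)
The plan is to factor the map $(x,y) \mapsto (z_\tau, u_\tau)$ through the tangent bundle, using the intermediate parametrization $(x, v)$ with $v \defn (y-x) \in \T_x M$:
\begin{equation*}
  (x, y) \xrightarrow{\Phi_1} (x, v) \xrightarrow{\Phi_2} (z_\tau, u_\tau).
\end{equation*}
Computing the Jacobians of $\Phi_1$ and $\Phi_2$ separately and multiplying via the chain rule will yield the claim.

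The first step $\Phi_1$ fixes $x$ and sends $y \mapsto v = (y-x)$. Ordering the coordinates as $(x,y)$ in the source and $(x,v)$ in the target, its Jacobian matrix is block lower triangular with identity in the upper-left block and $\partial(y-x)/\partial y = \sigma(x,y)^\mu{}_{\nu'}$ in the lower-right, cf.~\eqref{bite}. Hence
\begin{equation*}
  \left|\frac{\partial(x, v)}{\partial(x, y)}\right| = \bigl|\sigma(x,y)^\mu{}_{\nu'}\bigr| = \Delta(x, y) \frac{|g(y)|^\frac12}{|g(x)|^\frac12},
\end{equation*}
the last equality being the very definition of the Van Vleck--Morette determinant.

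The second step $\Phi_2 \colon (x, v) \mapsto (\exp_x(\tau v), \parap{v}^{\tau v}) = (z_\tau, u_\tau)$ is nothing but the time-$\tau$ geodesic flow on $\T M$. On the cotangent bundle $\T^*M$ this flow is the Hamiltonian flow generated by $H(x,p) = \frac12 g^{\mu\nu}(x) p_\mu p_\nu$, and so by Liouville's theorem it preserves the canonical symplectic volume $|dx \wedge dp|$. Under the musical isomorphism $p_\mu = g_{\mu\nu}(x) v^\nu$, for which $dp = |g(x)|\,dv$ fibrewise, this is equivalent to the invariance of the natural measure $|g(x)|\,dx\,dv$ on $\T M$ under $\Phi_\tau$. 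Hence
\begin{equation*}
  \left|\frac{\partial(z_\tau, u_\tau)}{\partial(x, v)}\right| = \frac{|g(x)|}{|g(z_\tau)|},
\end{equation*}
and multiplying the two Jacobians gives~\eqref{eq:measure-trafo3}.

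The only mildly technical step is the invocation of Liouville's theorem for the geodesic flow. It can also be verified directly from the identities~\eqref{eq:vanvleck-morette-ident1}--\eqref{eq:vanvleck-morette-ident2} already established, via a block-matrix computation in which the factors of $\tau^d$ arising from the linearization of $z_\tau = \exp_x(\tau v)$ cancel between the blocks; thus the proposition ultimately follows entirely from the preceding theorem.
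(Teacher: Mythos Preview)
Your proof is correct, but it takes a genuinely different route from the paper's.  The paper factors through the intermediate coordinates $(x,u_\tau)$ rather than your $(x,v)=(x,u_0)$: it writes
\[
  \biggl|\frac{\partial(z_\tau,u_\tau)}{\partial(x,y)}\biggr|
  =\biggl|\frac{\partial z_\tau}{\partial x}\Big|_{u_\tau}\biggr|\,
   \biggl|\frac{\partial u_\tau}{\partial y}\Big|_x\biggr|,
\]
evaluates the second factor directly from the identity~\eqref{eq:vanvleck-morette-ident1} of the preceding theorem, and asserts the first factor equals $|g(x)|^{1/2}/|g(z_\tau)|^{1/2}$ with a ``Clearly.''  Your argument instead recognises the second leg $(x,v)\mapsto(z_\tau,u_\tau)$ as the time-$\tau$ geodesic flow on $\T M$ and invokes Liouville's theorem on $\T^*M$, pulled back through the musical isomorphism, to get the Jacobian $|g(x)|/|g(z_\tau)|$ in one stroke.

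What each buys: the paper's route is self-contained within the bitensor machinery it has just developed, appealing only to the Van Vleck--Morette identities~\eqref{eq:vanvleck-morette-ident}, though the ``Clearly'' step for $|\partial z_\tau/\partial x|_{u_\tau}$ is not entirely transparent (indeed, given the other identities, it is equivalent to the Liouville statement you use).  Your route is more conceptual and makes the geometric reason for the result visible---the whole formula is the Van Vleck--Morette Jacobian times a volume-preservation statement for geodesic flow---at the cost of importing a fact from symplectic geometry.  Your closing remark that the Liouville step can be recovered from~\eqref{eq:vanvleck-morette-ident1}--\eqref{eq:vanvleck-morette-ident2} by a block computation is correct in spirit and ties the two approaches together.
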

\begin{proof}
  We have
  \begin{equation}\label{inse}
    \biggl|\frac{\partial(z_\tau, u_\tau)}{\partial(x,y)}\biggr| = \biggl|\frac{\partial(z_\tau, u_\tau)}{\partial(x,u_\tau)}\biggr|\,\biggl|\frac{\partial(x,u_\tau)}{\partial(x,y)}\biggr| = \biggl|\frac{\partial z_\tau}{\partial x}\Big|_{u_\tau}\biggr|\,\biggl|\frac{\partial u_\tau}{\partial y}\Big|_x\biggr|.
  \end{equation}
  Now, by \eqref{eq:vanvleck-morette-ident},
  \begin{equation}\label{eq:measure-trafo1}
    \biggl|\frac{\partial u_\tau}{\partial y}\Big|_x\biggr| = \Delta(x,y) \frac{|g(y)|^\frac12}{|g(z_\tau)|^\frac12}
  \end{equation}
  Clearly,
  \begin{equation}\label{eq:measure-trafo2}
    \biggl|\frac{\partial z_\tau}{\partial x}\biggr| = \frac{|g(x)|^\frac12}{|g(z_\tau)|^\frac12}.
  \end{equation}
  Inserting~\eqref{eq:measure-trafo1} and~\eqref{eq:measure-trafo2} into~\eqref{inse} we obtain~\eqref{eq:measure-trafo3}.
\end{proof}

The Jacobian~\eqref{eq:measure-trafo3} will play an important role in our construction.
For brevity we therefore define the geometric factor ($z_\tau = x + \tau(x-y)$ as before)
\begin{equation*}
  \Upsilon_\tau(x,y) \defn \frac{\Delta(x,y)^\frac12 \abs{g(x)}^\frac14 \abs{g(y)}^\frac14}{\abs{g(z_\tau)}^\frac12},
\end{equation*}
which will appear in several of our proofs.

\begin{remark}
  Actually, the left hand side of \eqref{eq:measure-trafo3} depends only on the connection on $M$, and does not depend on its pseudo-Riemannian structure.
  Clearly, the same is true concerning $\Upsilon_\tau$.
  Therefore, the balanced geodesic Weyl quantization can be defined on a manifold with just a connection -- it does not need to be a pseudo-Riemannian manifold.
  (This remark is due to Wojciech Kami\'nski.)
\end{remark}

\section{Quantization}
\label{Quantization}

\subsection{Quantization on a flat space}

In this subsection we collect well-known facts concerning the quantization on a flat space, which we would like to generalize to curved (pseudo-)Riemannian manifolds.

Consider the vector space $\cX=\rr^d$, with $\cX^\#$ denoting the space dual to~$\cX$.
By the Schwartz Kernel Theorem, continuous operators $B : \cS(\cX) \to \cS'(\cX)$ are defined by their kernels $B(\cdot\,,\cdot) \in \cS'(\cX \times \cX)$.
An operator $B$ on $L^2(\cX)$ is Hilbert--Schmidt if and only if its kernel satisfies $B(\cdot\,,\cdot)\in L^2(\cX \times \cX)$.
The Hilbert--Schmidt scalar product satisfies the identity
\begin{equation*}
  \Tr(A^*B) = \int_{\cX \times \cX} \bar{A(x,y)} B(x,y) \d x \d y.
\end{equation*}

Let us fix a positive number $\hbar$ called \emph{Planck's constant}.
The parameter $\hbar$ conveniently keeps track of the ``order of semiclassical approximation''.

Let $b \in \cS'(\cX \times \cX^\#)$.
For any $\tau \in \rr$, we associate with $b$ the operator $\Op_\tau(b) : \cS(\cX) \to \cS'(\cX)$, given by the kernel
\begin{equation*}
  \Op_\tau(b)(x,y) \defn \int_{\cX^\#} b\bigl(x+\tau(y-x),p\bigr) \e^{\frac{\i (x-y) \cdot p}{\hbar}} \frac{\d p}{(2\uppi\hbar)^d}
\end{equation*}
and called the \emph{$\tau$-quantization of the symbol~$b$}.

The most natural quantization corresponds to $\tau=\frac12$ and is called the \emph{Weyl quantization of the symbol~$b$}.
Instead of $\Op_\frac12(b)$ we will simply write $\Op(b)$.

The quantizations corresponding to $\tau=0$ and $\tau=1$ are also useful.
They are sometimes called the $x,p$ and the $p,x$ \emph{quantizations}.
In a part of the PDE literature, the $x,p$ quantization is treated as the standard one.
Quantizations corresponding to $\tau$ different from $0,\frac12,1$ are of purely academic interest.

If $A,B$ are Hilbert--Schmidt operators on $L^2(M)$ such that $A=\Op_\tau(a_\tau)$ and $B=\Op_\tau(b_\tau)$, then
\begin{equation*}
  \Tr(A^*B) = \int_{\cX \times \cX^\#} \bar{a_\tau(z,p)} b_\tau(z,p) \d z\, \frac{\d p}{(2\uppi\hbar)^d}.
\end{equation*}

Suppose that the operators $\Op(a)$ and $\Op(b)$ can be composed.
Then one defines the \emph{star product} or the \emph{Moyal product} $a\star b$ by
\begin{equation*}
  \Op(a \star b) = \Op(a) \Op(b).
\end{equation*}

Using the identity
\begin{equation}\begin{split}\label{eq:gaussian}\MoveEqLeft
  \exp\bigl(\tfrac12 \partial_x \cdot A\, \partial_x\bigr) f(x) \\&= (2\uppi)^{-\frac{n}{2}} (\det A)^{-\frac12} \int_{\rr^n} \exp\bigl(-\tfrac12 (x-y) \cdot A^{-1} (x-y)\bigr) f(y) \d y
\end{split}\end{equation}
for an invertible $n \times n$ matrix $A$ with $\Re A \geq 0$, which holds for $f \in \mathcal{S}(\rr^n)$ but can also be understood in larger generality, one obtains two formulas for the star product:
\begin{subequations}\label{eq:moyal}\begin{align}
  \begin{split}
    (a \star b)(z,p)
    &= \int_{\cX \times \cX \times \cX^\# \times \cX^\#} \e^{\frac{2\i (u_1 \cdot p_2 - u_2 \cdot p_1)}{\hbar}} a(z+u_1,p+p_1) \\&\quad\times b(z+u_2,p+p_2) \d u_1 \d u_2 \frac{\d p_1 \d p_2}{(\uppi\hbar)^d}
  \end{split} \\
  \begin{split}
    &= \exp\bigl(\tfrac{\i}{2}\hbar(\partial_{u_1} \cdot \partial_{p_2} - \partial_{u_2} \cdot \partial_{p_1})\bigr) \\&\quad\times a(z+u_1,p+p_1) b(z+u_2,p+p_2) \Big|_{\substack{u_1=u_2=0\\p_1=p_2=0}}.
  \end{split}
\end{align}\end{subequations}

\subsection{Operators on a manifold}

In this subsection, the (pseudo-)Riemannian structure of $M$ is irrelevant.

If $B$ is a continuous operator $C_\mathrm{c}^\infty(M) \to \cD'(M)$, then its \emph{kernel} is the distribution in $\cD'(M\times M)$, denoted $B(\cdot\,,\cdot)$, such that
\begin{equation*}
  \langle f | Bg \rangle = \int_{M \times M} f(x) B(x,y) g(y) \d x \d y,
  \quad
  f, g \in C_\mathrm{c}^\infty(M).
\end{equation*}
For instance, the kernel of the identity is given by the delta distribution.

We will treat elements of $C_\mathrm{c}^\infty(M)$ not as scalar functions, but as half-densities.
With this convention, the kernel of an operator is a half-density on $M \times M$.
Note that with our conventions we need not specify a density with respect to which we integrate.

If two operators $A,B$ can be composed, then we have
\begin{equation}
  AB(x,y) = \int_M A(x,z) B(z,y) \d z.\label{twocom}
\end{equation}
\eqref{twocom} is true, e.g., if both $A$ and $B$ are Hilbert-Schmidt, but of course it also holds in various other situations.

Clearly, the space of square-integrable half-densities on $M$ forms a Hilbert space which will be denoted $L^2(M)$.
It is well known that an operator $B$ on $L^2(M)$ is Hilbert--Schmidt if and only if its kernel satisfies
\begin{equation*}
  B(\cdot\,,\cdot) \in L^2(M \times M).
\end{equation*}
If two operators $A,B$ are Hilbert--Schmidt, the Hilbert--Schmidt scalar product is given by
\begin{equation}\label{eq:trace-kernel}
  \Tr(A^* B) = \int_{M \times M} \bar{A(x,y)} B(x,y) \d x \d y.
\end{equation}

\subsection{Balanced geodesic quantization}

Consider a smooth function
\begin{equation}\label{eq:funk}
  \T^*M \ni (z,p) \mapsto b(z,p).
\end{equation}
Note that $\T^*M$ possesses a natural density, independent of the (pseudo{-\penalty0\hskip0pt\relax})Riemannian structure.
Hence $b$ can be interpreted as we like -- as a scalar, density or, which is the most relevant interpretation for us, as a half-density.

Assume first that $M$ is geodesically simple.
Let $\tau\in\rr$.
We associate with~\eqref{eq:funk} an operator with the kernel
\begin{equation}\label{eq:symbol-to-kernel1}
  \Op_\tau(b)(x,y) \defn  \frac{\Delta(x,y)^\frac12 \abs{g(x)}^\frac14 \abs{g(y)}^\frac14} {\abs{g(z_\tau)}^{\frac12}} \int_{\T_{z_\tau}^*M} b(z_\tau,p) \e^{-\frac{\i u_\tau \cdot p}{\hbar}} \frac{\d p}{(2\uppi\hbar)^d},
\end{equation}
where $z_\tau = x+\tau(y-x) \in M$ and $u_\tau = (y-x)_{\tau} \in\T_{z_\tau}M$.
We call $\Op_\tau(b)$ the \emph{geodesic $\tau$-quantization of the symbol $b$}.

If $M$ is not geodesically simple, then the definition~\eqref{eq:symbol-to-kernel1} needs to be modified.
Recall that $\Omega$ is a geodesically convex neighbourhood of $\Diag$.
Choose $\Omega_1$, another geodesically convex neighbourhood of $\Diag$ such that the closure of $\Omega_1$ is contained in $\Omega$.
Fix a function $\chi\in C^\infty(M\times M)$ such that $\chi=1$ on $\Omega_1$ and $\supp \chi\subset \Omega$.
Then instead of~\eqref{eq:symbol-to-kernel1} we set
\begin{equation}\label{eq:symbol-to-kernel}
  \Op_\tau(b)(x,y) \defn \chi(x,y) \frac{\Delta(x,y)^\frac12 \abs{g(x)}^\frac14 \abs{g(y)}^\frac14} {\abs{g(z_\tau)}^{\frac12}} \int_{\T_{z_\tau}^*M} b(z_\tau,p) \e^{-\frac{\i u_\tau \cdot p}{\hbar}} \frac{\d p}{(2\uppi\hbar)^d},
\end{equation}
where $z_\tau = x+\tau(y-x) \in M$ and $u_\tau = (y-x)_{\tau} \in\T_{z_\tau}M$.

Most of the time we will use $\tau=\frac12$, which is the analog of the Weyl quantization, and then we will write simply $\Op(b)$ instead of $\Op_\frac12(b)$.
$\Op(b)$ will be called the \emph{balanced geodesic Weyl quantization of the symbol $b$}.

Our quantization depends on a \emph{Planck constant} $\hbar>0$.
This is however a minor thing: if we set $\hbar=1$, we can easily put it back in all formulas, by dividing all momenta except those appearing in the arguments of symbols by $\hbar$ (i.e., replacing $p$ by $\hbar^{-1} p$).
Therefore, for simplicity, in all proofs we will set $\hbar=1$.
However, in the statements of various properties, we will keep $\hbar$ explicit.

Conversely, suppose that we are given a kernel $B(x,y)$ supported in $\Omega_1$.
Note that then we can drop $\chi$ from~\eqref{eq:symbol-to-kernel} and its $\tau$-symbol is
\begin{equation}\label{eq:kernel-to-symbol}
  b_\tau(z,p) \defn \int_{\T_zM} \frac{|g(z)|^{\frac12}}{\Delta(x_\tau,y_\tau)^{\frac12}|g(x_\tau)|^{\frac14}|g(y_\tau)|^{\frac14}} B(x_\tau,y_\tau) \e^{\frac{\i u \cdot p}{\hbar}} \d u,
\end{equation}
where $x_\tau = z-\tau u$ and $y_\tau = z+(1-\tau)u$.

Using~\eqref{eq:measure-trafo1}, this can be reexpressed as
\begin{align*}
  b_\tau(z,p) &= \int_M \Delta(x_\tau,y)^{\frac12} \frac{\abs{g(y)}^\frac14}{\abs{g(x_\tau)}^\frac14} B(x_\tau,y) \e^{\frac{\i u_\tau \cdot p}{\hbar}} \d y
\intertext{for $\tau\neq1$, where $u_\tau=(1-\tau)^{-1}(z-y)$ and $x_\tau=z-\tau u$, or}
  b_\tau(z,p) &= \int_M \Delta(x,y_\tau)^{\frac12} \frac{\abs{g(x)}^\frac14}{\abs{g(y_\tau)}^\frac14} B(x,y_\tau) \e^{\frac{\i u_\tau \cdot p}{\hbar}} \d x
\end{align*}
for $\tau\neq0$, where $u_\tau=-\tau^{-1}(z-x)$ and $y_\tau=z+(1-\tau)u$.

\begin{proposition}
  Eq.~\eqref{eq:kernel-to-symbol} is the inverse to \eqref{eq:symbol-to-kernel}.
\end{proposition}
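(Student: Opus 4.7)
The plan is to substitute the formula \eqref{eq:symbol-to-kernel} for $B=\Op_\tau(b)$ into the right-hand side of \eqref{eq:kernel-to-symbol} and watch everything collapse to a Fourier inversion in the fibre over~$z$. Fix $(z,p)\in\T^*M$; for $u\in\T_zM$ set, as in \eqref{eq:kernel-to-symbol}, $x_\tau=z-\tau u$ and $y_\tau=z+(1-\tau)u$.

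The one essential geometric observation is that the three points $x_\tau$, $z$, $y_\tau$ already lie on a single geodesic, namely $s\mapsto\exp_z(su)$, with $x_\tau$ at parameter $-\tau$ and $y_\tau$ at parameter $1-\tau$. Consequently, if $z'$ and $u'$ denote the ``middle point'' and ``transported tangent'' attached by \eqref{eq:symbol-to-kernel} to the pair $(x_\tau,y_\tau)$, namely $z'=x_\tau+\tau(y_\tau-x_\tau)$ and $u'=(y_\tau-x_\tau)_\tau$, then $z'=z$ and $u'=u$. This uses only additivity of the geodesic parameter and the fact that parallel transport from $z$ to $x_\tau$ followed by the reverse transport back to $z$ along the same geodesic is the identity.

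Once $z'=z$ and $u'=u$ are in hand, the prefactor $\Delta(x_\tau,y_\tau)^\frac12|g(x_\tau)|^\frac14|g(y_\tau)|^\frac14/|g(z')|^\frac12$ produced by $\Op_\tau(b)(x_\tau,y_\tau)$ cancels exactly against the reciprocal prefactor in \eqref{eq:kernel-to-symbol}. The cutoff $\chi$ equals $1$ by the standing hypothesis that the kernel is supported in~$\Omega_1$. What remains is
\begin{equation*}
  b_\tau(z,p)=\int_{\T_zM}\int_{\T_z^*M} b(z,p')\,\e^{\i u\cdot(p-p')/\hbar}\,\frac{\d p'}{(2\uppi\hbar)^d}\,\d u,
\end{equation*}
which equals $b(z,p)$ by Fourier inversion on the finite-dimensional vector space $\T_zM$.

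The opposite direction, that $\Op_\tau$ applied to the symbol produced by \eqref{eq:kernel-to-symbol} returns $B$, goes by the dual substitution: one performs the change of variables $(z,u)\mapsto(x_\tau,y_\tau)$ whose Jacobian, by \eqref{eq:measure-trafo3}, is precisely $\Upsilon_\tau(x_\tau,y_\tau)^{2}$; the geometric factors again cancel and Fourier inversion closes the argument. The only real bookkeeping step -- and essentially the sole place where anything has to be checked -- is the pair of geodesic identities $z'=z$, $u'=u$; once these are verified, the rest is routine.
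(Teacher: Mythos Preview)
Your first direction is correct and matches the paper's argument; in fact you make explicit the one point the paper leaves tacit, namely that $x_\tau,z,y_\tau$ lie on the single geodesic $s\mapsto\exp_z(su)$, so the $(z',u')$ produced by~\eqref{eq:symbol-to-kernel} from the pair $(x_\tau,y_\tau)$ coincide with $(z,u)$.

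Your sketch of the second direction, however, invokes machinery that is not needed and does not fit. When you compute $\Op_\tau(b_\tau)(x,y)$, the point $z$ and the vector $u=(y-x)_\tau$ are \emph{fixed} by $(x,y)$; substituting~\eqref{eq:kernel-to-symbol} introduces an integral over a new tangent variable $v\in\T_zM$ (with $x'=z-\tau v$, $y'=z+(1-\tau)v$) and the existing integral over $p\in\T_z^*M$. There is no integration over $(z,u)$, hence no occasion for the change of variables $(z,u)\mapsto(x_\tau,y_\tau)$ or the Jacobian~\eqref{eq:measure-trafo3}. The argument is exactly parallel to your first direction: the $p$-integral is Fourier inversion on $\T_zM$, yielding $\delta(v-u)$; at $v=u$ one has $x'=x$, $y'=y$, the factors $\Upsilon_\tau(x,y)$ and $\Upsilon_\tau(x',y')^{-1}$ cancel, and $B(x,y)$ emerges. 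This is precisely what the paper does. So drop the Jacobian remark and the second direction becomes as clean as the first.
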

\begin{proof}
  On the one hand, if $b(z,p)$ is a symbol with $\tau$-quantization $\Op_\tau(b)(x,y)$,
  \begin{align*}\MoveEqLeft
    \int_{\T_zM} \Upsilon_\tau(x,y)^{-1} \Op_\tau(b)(x,y) \e^{\i u \cdot p} \d u \\
    &= \int_{\T_zM \times \T^*_zM} b(z,q) \e^{\i u \cdot (p-q)} \frac{\d q}{(2\uppi)^d}\, \d u
    = b(z,p),
  \end{align*}
  where $x = z-\tau u$ and $y = z+(1-\tau)u$.
  On the other hand, if $B(x,y)$ is a kernel and $b_\tau(z,p)$ is its symbol given by~\eqref{eq:kernel-to-symbol},
  \begin{align*}\MoveEqLeft
    \Upsilon_\tau(x,y) \int_{\T_z^*M} b_\tau(z,p) \e^{-\i u \cdot p} \frac{\d p}{(2\uppi)^d} \\
    &= \Upsilon_\tau(x,y) \int_{\T_z^*M \times \T_zM} \Upsilon_\tau(x',y')^{-1} B(x',y') \e^{\i (v-u) \cdot p} \d v \,\frac{\d p}{(2\uppi)^d} \\
    &= B(x,y),
  \end{align*}
  where $u = (y-x)_\tau$, $x' = z-\tau v$ and $y' = z+(1-\tau)v$.
\end{proof}

\begin{remark}
  The drawback of~\eqref{eq:symbol-to-kernel} is the fact that $\Op_\tau(b)$ depends on the cutoff $\chi$.
  It is possible to modify this definition so that it is purely geometric and this cutoff is not needed.
  In fact, recall that given $z \in M$ and $u \in \T_zM$ there exist $-\infty\leq t_-<t_+\leq+\infty$ such that $\mathopen{]}t_-,t_+\mathclose{[}\ni t\mapsto \exp_z(tu)\in M$ is an inextendible geodesics.
  We set
  \begin{equation*}
    \Op_{\tau,\mathrm{global}}(b)(x,y) \defn
    \sum_{(z,u) \in \Gamma_\tau(x,y)}
    \biggl|\frac{\partial(x,y)}{\partial(z,u)}\biggr|^{-\frac12} \int_{\T_{z}^*M} b(z,p) \e^{-\frac{\i u \cdot p}{\hbar}} \frac{\d p}{(2\uppi\hbar)^d},
  \end{equation*}
  where
  \begin{equation*}
    \Gamma_\tau(x,y) \defn \bigl\{ (z,u) \in \T M \;\big|\;
    \exp_z(-\tau u)=x,
    \exp_z\bigl((1-\tau)u\bigr)=y \bigr\}.
  \end{equation*}
  If $M$ is geodesically simple, so that we can take $\chi=1$, then
  \begin{equation*}
    \Op_{\tau}(b) = \Op_{\tau,\mathrm{global}}(b).
  \end{equation*}
  If not, they may be different, and besides $\Op_{\tau,\mathrm{global}}(b)$ does not depend on the cutoff~$\chi$.
  However, since we are mostly interested in properties of operators around the diagonal, we keep the definition~\eqref{eq:symbol-to-kernel}.
\end{remark}

\subsection{Independence of the quantization on the cutoff}

We have already mentioned that the dependence of the quantization on the cutoff $\chi$ is mild. This is of course not true if the symbol has low regularity. It can be however expected  for various typical classes of symbols used in the pseudodifferential calculus. In this subsection we will illustrate this independence on the example of the most popular symbol class, $S^m(\T^*M)$.

Let $m\in\mathbb{R}$.
The class $S^m(\T^*M)$ consists of $b\in C^\infty(T^*M)$ such that for any compact $K \subset M$ and for arbitrary multiindices $\boldsymbol\alpha, \boldsymbol\beta$
\begin{equation}\label{symbol}
  \sup_{(z,p) \in \T^*M, z\in K} \langle p \rangle^{\abs{\boldsymbol\alpha} - m} \abs{\ppartial^{\boldsymbol\alpha} \nabla_{\boldsymbol\beta} b(z,p)} < \infty.
\end{equation}

Note that the class $S^m(T^*M)$ is purely geometric (it does not depend on the choice of coordinates).


The following two propositions are versions of well known properties of the standard pseudodifferential calculus adapted to the balanced geodesic Weyl quantization.

\begin{proposition}
  Let  $b\in S^m(\T^*M)$ for some $m$.
  Then $\Op_\tau(b)(x,y)$ is smooth outside of the diagonal.
  Besides, together with all derivatives, it is $O(\hbar^\infty)$ outside of the diagonal.
  \label{diago}
\end{proposition}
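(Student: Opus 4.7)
The plan is a standard non-stationary phase argument in the momentum variable. Fix a compact set $K\subset(M\times M)\setminus\Diag$ contained in the support of $\chi$. Since $\Omega\ni(x,y)\mapsto u_\tau=(y-x)_\tau$ is smooth and vanishes exactly on the diagonal, one has $\abs{u_\tau}\geq c>0$ uniformly on $K$; moreover the prefactor $\chi(x,y)\Upsilon_\tau(x,y)$ is smooth with bounded derivatives of all orders there. In particular the phase $-u_\tau\cdot p/\hbar$ in~\eqref{eq:symbol-to-kernel} has no critical point in $p$.

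To exploit this I would introduce the first-order operator
\[
  L \defn \frac{\i\hbar\,u_\tau\cdot\ppartial}{\abs{u_\tau}^2},
  \qquad \text{which satisfies} \qquad
  L\,\re^{-\i u_\tau\cdot p/\hbar}=\re^{-\i u_\tau\cdot p/\hbar}.
\]
Because $u_\tau$ is independent of $p$, its transpose ${}^t L\,b=-\i\hbar\abs{u_\tau}^{-2}u_\tau\cdot\ppartial b$ sends $S^{m'}$ into $S^{m'-1}$ with a gain of a factor $\hbar$. Integrating by parts $N$ times moves $({}^t L)^N$ onto the symbol, and the estimate~\eqref{symbol} then gives $({}^t L)^N b(z_\tau,p)=O\bigl(\hbar^N\langle p\rangle^{m-N}\bigr)$ uniformly on $K$. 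Choosing $N>m+d$ makes the $p$-integral converge absolutely and yields $\Op_\tau(b)(x,y)=O(\hbar^{N-d})$ on $K$; since $N$ is arbitrary, this is $O(\hbar^\infty)$.

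For smoothness and the corresponding $O(\hbar^\infty)$ bound on every derivative, I would differentiate under the integral in $(x,y)$. Such a derivative produces three kinds of terms: derivatives of the smooth bounded prefactor $\chi\Upsilon_\tau$; horizontal derivatives of $b(z_\tau,p)$ in the base point, which remain in $S^m$ by~\eqref{symbol}; and derivatives of the phase $-u_\tau\cdot p/\hbar$, which bring down factors linear in $p$ with smooth $(x,y)$-dependent coefficients. Consequently a $k$-th order derivative in $(x,y)$ gives an integrand whose order in $\langle p\rangle$ is at most $m+k$, and repeating the integration-by-parts argument with $N>m+k+d$ restores $O(\hbar^\infty)$. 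Varying $K$ then gives smoothness on $(M\times M)\setminus\Diag$ together with the claimed estimate on all derivatives.

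The main piece of bookkeeping---which I expect to be the only real technical nuisance---is to track, via the smoothness of $(x,y)\mapsto(z_\tau,u_\tau)$ on $\Omega$ and the chain rule, that $(x,y)$-differentiation only produces polynomial factors in $p$ together with symbol-class-preserving derivatives of $b$, so that the non-stationary phase mechanism can absorb them at each step.
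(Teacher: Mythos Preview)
Your argument is correct and follows the same non-stationary phase strategy as the paper: integration by parts in $p$ to trade powers of $u_\tau$ for $\ppartial$-derivatives on the symbol, thereby gaining decay in $\langle p\rangle$ together with factors of $\hbar$. The only cosmetic difference is that the paper, instead of introducing $L=\i\hbar\,\lvert u_\tau\rvert^{-2}\,u_\tau\cdot\ppartial$, premultiplies by a monomial $u^{\boldsymbol\alpha}$ and uses $u^{\boldsymbol\alpha}\re^{\i u\cdot p/\hbar}=\hbar^{\lvert\boldsymbol\alpha\rvert}(-\i\ppartial)^{\boldsymbol\alpha}\re^{\i u\cdot p/\hbar}$; this sidesteps the division by $\lvert u_\tau\rvert^{2}$, which in the genuinely pseudo-Riemannian case would otherwise require fixing an auxiliary positive-definite fibre norm.
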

\begin{proof}
  The function $(x,y) \mapsto \bigl(z_\tau(x,y),u_\tau(x,y)\bigr)$ is smooth.
  So are the cutoff function and the geometric prefactor.
  Since we are only interested in smoothness outside of the diagonal, i.e., for $u \neq 0$, it is therefore enough to study the smoothness of the function
  \begin{equation*}
    (z,u) \mapsto u^{\boldsymbol\alpha} \int b(z,p) \e^{\frac{\i u \cdot p}{\hbar}} \frac{\d p}{(2\uppi\hbar)^d}
  \end{equation*}
  for some (sufficiently large) $\abs{\boldsymbol\alpha}$.
  Now,
  \begin{align}
    \MoveEqLeft \nabla_{\boldsymbol\beta} {(-\i\partial_u)}_{\boldsymbol\gamma} u^{\boldsymbol\alpha} \int b(z,p) \e^{\frac{\i u \cdot p}{\hbar}} \frac{\d p}{(2\uppi\hbar)^d} \notag \\
    &= \hbar^{\abs{\boldsymbol\alpha}-\abs{\boldsymbol\gamma}} \int \nabla_{\boldsymbol\beta} b(z,p) (-\i \ppartial)^{\boldsymbol\alpha} p_{\boldsymbol\gamma} \e^{\frac{\i u \cdot p}{\hbar}} \frac{\d p}{(2\uppi\hbar)^d} \notag \\
    &= \hbar^{\abs{\boldsymbol\alpha}-\abs{\boldsymbol\gamma}} \int \e^{\frac{\i u \cdot p}{\hbar}} p_{\boldsymbol\gamma} (\i \ppartial)^{\boldsymbol\alpha} \nabla_{\boldsymbol\beta} b(z,p) \frac{\d p}{(2\uppi\hbar)^d}. \label{laba}
  \end{align}
  Integrand (and prefactor) of \eqref{laba} grow at most as $\hbar^{\abs{\boldsymbol\alpha}-\abs{\boldsymbol\gamma}-d} \jnorm{p}^{m-\abs{\boldsymbol\alpha}+\abs{\boldsymbol\gamma}}$.
  For large enough $\abs{\boldsymbol\alpha}$ it is thus integrable.
  This shows the smoothness of $\Op_\tau(b)(x,y)$ outside of the diagonal and that it is $O(\hbar^\infty)$.
\end{proof}


\begin{proposition}
  Suppose that $\chi_1$, $\chi_2$ are two cutoffs of the type described above and $\Op_1$, $\Op_2$ be the quantizations corresponding to $\chi_1$, $\chi_2$.
  Let $b\in S^m(\T^*M)$.
  Then $\Op_1(b)(x,y)-\Op_2(b)(x,y)$ is smooth, and together with all its derivatives it is $O(\hbar^\infty)$.
\end{proposition}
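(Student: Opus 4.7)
The plan is to reduce the statement essentially to the previous proposition by observing that the difference is supported away from the diagonal. Write
\begin{equation*}
  \Op_1(b)(x,y)-\Op_2(b)(x,y) = \bigl(\chi_1(x,y)-\chi_2(x,y)\bigr)\,K(x,y),
\end{equation*}
where
\begin{equation*}
  K(x,y) \defn \Upsilon_\tau(x,y)\int_{\T_{z_\tau}^*M} b(z_\tau,p)\,\e^{-\frac{\i u_\tau\cdot p}{\hbar}}\,\frac{\d p}{(2\uppi\hbar)^d}
\end{equation*}
is the common ``geometric'' part of the two quantizations, defined on the intersection of the supports of $\chi_1$ and $\chi_2$, which is an open neighbourhood of the diagonal contained in $\Omega$.

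First I would verify that $\chi_1-\chi_2$ vanishes identically on some open neighbourhood of the diagonal. By the construction of the cutoffs, there exist geodesically convex neighbourhoods $\Omega_1^{(i)}$ of $\Diag$ on which $\chi_i=1$ ($i=1,2$). The intersection $\Omega_1^{(1)}\cap\Omega_1^{(2)}$ is then an open neighbourhood of $\Diag$ on which $\chi_1=\chi_2=1$, so $\chi_1-\chi_2$, together with all its derivatives, vanishes on this neighbourhood.

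Next, I would apply Proposition~\ref{diago} to $K$. Strictly speaking, that proposition was stated for the full quantization $\Op_\tau(b)$, but its proof only uses the geometric prefactor and the oscillatory integral in $p$; the presence or absence of the cutoff is irrelevant outside the support of the cutoff minus a neighbourhood of the diagonal, and the same integration-by-parts argument in $p$ shows that $K(x,y)$ is smooth wherever $u_\tau(x,y)\neq 0$ and that, together with all its derivatives, it is $O(\hbar^\infty)$ there. Combining this with the previous step, on a neighbourhood of the diagonal the product $(\chi_1-\chi_2)K$ vanishes with all its derivatives, whereas off that neighbourhood the factor $K$ is smooth and $O(\hbar^\infty)$ along with all its derivatives; hence the product is smooth everywhere and $O(\hbar^\infty)$ with all its derivatives.

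There is no real obstacle here; the only point to be careful about is that Proposition~\ref{diago} is invoked for the bare geometric kernel $K$ rather than for a cut-off quantization, so I would note explicitly that the conclusion of that proposition outside of the diagonal is insensitive to whether a smooth cutoff is present, since any smooth cutoff is bounded together with its derivatives on any compact set. With that observation, the argument is just ``smooth times $O(\hbar^\infty)$ on one region, identically zero on the complementary region'', giving global smoothness and the claimed decay in~$\hbar$.
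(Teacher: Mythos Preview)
Your proposal is correct and is essentially the same approach as the paper's, which simply says ``We repeat the arguments of Prop.~\ref{diago}''; you have just spelled out what that repetition means. One minor slip: you write that $K$ is ``defined on the intersection of the supports of $\chi_1$ and $\chi_2$'', but for the factorization $(\chi_1-\chi_2)K$ you need $K$ on the \emph{union} of the supports---this is harmless since both supports lie in $\Omega$, where the geometric data $z_\tau,u_\tau,\Upsilon_\tau$ are well defined.
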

\begin{proof}
  We repeat the arguments of Prop.~\ref{diago}.
\end{proof}

\subsection{Hilbert--Schmidt scalar product}

\begin{proposition}
  Suppose that $A,B$ are Hilbert--Schmidt operators on $L^2(M)$, whose kernels are supported in~$\Omega_1$.
  Let $A=\Op_\tau(a_\tau)$ and $B=\Op_\tau(b_\tau)$.
  Then
  \begin{equation}\label{trace}
    \Tr(A^* B) = \int_{\T^*M} \conj{a_\tau(z,p)} b_\tau(z,p) \frac{\d p}{(2\uppi\hbar)^d} \d z.
  \end{equation}
\end{proposition}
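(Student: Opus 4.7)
The plan is to unwind the trace via \eqref{eq:trace-kernel}, insert the kernel formula \eqref{eq:symbol-to-kernel}, change coordinates on $M \times M$ using the Jacobian identity \eqref{eq:measure-trafo3}, and conclude by Plancherel in each cotangent fibre.

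First, apply \eqref{eq:trace-kernel}:
\[
  \Tr(A^*B) = \int_{M \times M} \bar{A(x,y)}\, B(x,y)\, \d x\, \d y.
\]
Since the kernels are supported in $\Omega_1$ where $\chi \equiv 1$, the cutoff in \eqref{eq:symbol-to-kernel} plays no role and may be suppressed. Substituting the kernel formula for $A$ with a dummy momentum $p'$ and for $B$ with $p$, the product of the two geometric prefactors equals precisely $\Upsilon_\tau(x,y)^2$.

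Second, change variables $(x,y) \mapsto (z,u) \defn (z_\tau, u_\tau)$. On $\Omega \supset \Omega_1$ this is a smooth diffeomorphism onto its image, and by \eqref{eq:measure-trafo3} the Jacobian equals exactly $\Upsilon_\tau(x,y)^2$, which cancels the geometric prefactor. One is left with
\[
  \int_M \biggl( \int_{\T_z M} \int_{\T_z^*M \times \T_z^*M} \overline{a_\tau(z,p')}\, b_\tau(z,p)\, \e^{\i u \cdot (p'-p)/\hbar}\, \frac{\d p\, \d p'}{(2\uppi\hbar)^{2d}}\, \d u \biggr) \d z.
\]
Performing the $u$-integration over $\T_z M$ via the Fourier identity $\int \e^{\i u \cdot q / \hbar}\, \d u = (2\uppi\hbar)^d \delta(q)$ collapses $p'$ onto $p$, leaving exactly the right-hand side of \eqref{trace}.

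The main technical subtlety is that the Hilbert--Schmidt hypothesis only yields $a_\tau, b_\tau \in L^2(\T^*M)$ via \eqref{eq:kernel-to-symbol}, so the inner Fourier integrals above need not converge absolutely and the final step must be understood distributionally. The cleanest way to justify the computation is to recognize that after the change of variables the $(p,p')$-integrand factors as $\overline{\tilde a_\tau(z,u)}\, \tilde b_\tau(z,u)$, where $\tilde c(z,u) \defn \int c(z,p)\, \e^{-\i u \cdot p /\hbar}\, \d p / (2\uppi\hbar)^d$, and then to apply Plancherel's theorem fibrewise in each $\T_z^* M$, combined with Fubini in~$z$. Alternatively, one may verify the identity first for Schwartz symbols, where each manipulation is pointwise, and extend by density. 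No other genuine obstacle arises, since the geodesic convexity of $\Omega$ ensures that the change of variables in the second step is well-defined on the entire support of the integrand.
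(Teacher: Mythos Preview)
Your proof is correct and follows essentially the same route as the paper: insert the kernel formula into \eqref{eq:trace-kernel}, change variables $(x,y)\mapsto(z_\tau,u_\tau)$ via \eqref{eq:measure-trafo3} so that the $\Upsilon_\tau^2$ prefactor cancels the Jacobian, and then collapse the momentum integrals by the Fourier/delta identity. Your additional remark justifying the last step via fibrewise Plancherel (or density of Schwartz symbols) is a welcome clarification that the paper leaves implicit.
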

\begin{proof}
  Inserting~\eqref{eq:symbol-to-kernel} in~\eqref{eq:trace-kernel}, we find
  \begin{align*}
    \Tr(A^* B) &= \int_{\T^*_{z_\tau}M \times \T^*_{z_\tau}M \times M \times M} \Upsilon_\tau(x,y)^2 \conj{a_\tau(z_\tau,p)} b_\tau(z_\tau,q) \\&\quad\times \e^{\i u_\tau \cdot (p-q)} \frac{\d p}{(2\uppi)^d} \frac{\d q}{(2\uppi)^d} \d x \d y.
  \end{align*}
  Changing integration variables from $(x,y)$ to $(z_\tau,u_\tau)$, see~\eqref{eq:measure-trafo3}, and dropping the subscript $\tau$ gives
  \begin{align*}
    \text{R.H.S.}
    &= \int_{\T^*_zM \times \T^*_zM \times \T_zM \times M} \conj{a_\tau(z,p)} b_\tau(z,q) \e^{\i u \cdot (p-q)} \frac{\d p}{(2\uppi)^d} \frac{\d q}{(2\uppi)^d} \d u \d z \\
    &= \int_{\T^*M} \conj{a_\tau(z,p)} b_\tau(z,p) \frac{\d p}{(2\uppi)^d} \d z.
    \qedhere
  \end{align*}
\end{proof}

Thus all quantizations that we defined are unitary (up to a natural coefficient), which, as we believe, is a strong argument in favor of them.

One can expect that the formula~\eqref{trace} holds for a large class of operators such that $A^*B$ is trace class, even if $A$ and $B$ are not Hilbert--Schmidt.
For example, denote by $f(x)$ the operator of multiplication by a complex function $M \ni z \mapsto f(z)$.
Note that $\Op_\tau(f) = f(x)$.
Therefore, we obtain the formula
\begin{equation}\label{trace-f}
  \Tr\bigl(f(x) \Op_\tau(b)\bigr) = \int_{\T^*M} f(z) b(z,p) \frac{\d p}{(2\uppi\hbar)^d} \d z.
\end{equation}

Note that the integral kernel of the operator $f(x)$ is supported exactly at the diagonal, therefore in the identity~\eqref{trace-f} there is no dependence on the cutoff $\chi$.

\subsection{Translating between different quantizations}

Suppose we are given a symbol for the geodesic $\tau$-quantization and wish to find a corresponding symbol for the $\tau'$-quantization.
An asymptotic formula is given by the following proposition (see also~\cite{safarov}, where an analogous formula is derived):

\begin{proposition}
  Let $\tau, \tau' \in \rr$ and consider symbols $b_\tau,b_{\tau'}$ such that
  \begin{equation*}
    \Op_\tau(b_\tau) \sim \Op_{\tau'}(b_{\tau'}).
  \end{equation*}
  Then
  \begin{align*}
    b_\tau(z,p)
    &\sim \exp\bigl(-\i\hbar(\tau'-\tau)\ppartial\cdot\nnabla\bigr) b_{\tau'}(z,p) \\
    &= \sum_n \frac{(\tau'-\tau)^n}{n!} (-\i\hbar\ppartial\cdot\nnabla)^n b_{\tau'}(z, p).
  \end{align*}
\end{proposition}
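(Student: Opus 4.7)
The plan is to invert the $\tau$-quantization formula \eqref{eq:symbol-to-kernel1} by means of \eqref{eq:kernel-to-symbol}, apply that inversion to the kernel $\Op_{\tau'}(b_{\tau'})$ (which by hypothesis coincides with $\Op_\tau(b_\tau)$), and then carry out a covariant Taylor expansion along the geodesic joining the two natural center points $z_\tau$ and $z_{\tau'}$.

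Setting $\hbar=1$, I equate the two kernel representations
\begin{equation*}
  \Upsilon_\tau(x,y)\int b_\tau(z_\tau,p)\,\e^{-\i u_\tau\cdot p}\,\frac{\d p}{(2\uppi)^d} = \Upsilon_{\tau'}(x,y)\int b_{\tau'}(z_{\tau'},q)\,\e^{-\i u_{\tau'}\cdot q}\,\frac{\d q}{(2\uppi)^d}
\end{equation*}
and observe that $\Upsilon_{\tau'}/\Upsilon_\tau=\abs{g(z_\tau)}^{1/2}/\abs{g(z_{\tau'})}^{1/2}$, since the Van Vleck--Morette determinant and the boundary factors $\abs{g(x)},\abs{g(y)}$ do not depend on $\tau$. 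The crucial geometric input is that $z_\tau$ and $z_{\tau'}$ lie on a common geodesic, so $z_{\tau'}=\exp_{z_\tau}\bigl((\tau'-\tau)u_\tau\bigr)$ and $u_{\tau'}$ is the parallel transport of $u_\tau$ along it. I then change the $q$-integration variable via parallel transport, $q=\parap{q_\tau}^{(\tau'-\tau)u_\tau}$; since parallel transport is an isometry, the pairing $u_{\tau'}\cdot q=u_\tau\cdot q_\tau$ is preserved, and the associated Jacobian $\abs{g(z_{\tau'})}^{1/2}/\abs{g(z_\tau)}^{1/2}$ cancels the prefactor ratio exactly.

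After this change of variable and an inverse Fourier transform in $u_\tau$, writing $s=\tau'-\tau$ and renaming variables, the identity becomes
\begin{equation*}
  b_\tau(z,p) = \int_{\T_zM}\!\!\int_{\T^*_zM} b_{\tau'}\bigl(\exp_z(su),\parap{q}^{su}\bigr)\,\e^{\i u\cdot(p-q)}\,\frac{\d q}{(2\uppi)^d}\,\d u.
\end{equation*}
The integrand is exactly the object whose Taylor expansion in $s$ defines iterates of the horizontal derivative: by the covariant Taylor formula along a geodesic, $b_{\tau'}(\exp_z(su),\parap{q}^{su})=\sum_n\frac{s^n}{n!}(u\cdot\nnabla)^n b_{\tau'}(z,q)$. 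I then use $u^\alpha\e^{\i u\cdot(p-q)}=-\i\partial_{p_\alpha}\e^{\i u\cdot(p-q)}$ to convert each factor $u^\alpha$ into $-\i\ppartial^\alpha$ acting outside the $u$-integral, carry out the Fourier inversion (yielding $\delta(p-q)$), integrate over $q$, and invoke the commutation of $\nnabla$ with $\ppartial$ (valid when $\ppartial b$ is consistently treated as a tensor-valued symbol). The $n$th term reduces to $\frac{(-\i s)^n}{n!}(\ppartial\cdot\nnabla)^n b_{\tau'}(z,p)$, whose summation reads $\exp(-\i s\,\ppartial\cdot\nnabla)b_{\tau'}(z,p)$; restoring $\hbar$ gives the stated formula.

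The principal obstacle is the geometric bookkeeping in two places: checking that the ratio $\Upsilon_{\tau'}/\Upsilon_\tau$ is cancelled \emph{exactly} by the parallel-transport Jacobian on covectors (ultimately a consequence of $g$ being covariantly constant, \eqref{byby}), and formulating the covariant Taylor expansion with simultaneous parallel transport of the cotangent argument so that the horizontal derivative naturally appears. Because the formula is only asymptotic, truncating the Taylor expansion at order $N$ leaves a remainder whose control in symbol seminorms is a routine adaptation of the oscillatory-integral-by-parts argument of Proposition~\ref{diago}, producing an $O(\hbar^{N+1})$ contribution for $b\in S^m(\T^*M)$ and requiring no new ideas.
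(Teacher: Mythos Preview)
Your proposal is correct and follows essentially the same route as the paper's own proof: invert the $\tau$-quantization, feed in the $\tau'$-kernel, cancel the geometric prefactors against the parallel-transport Jacobian on covectors, and then covariantly Taylor-expand $b_{\tau'}\bigl(\exp_z(su),\parap{q}^{su}\bigr)$ using the horizontal derivative. The only cosmetic difference is that the paper converts each factor $u^{\boldsymbol\alpha}$ into $(\i\partial_q)^{\boldsymbol\alpha}$ acting on $\e^{\i u\cdot(p-q)}$ and then integrates by parts in $q$, whereas you convert to $(-\i\partial_p)^{\boldsymbol\alpha}$ and pull it outside before performing the Fourier inversion; these are equivalent manipulations and lead to the same result.
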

\begin{proof}
  Let $u \in \T_zM$, $p \in \T_z^*M$ and $q' \in \T_{z'}M$ with
  \begin{equation*}
    z' \defn z+(\tau'-\tau)u.
  \end{equation*}
  Then, setting
  \begin{align*}
    u' &= \parap{u}^{(\tau'-\tau)u} \in \T_{z'}M, \\
    q  &= \parap{q'}_{(\tau'-\tau)u} \in \T^*_zM,
  \end{align*}
  we calculate
  \begin{align*}
    b_\tau(z,p)
    &= \int_{\T_zM} \frac{\abs{g(z)}^\frac12}{\abs{g(z')}^\frac12} \int_{\T^*_{z'}M} b_{\tau'}(z',q') \e^{\i (u \cdot p - u' \cdot q')} \frac{\d q'}{(2\uppi)^d} \d u \\
    &= \int_{\T_zM \times \T^*_zM} b_{\tau'}(z',\parap{q}^{(\tau'-\tau)u}) \e^{\i u \cdot (p-q)} \frac{\d q}{(2\uppi)^d} \d u \\
    &\sim \int_{\T_zM \times \T^*_zM} \e^{\i u \cdot (p-q)} \sum_{\boldsymbol\alpha} \frac{1}{\abs{\boldsymbol\alpha}!} (\tau'-\tau)^{\abs{\boldsymbol\alpha}} u^{\boldsymbol\alpha} \nnabla_{\boldsymbol\alpha} b_{\tau'}(z,q) \frac{\d q}{(2\uppi)^d} \d u \\
    &= \int_{\T_zM \times \T^*_zM} \sum_{\boldsymbol\alpha} \frac{1}{\abs{\boldsymbol\alpha}!} \bigl( (\i \partial_q{})^{\boldsymbol\alpha} \e^{\i u \cdot (p-q)} \bigr) (\tau'-\tau)^{\abs{\boldsymbol\alpha}} \nnabla_{\boldsymbol\alpha} b_{\tau'}(z,q) \frac{\d q}{(2\uppi)^d} \d u \\
    &= \int_{\T_zM \times \T^*_zM} \e^{\i u \cdot (p-q)} \sum_{\boldsymbol\alpha} \frac{1}{\abs{\boldsymbol\alpha}!} \bigl( -\i (\tau'-\tau) \partial_q \cdot \nnabla \bigr)^{\abs{\boldsymbol\alpha}} b_{\tau'}(z,q) \frac{\d q}{(2\uppi)^d} \d u \\
    &= \exp\bigl(-\i(\tau'-\tau)\ppartial\cdot\nnabla\bigr) b_{\tau'}(z,p).
    \qedhere
  \end{align*}
\end{proof}

\subsection{Quantization of polynomial symbols}

In this section, we consider the quantization of symbols, which are polynomial in the momenta.
Note that for polynomial symbols the integral kernel of their quantization is supported on the diagonal, therefore there are no problems with multiple geodesics between two points.

As usual, boldface Greek letters denote multiindices.
We sum over repeated multiindices and write $p_{\boldsymbol\alpha} = p_{\alpha_1} \dotsm p_{\alpha_n}$.

\begin{proposition}\label{prop:monomial_quant}
  Consider the polynomial symbol
  \begin{equation}\label{eq:polypo}
    a(z,p) = a(z)^{\boldsymbol\alpha} p_{\boldsymbol\alpha},
  \end{equation}
  where $a(z)^{\boldsymbol\alpha}$ are symmetric tensors.
  Then
  \begin{equation}\label{eq:monomial_quant}\begin{split}
    \Op_\tau(a) &= \hbar^{|\boldsymbol\gamma|+|\boldsymbol\delta|} \tau^{|\boldsymbol\alpha+\boldsymbol\gamma|}(1-\tau)^{|\boldsymbol\beta+\boldsymbol\delta|} \\&\quad\times \abs{g}^{-\frac14} (-\i\hbar\nabla)_{\boldsymbol\alpha} a^{\boldsymbol\alpha+\boldsymbol\beta+\boldsymbol\gamma+\boldsymbol\delta} \xi_{\boldsymbol\gamma\boldsymbol\delta} \abs{g}^\frac12 (-\i\hbar\nabla)_{\boldsymbol\beta} \abs{g}^{-\frac14},
  \end{split}\end{equation}
  with
  \begin{equation*}
    \xi_{\boldsymbol\gamma\boldsymbol\delta} \defn (-1)^{\abs{\boldsymbol\gamma}} {(-\i\nabla)}_{\boldsymbol\gamma} {(-\i\nabla')}_{\boldsymbol\delta} \Delta(x,y)^{-\frac12} \Big|_{x=y},
  \end{equation*}
  where the prime indicates (covariant) differentiation with respect to $y$.
\end{proposition}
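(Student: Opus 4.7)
The plan is to reduce the Fourier integral in the definition of $\Op_\tau(a)$ to a distributional derivative of the Dirac measure at the diagonal, and then to rewrite the resulting fibre derivatives as covariant derivatives at $x$ and $y$. Because the symbol~\eqref{eq:polypo} is polynomial in $p$, the kernel will be supported on the diagonal, so the cutoff $\chi$ in~\eqref{eq:symbol-to-kernel} plays no role.

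Applying the identity $p_{\boldsymbol\alpha}\e^{-\i u_\tau\cdot p/\hbar} = (\i\hbar\partial_{u_\tau})_{\boldsymbol\alpha}\e^{-\i u_\tau\cdot p/\hbar}$ and carrying out the momentum integration, the kernel of the $\tau$-quantization of~\eqref{eq:polypo} collapses to
\begin{equation*}
  \Op_\tau(a)(x,y) = \Upsilon_\tau(x,y)\,a(z_\tau)^{\boldsymbol\alpha}\,(\i\hbar\partial_{u_\tau})_{\boldsymbol\alpha}\delta(u_\tau).
\end{equation*}
I would then act this kernel on a compactly supported test half-density $f$ and change the $y$-integration to an integration over $u_\tau$ with $x$ held fixed; the resulting Jacobian is the inverse of~\eqref{eq:measure-trafo1} and combines with $\Upsilon_\tau$ to leave the ``naked'' factor $\Delta(x,y)^{-1/2}\abs{g(x)}^{1/4}\abs{g(y)}^{-1/4}$ in front of the integrand. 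Integrating by parts in $u_\tau$ transfers the $|\boldsymbol\alpha|$ derivatives $(-\i\hbar\partial_{u_\tau})_{\boldsymbol\alpha}$ onto this smooth factor, and the answer is then read off at $u_\tau=0$.

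The main step is to translate $\partial_{u_\tau}$ at the diagonal into covariant derivatives at $x$ and $y$. To first order in normal coordinates based at $z_\tau$ one has $x = z_\tau - \tau u_\tau + O(u_\tau^2)$ and $y = z_\tau + (1-\tau)u_\tau + O(u_\tau^2)$, so $\partial_{u_\tau^\mu}\big|_{u_\tau=0} = -\tau\,\partial_{x^\mu} + (1-\tau)\,\partial_{y^\mu}$. Iterating and applying the multinomial theorem expands $(\i\hbar\partial_{u_\tau})^{\boldsymbol\alpha}$ into a sum indexed by four-fold splittings $\boldsymbol\alpha\sqcup\boldsymbol\beta\sqcup\boldsymbol\gamma\sqcup\boldsymbol\delta$ of the original multiindex, with combinatorial weight $\tau^{|\boldsymbol\alpha+\boldsymbol\gamma|}(1-\tau)^{|\boldsymbol\beta+\boldsymbol\delta|}$: derivatives labelled by $\boldsymbol\alpha$ land on the coefficient $a$ via the $x$-argument of $z_\tau$, those labelled by $\boldsymbol\beta$ on the test half-density $f$ via its $y$-argument, and those labelled by $\boldsymbol\gamma,\boldsymbol\delta$ on the remaining prefactor $\Delta^{-1/2}\abs{g(x)}^{1/4}\abs{g(y)}^{-1/4}$. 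The mixed coincidence-limit derivatives of $\Delta^{-1/2}$ are by definition exactly $\xi_{\boldsymbol\gamma\boldsymbol\delta}$, while the $\abs{g}^{\pm1/4}$ factors, together with the act of promoting the remaining partial derivatives to covariant derivatives on half-densities, produce the $\abs{g}^{-1/4}\cdots\abs{g}^{1/2}\cdots\abs{g}^{-1/4}$ sandwich in the statement.

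The main obstacle lies in this last step: the higher-order terms in the Taylor expansions of $x$ and $y$ as functions of $u_\tau$ bring in Christoffel-symbol corrections that, together with the $\abs{g}^{\pm1/4}$ half-density factors, must conspire to turn every partial derivative into a covariant one and to separate the curvature content cleanly into $\xi_{\boldsymbol\gamma\boldsymbol\delta}$. Synge's rule~\eqref{eq:synge-rule} is the key technical tool for carrying out this conversion: it splits every coincidence-limit derivative of a bitensor into contributions from the two base points, which is exactly the data needed to reorganize the $\partial_{u_\tau}$ expansion into the outer $(-\i\hbar\nabla)_{\boldsymbol\alpha}$ and $(-\i\hbar\nabla)_{\boldsymbol\beta}$ of the claimed formula.
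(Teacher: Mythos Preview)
Your high-level plan (turn $p_{\boldsymbol\alpha}$ into $u$-derivatives, integrate by parts, evaluate at $u_\tau=0$) is the same as the paper's, but there is a real inconsistency in the change of variables that makes your argument break down as written. You first declare that $x$ is held fixed and you switch the $y$-integration to an integration over $u_\tau$; the Jacobian you quote is indeed~\eqref{eq:measure-trafo1}, which is $\lvert\partial u_\tau/\partial y\rvert_x$. But then you use the chain rule $\partial_{u_\tau^\mu}\big|_{u_\tau=0}=-\tau\,\partial_{x^\mu}+(1-\tau)\,\partial_{y^\mu}$, which is the formula for $\partial_{u_\tau}$ with $z_\tau$ held fixed, not $x$. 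With $x$ fixed, varying $u_\tau$ does not move $x$ at all, so no $\partial_x$ term can appear; moreover $z_\tau$ itself moves, so $a(z_\tau)^{\boldsymbol\alpha}$ is not constant under $\partial_{u_\tau}$. The four-fold multinomial split you describe therefore does not arise, and your sentence ``derivatives labelled by $\boldsymbol\alpha$ land on the coefficient $a$ via the $x$-argument of $z_\tau$'' has no clear meaning in the parametrisation you set up.

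The paper avoids this by pairing the kernel with \emph{two} test half-densities $f,h$ and making the full change of variables $(x,y)\to(z,u)$ with Jacobian~\eqref{eq:measure-trafo3}. In those coordinates $a(z)^{\boldsymbol\alpha}$ is genuinely constant in $u$, and $(-\i\partial_u)_{\boldsymbol\alpha}$ distributes by Leibniz over the three $u$-dependent factors $\bar f(z-\tau u)$, $h\bigl(z+(1-\tau)u\bigr)$, and $\Upsilon_\tau^{-1}$; each derivative on the first brings a clean $-\tau$, on the second a clean $1-\tau$, and on the third splits between its two arguments in the same way, producing the $\xi_{\boldsymbol\gamma\boldsymbol\delta}$ coincidence limits. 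The derivatives that landed on $\bar f$ are then moved back onto $a^{\boldsymbol\alpha+\boldsymbol\beta+\boldsymbol\gamma+\boldsymbol\delta}$ and the rest by a final integration by parts in $z$ --- that is the ``integration by parts'' in the paper's last line, and it is where the $(-\i\hbar\nabla)_{\boldsymbol\alpha}$ acting on $a$ actually originates. In your version there is no $z$-integration left to do this, so those outer derivatives have no source. If you repair the parametrisation to $(z,u)$ and pair with two test functions, your outline becomes the paper's proof; Synge's rule is not really needed, since the split between $x$- and $y$-derivatives of $\Upsilon_\tau^{-1}$ comes directly from the chain rule in the $(z,u)$ chart.
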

\begin{proof}
  We calculate
  \begin{align*}
    \begin{split}
      \bigl(f \,\big| \Op_\tau(a) h\bigr)
      &= \int_{M\times M \times \T^*_{x+\tau(y-x)}M} \bar{f(x)} a\bigl(x+\tau(y-x)\bigr)^{\boldsymbol\alpha} p_{\boldsymbol\alpha} h(y) \\&\quad\times \Upsilon_\tau(x,y) \e^{-\i u\cdot p} \d x \d y \,\frac{\d p}{(2\uppi)^d}
    \end{split} \\
    \begin{split}
      &= \int_{M\times \T_zM \times \T^*_zM} \bar{f(z-\tau u)} a(z)^{\boldsymbol\alpha} p_{\boldsymbol\alpha} h\bigl(z+(1-\tau)u\bigr) \\&\quad\times \Upsilon_\tau\bigl(z-\tau,z+(1-\tau)u\bigr)^{-1} \e^{-\i u\cdot p} \d z \d u \,\frac{\d p}{(2\uppi)^d}
    \end{split} \\
    \begin{split}
      &= \int_{M\times \T_zM \times \T^*_zM} \bar{f(z-\tau u)} a(z)^{\boldsymbol\alpha} h\bigl(z+(1-\tau)u\bigr) \\&\quad\times \Upsilon_\tau\bigl(z-\tau u,z+(1-\tau)u\bigr)^{-1} {(\i\partial_u)}_{\boldsymbol\alpha} \e^{-\i u\cdot p} \d z \d u \,\frac{\d p}{(2\uppi)^d}
    \end{split} \\
    \begin{split}
      &= \int_{M\times \T_zM \times \T^*_zM} \e^{-\i u\cdot p}{(-\i\partial_u)}_{\boldsymbol\alpha} \bar{f(z-\tau u)} a(z)^{\boldsymbol\alpha} h\bigl(z+(1-\tau)u\bigr) \\&\quad\times \Upsilon_\tau\bigl(z-\tau u,z+(1-\tau)u\bigr)^{-1} \d z \d u \,\frac{\d p}{(2\uppi)^d}
    \end{split} \\
    \begin{split}
      &= \int_M {(-\i\partial_u)}_{\boldsymbol\alpha} \bar{f(z-\tau u)} a(z)^{\boldsymbol\alpha} h\bigl(z+(1-\tau)u\bigr) \\&\quad\times \Upsilon_\tau\bigl(z-\tau u,z+(1-\tau)u\bigr)^{-1} \d z \,\Big|_{u=0}.
    \end{split}
  \end{align*}
  After an integration by parts, this implies~\eqref{eq:monomial_quant}.
\end{proof}

For the balanced geodesic Weyl quantization, \eqref{eq:monomial_quant} simplifies and we find:
\begin{proposition}
  Consider a polynomial symbol~\eqref{eq:polypo}, as in Prop.~\ref{prop:monomial_quant}.
  Then
  \begin{equation*}
    \Op(a) = 2^{-|\boldsymbol\alpha+\boldsymbol\beta+\boldsymbol\gamma|} \hbar^{|\boldsymbol\gamma|} \abs{g}^{-\frac14} (-\i\hbar\nabla)_{\boldsymbol\alpha} a^{\boldsymbol\alpha+\boldsymbol\beta+\boldsymbol\gamma} \xi_{\boldsymbol\gamma} \abs{g}^\frac12 (-\i\hbar\nabla)_{\boldsymbol\beta} \abs{g}^{-\frac14}
  \end{equation*}
  with
  \begin{equation}\label{eq:xi_weyl}
    \xi_{\boldsymbol\gamma} \defn \sum_{\mathclap{\boldsymbol\alpha+\boldsymbol\beta=\boldsymbol\gamma}} (-1)^{\abs{\boldsymbol\alpha}} {(-\i\nabla)}_{\boldsymbol\alpha} {(-\i\nabla')}_{\boldsymbol\beta} \Delta(x,y)^{-\frac12} \Big|_{x=y}.
  \end{equation}
  Moreover, if $a$ is an even polynomial, $\Op(a)$ has only even degree derivatives, and if $a$ is an odd polynomial, $\Op(a)$ has only odd degree derivatives.
\end{proposition}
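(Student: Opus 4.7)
The plan is to derive both claims directly from Proposition~\ref{prop:monomial_quant} specialized to $\tau=\tfrac12$, and then to invoke the symmetry $\Delta(x,y)=\Delta(y,x)$ for the parity claim. At $\tau=\tfrac12$ the prefactor $\tau^{|\boldsymbol\alpha+\boldsymbol\gamma|}(1-\tau)^{|\boldsymbol\beta+\boldsymbol\delta|}$ collapses to the symmetric value $2^{-|\boldsymbol\alpha+\boldsymbol\beta+\boldsymbol\gamma+\boldsymbol\delta|}$, losing its dependence on how the master multi-index of $a$ is distributed between the four groups. I would therefore regroup the implicit four-way sum in Proposition~\ref{prop:monomial_quant} by fixing the ``outer'' pair $(\boldsymbol\alpha,\boldsymbol\beta)$ together with the union $\widetilde{\boldsymbol\gamma}\defn\boldsymbol\gamma+\boldsymbol\delta$, and then summing separately over the inner splittings $\boldsymbol\gamma+\boldsymbol\delta=\widetilde{\boldsymbol\gamma}$. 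Since $\hbar^{|\boldsymbol\gamma|+|\boldsymbol\delta|}=\hbar^{|\widetilde{\boldsymbol\gamma}|}$ and everything outside of $\xi_{\boldsymbol\gamma\boldsymbol\delta}$ is independent of the inner splitting, the inner sum becomes precisely $\xi_{\widetilde{\boldsymbol\gamma}}$ as defined in~\eqref{eq:xi_weyl}. Renaming $\widetilde{\boldsymbol\gamma}\to\boldsymbol\gamma$ then reproduces the asserted formula for $\Op(a)$.

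For the parity statement, I observe that the differential order contributed by each term is $|\boldsymbol\alpha|+|\boldsymbol\beta|=n-|\boldsymbol\gamma|$, where $n$ is the length of the master multi-index (i.e.\ the degree of the monomial). Consequently it suffices to prove that $\xi_{\boldsymbol\gamma}=0$ whenever $|\boldsymbol\gamma|$ is odd: if $a$ is even then $n$ is even and $n-|\boldsymbol\gamma|$ remains even whenever the term survives, and analogously when $a$ is odd. To establish this vanishing, I would use $\Delta(x,y)=\Delta(y,x)$ together with the fact that the coincidence limit is invariant under relabelling its two base points, to conclude
\[
  [(-\i\nabla)_{\boldsymbol\alpha}(-\i\nabla')_{\boldsymbol\beta}\Delta^{-\frac12}] = [(-\i\nabla)_{\boldsymbol\beta}(-\i\nabla')_{\boldsymbol\alpha}\Delta^{-\frac12}].
\]
Pairing each splitting $(\boldsymbol\alpha,\boldsymbol\beta)$ in~\eqref{eq:xi_weyl} with its flip $(\boldsymbol\beta,\boldsymbol\alpha)$ therefore produces the joint sign $(-1)^{|\boldsymbol\alpha|}+(-1)^{|\boldsymbol\beta|}$, which vanishes exactly when $|\boldsymbol\alpha|$ and $|\boldsymbol\beta|$ have opposite parity, i.e.\ when $|\boldsymbol\gamma|=|\boldsymbol\alpha|+|\boldsymbol\beta|$ is odd. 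Any potential self-paired splitting would require $|\boldsymbol\alpha|=|\boldsymbol\beta|$, which already forces $|\boldsymbol\gamma|$ to be even, so no fixed-point issue obstructs the cancellation.

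I expect the main obstacle to be the combinatorial bookkeeping in the first step: one has to verify that the implicit multi-index summation in Proposition~\ref{prop:monomial_quant} factorizes cleanly into an outer sum over triples $(\boldsymbol\alpha,\boldsymbol\beta,\widetilde{\boldsymbol\gamma})$ and an inner sum over splittings of $\widetilde{\boldsymbol\gamma}$, with no stray multinomial factor. Viewing multi-indices as ordered sequences of indices, as suggested by the conventions of Sect.~\ref{Elements of differential geometry}, makes this transparent, since splittings then correspond to set partitions of a master index set. Once this regrouping is set up, the symmetry argument producing the parity statement is essentially automatic.
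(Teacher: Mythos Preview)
Your proposal is correct and follows essentially the same approach as the paper: the first formula is obtained by specializing Proposition~\ref{prop:monomial_quant} to $\tau=\tfrac12$ (the paper takes this step for granted, whereas you spell out the inner/outer regrouping explicitly), and the parity statement is deduced from the symmetry $\Delta(x,y)=\Delta(y,x)$ via the pairing $(\boldsymbol\alpha,\boldsymbol\beta)\leftrightarrow(\boldsymbol\beta,\boldsymbol\alpha)$ in the sum defining $\xi_{\boldsymbol\gamma}$, exactly as in the paper's proof.
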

\begin{proof}
  The second part of the proposition follows from
  \begin{equation*}
    {(-\i\nabla)}_{\boldsymbol\alpha} {(-\i\nabla')}_{\boldsymbol\beta} \Delta(x,y)^{-\frac12} \Big|_{x=y} = {(-\i\nabla)}_{\boldsymbol\beta} {(-\i\nabla')}_{\boldsymbol\alpha} \Delta(x,y)^{-\frac12} \Big|_{x=y}
  \end{equation*}
  due to the symmetry of the Van Vleck--Morette determinant $\Delta(x,y)=\Delta(y,x)$.
  Indeed, for odd $\abs{\boldsymbol\gamma}$, either $\abs{\boldsymbol\alpha}$ or $\abs{\boldsymbol\beta}$ is odd and thus the sum~\eqref{eq:xi_weyl} consists of terms of the form
  \begin{align*}
    0 &= (-1)^{\abs{\boldsymbol\alpha}} {(-\i\nabla)}_{\boldsymbol\alpha} {(-\i\nabla')}_{\boldsymbol\beta} \Delta(x,y)^{-\frac12} \\&\quad + (-1)^{\abs{\boldsymbol\beta}} {(-\i\nabla)}_{\boldsymbol\beta} {(-\i\nabla')}_{\boldsymbol\alpha} \Delta(x,y)^{-\frac12} \Big|_{x=y}.
    \qedhere
  \end{align*}
\end{proof}

Consider the example of a quadratic symbol $a(z,p) = a(z)^{\mu\nu} p_\mu p_\nu$.
Then
\begin{align*}
  \frac{1}{\hbar^2} \Op_\tau(a) &=
  -\abs{g}^{-\frac14} \bigl(\tau^2 \nabla_\mu \nabla_\nu a^{\mu\nu} \abs{g}^{\frac12} + 2\tau (1-\tau) \nabla_\mu a^{\mu\nu} \abs{g}^{\frac12} \nabla_\nu \\&\quad + (1-\tau)^2 a^{\mu\nu} \abs{g}^{\frac12} \nabla_\mu \nabla_\nu \bigr) \abs{g}^{-\frac14} + \frac16 a^{\mu\nu} R_{\mu\nu},
\end{align*}
where we used the fact that
\begin{equation*}
  \nabla_\mu \Delta(x,y)^{-\frac12} \Big|_{x=y} = 0,
  \quad
  \nabla_\mu \nabla_\nu \Delta(x,y)^{-\frac12} \Big|_{x=y} = -\frac16 R_{\mu\nu}.
\end{equation*}
Suppose that $a^{\mu\nu} = g^{\mu\nu}$ is the inverse metric.
Since the metric is covariantly constant, we obtain
\begin{equation*}
  \frac{1}{\hbar^2} \Op_\tau(a)
  = \abs{g}^{-\frac14} (-\i\nabla_\mu) g^{\mu\nu} \abs{g}^\frac12 (-\i\nabla_\nu) \abs{g}^{-\frac14} + \frac16 R
  = -g^{\mu\nu} \nabla_\mu \nabla_\nu + \frac16 R
\end{equation*}
independently of $\tau$.
In four dimensions this is the conformally invariant Laplace--Beltrami operator (or its pseudo-Riemannian generalization).

\subsection{Balanced geodesic star product}

In this section, we consider only the balanced geodesic Weyl quantization.
We define the \emph{balanced geodesic star product} by the identity
\begin{equation*}
  \Op(a \star b) = \Op(a) \Op(b).
\end{equation*}
Let us first describe an explicit formula for the star product, which generalizes~\eqref{eq:moyal} from the flat case.
For simplicity, we assume that we can take $\chi=1$.

\begin{figure}
  \centering
  \begin{tikzpicture}[scale=1.3]
    \fill[fill=red!20] (0,0)--(3,0)--(1.75,1.5)--cycle;
    \fill[fill=blue!20] (6,0)--(3,0)--(4.75,1.5)--cycle;
    \fill[fill=green!20] (3.5,3)--(3,0)--(1.75,1.5)--cycle;
    \draw[fill=yellow!20] (3.5,3)--(3,0)--(4.75,1.5)--cycle;

    \fill (3,0) circle [radius=2pt] node[below=2pt] {$z$};
    \draw (3,0) circle [radius=3pt];
    \fill (3.5,3) circle [radius=2pt] node[above] {$\tilde z$};
    \fill (0,0) circle [radius=2pt] node[below left=-1pt] {$x$};
    \fill (6,0) circle [radius=2pt] node[below right=-1pt] {$y$};
    \fill (1.75,1.5) circle [radius=2pt] node[above left=-1pt] {$z_1$};
    \fill (4.75,1.5) circle [radius=2pt] node[above right=-1pt] {$z_2$};

    \draw[->,shorten >=3pt,thick] (4.75,1.5) -- node[above right=-1pt] {$\parap{u_1}^{v_2}$} (3.5,3);
    \draw[->,shorten >=3pt,thick] (1.75,1.5) -- node[above left=-1pt] {$\parap{u_2}^{\mathrlap{v_1}}$} (3.5,3);
    \draw[->,shorten >=3pt,thick] (3,0) -- node[left=4pt] {$u_1$} (1.6,1.4);
    \draw[->,shorten >=3pt,thick] (3,0) -- node[right=4pt] {$u_2$} (4.8,1.3);
    \draw[->,shorten >=3pt,thick] (3,0) -- node[above=4pt] {$v_1$} (1.75,1.5);
    \draw[->,shorten >=3pt,thick] (3,0) -- node[above=4pt] {$v_2$} (4.75,1.5);
    \draw[->,shorten >=3pt,thick] (3,0) -- node[left] {$\tilde w$} (3.5,3);
    \draw[->,shorten >=3pt,thick] (3,0) -- node[below] {$w$} (6,0);

    \draw[dashed] (3,0) -- (0,0);
    \draw[dashed] (1.75,1.5) -- (0,0);
    \draw[dashed] (4.75,1.5) -- (6,0);
  \end{tikzpicture}
  \caption{\label{fig:star-product-triangle}%
  The geodesic triangle in the proof of Thm.~\ref{thm:star-product}.
  $z, z_1, z_2$ are the middle-points of the three geodesics between the points $x, y, \tilde{z}$ spanning the triangle.
  $w$ is the tangent to the geodesic from $z$ to $y$; $\tilde{w}$ is the tangent to the geodesic from $z$ to $\tilde{z}$; $v_1$ and $v_2$ are the tangents to the geodesics from $z$ to $z_1$ resp.\ $z_2$.
  $u_1$ and $u_2$ are the parallel transports of the tangents to the geodesics from $z_2$ resp.\ $z_1$ to $\tilde{z}$ along $v_1$ resp.\ $v_2$.
  In the flat case, $u_i = v_i$.
  }
\end{figure}
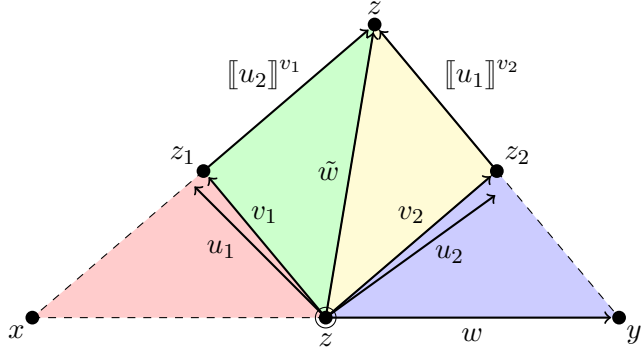

\begin{theorem}\label{thm:star-product}
  The balanced geodesic star product is given by the formulas
  \begin{subequations}\begin{align}\MoveEqLeft
    (a \star b)(z,p) \nonumber \\
    \begin{split}\label{eq:geom-star-prod-1}
      &= \int_{\T_zM \times \T_zM \times \T_z^*M \times \T_z^*M} a(z+v_1,\parap{p+p_1}^{v_1}) b(z+v_2,\parap{p+p_2}^{v_2}) \\&\quad\times \Lambda(z,u_1,u_2) \e^{\frac{2\i (w+u_1-u_2) \cdot p}{\hbar}} \e^{\frac{2\i (u_1 \cdot p_2 - u_2 \cdot p_1)}{\hbar}} \d u_1 \d u_2 \frac{\d p_1 \d p_2}{(\uppi\hbar)^{2d}},
    \end{split} \\
    \begin{split}\label{eq:geom-star-prod-2}
      &\sim \exp\bigl(\tfrac{\i}{2}\hbar(\partial_{u_1} \cdot \partial_{p_2} - \partial_{u_2} \cdot \partial_{p_1})\bigr) \Lambda(z,u_1,u_2) \e^{\frac{2\i (w+u_1-u_2) \cdot p}{\hbar}} \\&\quad\times a(z+v_1,\parap{p+p_1}^{v_1}) b(z+v_2,\parap{p+p_2}^{v_2}) \Big|_{\substack{u_1=u_2=0\\p_1=p_2=0}},
    \end{split}
  \end{align}\end{subequations}
  where
  \begin{equation}\label{eq:Lambda-factor}
    \Lambda(z,u_1,u_2) \defn 2^{-d} \abs*{\frac{\partial(w,\tilde{w})}{\partial(u_1,u_2)}} \frac{\Delta(z-w,z+\tilde{w})^\frac12 \Delta(z+w,z+\tilde{w})^\frac12}{\Delta(z-w,z+w)^\frac12 \Delta(z,z+\tilde{w})},
  \end{equation}
  and the vectors $u_1, u_2, v_1, v_2, w, \tilde{w} \in \T_zM$ satisfy the relations
  \begin{subequations}\label{eq:triangle-points-vectors}\begin{align}
    z - w &= (z + v_1) - \parap{u_2}^{v_1}, \\
    z + w &= (z + v_2) - \parap{u_1}^{v_2}, \\
    z + \tilde{w} &= (z + v_1) + \parap{u_2}^{v_1} = (z + v_2) + \parap{u_1}^{v_2}.
  \end{align}\end{subequations}
  The schematic arrangement of these vectors can be seen in Fig.~\ref{fig:star-product-triangle}.
\end{theorem}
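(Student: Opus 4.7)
The strategy is to compute the integral kernel of $\Op(a)\Op(b)$ via \eqref{twocom} and then extract $(a\star b)(z,p)$ through the inversion formula \eqref{eq:kernel-to-symbol} at $\tau=\tfrac12$. Setting $\hbar = 1$ and $\chi = 1$, one writes
\begin{equation*}
  \Op(a)\Op(b)(x,y) = \int_M \Op(a)(x,\tilde z)\,\Op(b)(\tilde z,y)\,\d\tilde z
\end{equation*}
and substitutes the Fourier representation \eqref{eq:symbol-to-kernel} for each factor, giving an integral over the cotangent spaces at the two midpoints $z_1$ (of $x,\tilde z$) and $z_2$ (of $\tilde z,y$). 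Parametrising $x = z-w$, $y = z+w$, and $\tilde z = \exp_z(\tilde w)$ with $w,\tilde w \in \T_zM$, and applying the inversion at $(x,y)$ produces an integral over $w$, $\tilde w$, $p_1' \in \T^*_{z_1}M$, and $p_2' \in \T^*_{z_2}M$.

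The next step is to transport every variable to the common point $z$. The cotangent variables are substituted by $p_i' = \parap{p+p_i}^{v_i}$ for $i=1,2$, and using \eqref{eq:triangle-points-vectors} the tangent at $z_1$ of the geodesic from $z_1$ to $\tilde z$ equals $\parap{u_2}^{v_1}$ and analogously at $z_2$, so that $(\tilde z-x)_{1/2} = 2\parap{u_2}^{v_1}$ and $(y-\tilde z)_{1/2} = -2\parap{u_1}^{v_2}$. Since parallel transport preserves the pairings, the three Fourier phases collapse into precisely $\exp\bigl\{2\i[(w+u_1-u_2)\cdot p + u_1\cdot p_2 - u_2\cdot p_1]\bigr\}$, as in \eqref{eq:geom-star-prod-1}. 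One then changes $(w,\tilde w) \mapsto (u_1,u_2)$ by inverting \eqref{eq:triangle-points-vectors}, a local diffeomorphism near the diagonal, yielding the Jacobian $|\partial(w,\tilde w)/\partial(u_1,u_2)|$. The asymptotic form \eqref{eq:geom-star-prod-2} then follows from the Gaussian identity \eqref{eq:gaussian} applied to the non-degenerate skew-symmetric form $u_1\cdot p_2 - u_2\cdot p_1$, converting the integrals in $u_1,u_2,p_1,p_2$ into the formal exponential of $\tfrac{\i}{2}(\partial_{u_1}\cdot\partial_{p_2} - \partial_{u_2}\cdot\partial_{p_1})$ evaluated at the origin of these four variables.

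The hardest step is the assembly of the geometric prefactors: the two $\Upsilon_{1/2}$'s arising from $\Op(a)(x,\tilde z)$ and $\Op(b)(\tilde z,y)$, the inverse $\Upsilon_{1/2}(x,y)^{-1}$ from the inversion, the Jacobian of $\tilde z = \exp_z(\tilde w)$ (which by \eqref{eq:vanvleck-morette-ident} contributes a factor $\Delta(z,z+\tilde w)^{-1}$ together with matching powers of $|g|^{1/2}$), the coordinate Jacobians of the parallel transports $p_i'\mapsto p_i$, and the factor $2^d$ from $\d u = 2^d\d w$. All the $|g|^{1/4}$ half-density weights cancel, and the remaining Van Vleck--Morette contributions rearrange into precisely the quotient in \eqref{eq:Lambda-factor}, producing $\Lambda(z,u_1,u_2)$ and hence \eqref{eq:geom-star-prod-1}. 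This cancellation, which rests on the identities \eqref{byby}, \eqref{coord} and \eqref{eq:vanvleck-morette-ident}, is the main technical obstacle, and is precisely where the balancing built into the definition of $\Op$ pays off.
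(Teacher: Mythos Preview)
Your proposal is correct and follows essentially the same route as the paper's proof: compose the kernels, pass to the midpoint variables $(z,w,\tilde w)$ via the inversion formula \eqref{eq:kernel-to-symbol} and the Jacobian $\Upsilon_0(z,z+\tilde w)^{-2}$, parallel-transport the fibre momenta to $z$, change $(w,\tilde w)\mapsto(u_1,u_2)$, and collect the Van~Vleck--Morette and $|g|$ factors into $\Lambda$. The only point worth sharpening is the bookkeeping of the $2^{-d}$ in \eqref{eq:Lambda-factor}: it arises not from $\d u=2^d\,\d w$ alone but from combining that $2^d$ with the two factors $(2\pi)^{-d}$ coming from the momentum integrals in $\Op(a)$ and $\Op(b)$, so that $2^d(2\pi)^{-2d}=2^{-d}\pi^{-2d}$ splits into the $2^{-d}$ inside $\Lambda$ and the $\pi^{-2d}$ in the measure of \eqref{eq:geom-star-prod-1}.
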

\begin{proof}
  Let $C(x,y)$ be the integral kernel of $\Op(a\star b)$.
  Clearly,
  \begin{equation}\label{pqpq}
    C(x,y)=\int_M\Op(a)(x,\tilde z)\Op(b)(\tilde z,y)\rd \tilde z.
  \end{equation}
  Let $z$ be the middle point between~$x$ and~$y$.
  Let $w,\tilde w \in \T_zM$ be defined by
  \begin{equation*}
    x = z-w,
    \quad
    y = z+w,
    \quad
    \tilde{z} = z+\tilde{w}.
  \end{equation*}
  Then we can rewrite~\eqref{pqpq} as
  \begin{align*}\MoveEqLeft
    C(z-w,z+w) \\
    &\mathrel{\mathrlap{\phantom\defn}=} \int_M \Op(a)(z-w,\tilde{z}) \Op(b)(\tilde{z},z+w) \d\tilde{z} \\
    &\mathrel{\mathrlap{\phantom\defn}=} \int_{\T_zM} \Upsilon_0(z,z+\tilde{w})^{-2} \Op(a)(z-w,z+\tilde{w}) \Op(b)(z+\tilde{w},z+w) \d\tilde{w}.
  \end{align*}
  The star product of the symbols can then be found by applying~\eqref{eq:kernel-to-symbol}:
  \begin{align}
    (a \star b)(z,p)
    &= 2^d \int_{\T_zM} \Upsilon_\frac12(z-w,z+w)^{-1} C(z-w,z+w) \e^{2\i w \cdot p} \d w \nonumber \\
    \begin{split}\label{eq:star-prod-intermediate}
      &= 2^d \int_{\T_zM \times \T_zM} \Upsilon_\frac12(z-w,z+w)^{-1} \Upsilon_0(z,z+\tilde{w})^{-2} \\&\quad\times \Op(a)(z-w,z+\tilde{w}) \Op(b)(z+\tilde{w},z+w) \e^{2\i w \cdot p} \d w \d\tilde{w}.
    \end{split}
  \end{align}

  Consider next the vectors $u_1, u_2, v_1, v_2 \in \T_zM$ defined in~\eqref{eq:triangle-points-vectors}.
  $z+v_1$ is the middle point between $z-w$ and $z+\tilde w$, and $z+v_2$ is the middle point between $z+w$ and $z+\tilde w$.
  Therefore,
  \begin{align*}
    \begin{split}\MoveEqLeft
      \Op(a)(z-w,z+\tilde{w}) \\
      &= \Upsilon_\frac12(z-w,z+\tilde{w}) \int_{\T_{z_1}^*M} a(z+v_1,\parap{q_1}^{v_1}) \e^{-2\i u_2 \cdot q_1} \frac{\d\parap{q_1}^{v_1}}{(2\uppi)^d} \\
      &= \Upsilon_\frac12(z-w,z+\tilde{w}) \frac{\abs{g(z+v_1)}^\frac12}{\abs{g(z)}^\frac12} \int_{\T_z^*M} a(z+v_1,\parap{q_1}^{v_1}) \e^{-2\i u_2 \cdot q_1} \frac{\d q_1}{(2\uppi)^d},
    \end{split}
    \\
    \begin{split}\MoveEqLeft
      \Op(b)(z+\tilde{w},z+w) \\
      &= \Upsilon_\frac12(z+w,z+\tilde{w}) \frac{\abs{g(z+v_2)}^\frac12}{\abs{g(z)}^\frac12} \int_{\T_z^*M} b(z+v_2,\parap{q_2}^{v_2}) \e^{2\i u_1 \cdot q_2} \frac{\d q_2}{(2\uppi)^d}.
    \end{split}
  \end{align*}
  Changing the integration variables in~\eqref{eq:star-prod-intermediate} from $w,\tilde{w}$ to $u_1,u_2$, and inserting the formulas for $\Op(a)$, $\Op(b)$, we obtain
  \begin{align*}
    \begin{split}\eqref{eq:star-prod-intermediate}
      &= 2^d \int_{\T_zM \times \T_zM} \abs*{\frac{\partial(w,\tilde{w})}{\partial(u_1,u_2)}} \Upsilon_\frac12(z-w,z+w)^{-1} \Upsilon_0(z,z+\tilde{w})^{-2} \\&\quad\times \Op(a)(z-w,z+\tilde w) \Op(b)(z+\tilde w,z+w) \e^{2\i w \cdot p} \d u_1 \d u_2
    \end{split}
    \\
    \begin{split}
      &= \int_{\T_zM \times \T_zM \times \T_z^*M \times \T_z^*M} a(z+v_1,\parap{q_1}^{v_1}) b(z+v_2,\parap{q_2}^{v_2}) \\&\quad\times \Lambda(z,u_1,u_2) \e^{2\i (w \cdot p + u_1 \cdot q_2 - u_2 \cdot q_1)} \d u_1 \d u_2 \frac{\d q_1 \d q_2}{\uppi^{2d}}
    \end{split}
    \\
    \begin{split}
      &= \int_{\T_zM \times \T_zM \times \T_z^*M \times \T_z^*M} a(z+v_1,\parap{p+p_1}^{v_1}) b(z+v_2,\parap{p+p_2}^{v_2}) \\&\quad\times \Lambda(z,u_1,u_2) \e^{2\i (w+u_1-u_2) \cdot p} \e^{2\i (u_1 \cdot p_2 - u_2 \cdot p_1)} \d u_1 \d u_2 \frac{\d p_1 \d p_2}{\uppi^{2d}},
    \end{split}
  \end{align*}
  with $\Lambda$ as defined in~\eqref{eq:Lambda-factor}.

  This yields~\eqref{eq:geom-star-prod-1}.
  Using~\eqref{eq:gaussian}, we obtain~\eqref{eq:geom-star-prod-2}.
\end{proof}

\begin{remark}
  A similar formula can be given also for $\tau \neq \frac12$.
  For this purpose, make in~\eqref{eq:Lambda-factor} the substitutions
  \begin{align}
    z - w \mapsto z - 2\tau w, \quad
    z + w \mapsto z + 2(1-\tau) w,
  \end{align}
  and replace~\eqref{eq:triangle-points-vectors} by
  \begin{align*}
    z - 2\tau w &= (z + v_1) - 2\tau \parap{u_2}^{v_1}, \\
    z + 2(1-\tau) w &= (z + v_2) - 2(1-\tau) \parap{u_1}^{v_2}, \\
    z + \tilde{w} &= (z + v_1) + 2(1-\tau) \parap{u_2}^{v_1} = (z + v_2) + 2\tau \parap{u_1}^{v_2}.
  \end{align*}
  Unlike in the flat case, all derivatives in~\eqref{eq:geom-star-prod-2} remain also for $\tau=0$ and $\tau=1$ because of the non-trivial geometric factor $\Lambda(z,u_1,u_2) \exp(2\i(w+u_1+u_2)\cdot p)$ and the dependence of $v_1$ and $v_2$ on both $u_1$ and $u_2$.
  Note, however, that $v_1=0$ for $\tau=0$ and $v_2=0$ for $\tau=1$.
\end{remark}

\subsection{Asymptotic expansion of the geodesic star product}

The geodesic star product can be expanded as a sum
\begin{equation}\label{eq:starprod-expansion}
  a \starprod b = \sum_n \hbar^n(a \starprod b)_n + O(\hbar^\infty)
\end{equation}
according to the order of Planck's constant.
Note that each $n$-th term contains exactly $n$ position derivatives, with Riemann tensors counting as two derivatives.
It also contains exactly $n$ momentum derivatives, with multiplication by~$p$ counting as $-1$ derivatives.
The asymptotic expansion does not depend on the cutoff $\chi$.

Due to the length of the expressions involved, we shall adopt the following notation for the derivatives of the symbols:
lower indices always denote horizontal derivatives
\begin{equation*}
  a_{\alpha_1 \dotsm \alpha_n} = \nnabla_{\alpha_n} \dotsm \nnabla_{\alpha_1} a,
\end{equation*}
and upper indices always denote vertical derivatives
\begin{equation*}
  a^{\alpha_1 \dotsm \alpha_n} = \ppartial^{\alpha_n} \dotsm \ppartial^{\alpha_1} a.
\end{equation*}
Since we only consider scalar symbols, no ambiguity arises.
Recall also that horizontal and vertical derivatives commute so that their relative position is irrelevant.

In this notation, the five lowest order summands in~\eqref{eq:starprod-expansion} are
\begin{align*}
  (a \starprod b)_0 &= a b,
  \\
  (a \starprod b)_1 &=
  \tfrac{\i}{2} \bigl(a_{\alpha} b^{\alpha} - a^{\alpha} b_{\alpha}\bigr),
  \\
  \begin{split}
    (a \starprod b)_2 &=
    -\tfrac18 \bigl(a_{\alpha_1 \alpha_2} b^{\alpha_1 \alpha_2} - 2 a^{\alpha_2}_{\alpha_1} b^{\alpha_1}_{\alpha_2} + a^{\alpha_1 \alpha_2} b_{\alpha_1 \alpha_2}\bigr)
    + \tfrac{1}{12} R_{\alpha_1 \alpha_2} a^{\alpha_2} b^{\alpha_1}
    \\&\quad
    - \tfrac{1}{24} R^\beta{}_{\alpha_1 \alpha_2 \alpha_3} p_\beta \bigl(a^{\alpha_2} b^{\alpha_1 \alpha_3} + a^{\alpha_1 \alpha_3} b^{\alpha_2}\bigr),
  \end{split}
  \\
  \begin{split}
    (a \starprod b)_3 &=
    -\tfrac{\i}{48} \bigl( a_{\alpha_1 \alpha_2 \alpha_3} b^{\alpha_1 \alpha_2 \alpha_3} - 3 a^{\alpha_3}_{\alpha_1 \alpha_2} b^{\alpha_1 \alpha_2}_{\alpha_3} + 3 a^{\alpha_2 \alpha_3}_{\alpha_1} b^{\alpha_1}_{\alpha_2 \alpha_3} - a^{\alpha_1 \alpha_2 \alpha_3} b_{\alpha_1 \alpha_2 \alpha_3} \bigr)
    \\&\quad
    + \tfrac{\i}{24} R_{\alpha_1 \alpha_2} \bigl( a^{\alpha_2}_{\alpha_3} b^{\alpha_1 \alpha_3} - a^{\alpha_2 \alpha_3} b^{\alpha_1}_{\alpha_3} \bigr)
    - \tfrac{\i}{16} R^\beta{}_{\alpha_1 \alpha_2 \alpha_3} \bigl(a^{\alpha_1 \alpha_3}_\beta b^{\alpha_2} - a^{\alpha_2} b^{\alpha_1 \alpha_3}_\beta \bigr)
    \\&\quad
    - \tfrac{\i}{48} R^\beta{}_{\alpha_1 \alpha_2 \alpha_3} p_\beta \bigl( -a^{\alpha_1 \alpha_3}_{\alpha_4} b^{\alpha_2 \alpha_4} - a^{\alpha_2}_{\alpha_4} b^{\alpha_1 \alpha_3 \alpha_4} + a^{\alpha_1 \alpha_3 \alpha_4} b^{\alpha_2}_{\alpha_4} + a^{\alpha_2 \alpha_4} b^{\alpha_1 \alpha_3}_{\alpha_4} \bigr)
    \\&\quad
    + \tfrac{\i}{48} R_{\alpha_1 \alpha_2 ; \alpha_3} \bigl( a^{\alpha_3} b^{\alpha_1 \alpha_2} - a^{\alpha_1 \alpha_2} b^{\alpha_3} \bigr)
    \\&\quad
    + \tfrac{\i}{48} R^\beta{}_{\alpha_1 \alpha_2 \alpha_3 ; \alpha_4} p_\beta \bigl( a^{\alpha_1 \alpha_3 \alpha_4} b^{\alpha_2} - a^{\alpha_2} b^{\alpha_1 \alpha_3 \alpha_4} \bigr),
  \end{split}
  \\
  \begin{split}
    (a \starprod b)_4 &=
    \tfrac{1}{384} \bigl( a_{\alpha_1 \alpha_2 \alpha_3 \alpha_4} b^{\alpha_1 \alpha_2 \alpha_3 \alpha_4} - 4 a^{\alpha_4}_{\alpha_1 \alpha_2 \alpha_3} b^{\alpha_1 \alpha_2 \alpha_3}_{\alpha_4} + 6 a_{\alpha_1 \alpha_2}^{\alpha_3 \alpha_4} b_{\alpha_3 \alpha_4}^{\alpha_1 \alpha_2} \\&\qquad - 4 a_{\alpha_1}^{\alpha_2 \alpha_3 \alpha_4} b_{\alpha_2 \alpha_3 \alpha_4}^{\alpha_1} + a^{\alpha_1 \alpha_2 \alpha_3 \alpha_4} b_{\alpha_1 \alpha_2 \alpha_3 \alpha_4} \bigr)
    \\&\quad
    - \tfrac{1}{96} R_{\alpha_1 \alpha_2} \bigl( a^{\alpha_2}_{\alpha_3 \alpha_4} b^{\alpha_1 \alpha_3 \alpha_4} - 2 a^{\alpha_2 \alpha_4}_{\alpha_3} b^{\alpha_1 \alpha_3}_{\alpha_4} + a^{\alpha_2 \alpha_3 \alpha_4} b^{\alpha_1}_{\alpha_3 \alpha_4} \bigr)
    \\&\quad
    + \tfrac{1}{32} R^\beta{}_{\alpha_1 \alpha_2 \alpha_3} \bigl( a^{\alpha_1 \alpha_3}_{\beta\alpha_4} b^{\alpha_2 \alpha_4} - a^{\alpha_1 \alpha_3 \alpha_4}_\beta b^{\alpha_2}_{\alpha_4} - a^{\alpha_2}_{\alpha_4} b^{\alpha_1 \alpha_3 \alpha_4}_\beta + a^{\alpha_2 \alpha_4} b^{\alpha_1 \alpha_3}_{\beta\alpha_4} \bigr)
    \\&\quad
    + \tfrac{1}{192} R^\beta{}_{\alpha_1 \alpha_2 \alpha_3} p_\beta \bigl( a^{\alpha_1 \alpha_3}_{\alpha_4 \alpha_5} b^{\alpha_2 \alpha_4 \alpha_5} + a^{\alpha_2}_{\alpha_4 \alpha_5} b^{\alpha_1 \alpha_3 \alpha_4 \alpha_5} - 2 a^{\alpha_1 \alpha_3 \alpha_5}_{\alpha_4} b^{\alpha_2 \alpha_4}_{\alpha_5} \\&\qquad - 2 a^{\alpha_2 \alpha_5}_{\alpha_4} b^{\alpha_1 \alpha_3 \alpha_4}_{\alpha_5} + a^{\alpha_1 \alpha_3 \alpha_4 \alpha_5} b^{\alpha_2}_{\alpha_4 \alpha_5} + a^{\alpha_2 \alpha_4 \alpha_5} b^{\alpha_1 \alpha_3}_{\alpha_4 \alpha_5} \bigr)
    \\&\quad
    - \tfrac{1}{96} R_{\alpha_1 \alpha_2 ; \alpha_3} \bigl( a^{\alpha_3}_{\alpha_4} b^{\alpha_1 \alpha_2 \alpha_4} - a^{\alpha_3 \alpha_4} b^{\alpha_1 \alpha_2}_{\alpha_4} - a^{\alpha_1 \alpha_2}_{\alpha_4} b^{\alpha_3 \alpha_4} + a^{\alpha_1 \alpha_2 \alpha_4} b^{\alpha_3}_{\alpha_4} \bigr)
    \\&\quad
    - \tfrac{1}{384} R^\beta{}_{\alpha_1 \alpha_2 \alpha_3 ; \alpha_4} \bigl( 2 a^{\alpha_1 \alpha_3 \alpha_4}_\beta b^{\alpha_2} + a^{\alpha_2 \alpha_4}_\beta b^{\alpha_1 \alpha_3} - 5 a^{\alpha_1 \alpha_3}_\beta b^{\alpha_2 \alpha_4} \\&\qquad - 2 a^{\alpha_2}_\beta b^{\alpha_1 \alpha_3 \alpha_4} + 2 a^{\alpha_2} b^{\alpha_1 \alpha_3 \alpha_4}_\beta - 5 a^{\alpha_2 \alpha_4} b^{\alpha_1 \alpha_3}_\beta + a^{\alpha_1 \alpha_3} b^{\alpha_2 \alpha_4}_\beta \\&\qquad - 2 a^{\alpha_1 \alpha_3 \alpha_4} b^{\alpha_2}_\beta \bigr)
    \\&\quad
    - \tfrac{1}{96} R^\beta{}_{\alpha_1 \alpha_2 \alpha_3 ; \alpha_4} p_\beta \bigl( a^{\alpha_1 \alpha_3 \alpha_4}_{\alpha_5} b^{\alpha_2 \alpha_5} - a^{\alpha_2}_{\alpha_5} b^{\alpha_1 \alpha_3 \alpha_4 \alpha_5} - a^{\alpha_1 \alpha_3 \alpha_4 \alpha_5} b^{\alpha_2}_{\alpha_5} \\&\qquad + a^{\alpha_2 \alpha_5} b^{\alpha_1 \alpha_3 \alpha_4}_{\alpha_5} \bigr)
    \\&\quad
    + \tfrac{1}{288} R_{\alpha_1 \alpha_2} R_{\alpha_3 \alpha_4} a^{\alpha_2 \alpha_4} b^{\alpha_1 \alpha_3}
    \\&\quad
    - \tfrac{1}{288} R_{\alpha_1 \alpha_2} R^\beta{}_{\alpha_3 \alpha_4 \alpha_5} p_\beta \bigl(a^{\alpha_1 \alpha_3 \alpha_5} b^{\alpha_2 \alpha_4} + a^{\alpha_2 \alpha_4} b^{\alpha_1 \alpha_3 \alpha_5} \bigr)
    \\&\quad
    - \tfrac{1}{288} R_{\beta \alpha_1} R^\beta{}_{\alpha_2 \alpha_3 \alpha_4} \bigl(a^{\alpha_2 \alpha_4} b^{\alpha_1 \alpha_3}+ a^{\alpha_1 \alpha_3} b^{\alpha_2 \alpha_4} + a^{\alpha_3} b^{\alpha_1 \alpha_2 \alpha_4} \\&\qquad + a^{\alpha_1 \alpha_2 \alpha_4} b^{\alpha_3} \bigr)
    \\&\quad
    - \tfrac{1}{2880} R^\beta{}_{\alpha_1}{}^\gamma{}_{\alpha_2} R_{\beta \alpha_3 \gamma \alpha_4} \bigl( 28 a^{\alpha_2 \alpha_3 \alpha_4} b^{\alpha_1} + 14 a^{\alpha_1 \alpha_2} b^{\alpha_3 \alpha_4} - 31 a^{\alpha_1 \alpha_3} b^{\alpha_2 \alpha_4} \\&\qquad + 14 a^{\alpha_1 \alpha_4} b^{\alpha_2 \alpha_3} + 28 a^{\alpha_1} b^{\alpha_2 \alpha_3 \alpha_4} \bigr)
    \\&\quad
    + \tfrac{1}{5760} R^\beta{}_{\alpha_1 \gamma\alpha_2} R^\gamma{}_{\alpha_3 \alpha_4 \alpha_5} p_\beta \bigl(7 a^{\alpha_1 \alpha_2 \alpha_3 \alpha_5} b^{\alpha_4} - 11 a^{\alpha_1 \alpha_3 \alpha_5} b^{\alpha_2 \alpha_4} \\&\qquad - 44 a^{\alpha_2 \alpha_3 \alpha_5} b^{\alpha_1 \alpha_4} + 7 a^{\alpha_1 \alpha_2 \alpha_4} b^{\alpha_3 \alpha_5} + 7 a^{\alpha_3 \alpha_5} b^{\alpha_1 \alpha_2 \alpha_4} \\&\qquad - 44 a^{\alpha_1 \alpha_4} b^{\alpha_2 \alpha_3 \alpha_5} - 11 a^{\alpha_2 \alpha_4} b^{\alpha_1 \alpha_3 \alpha_5} + 7 a^{\alpha_4} b^{\alpha_1 \alpha_2 \alpha_3 \alpha_5} \bigr)
    \\&\quad
    + \tfrac{1}{1152} R^\beta{}_{\alpha_1 \alpha_2 \alpha_3} R^\gamma{}_{\alpha_4 \alpha_5 \alpha_6} p_\beta p_\gamma \bigl(a^{\alpha_1 \alpha_3 \alpha_4 \alpha_6} b^{\alpha_2 \alpha_5} + 2a^{\alpha_1 \alpha_3 \alpha_5} b^{\alpha_2 \alpha_4 \alpha_6} \\&\qquad + a^{\alpha_2 \alpha_5} b^{\alpha_1 \alpha_3 \alpha_4 \alpha_6}\bigr)
    \\&\quad
    - \tfrac{1}{1920} R_{\alpha_1 \alpha_2 ; \alpha_3 \alpha_4} \bigl( a^{\alpha_1 \alpha_2 \alpha_3} b^{\alpha_4} + a^{\alpha_1 \alpha_2 \alpha_4} b^{\alpha_3} + 2 a^{\alpha_1 \alpha_3 \alpha_4} b^{\alpha_2} \\&\qquad + 2 a^{\alpha_1 \alpha_3} b^{\alpha_2 \alpha_4} - 9 a^{\alpha_1 \alpha_2} b^{\alpha_3 \alpha_4} - 9 a^{\alpha_3 \alpha_4} b^{\alpha_1 \alpha_2} + 2 a^{\alpha_2 \alpha_4} b^{\alpha_1 \alpha_3} \\&\qquad + 2 a^{\alpha_2} b^{\alpha_1 \alpha_3 \alpha_4} + a^{\alpha_3} b^{\alpha_1 \alpha_2 \alpha_4} + a^{\alpha_4} b^{\alpha_1 \alpha_2 \alpha_3} \bigr)
    \\&\quad
    + \tfrac{1}{1920} R^\beta{}_{\alpha_1 \alpha_2 \alpha_3 ;\alpha_4 \alpha_5}p_\beta \bigl(3 a^{\alpha_1 \alpha_3 \alpha_4 \alpha_5} b^{\alpha_2} + 3a^{\alpha_2} b^{\alpha_1 \alpha_3 \alpha_4 \alpha_5} \\&\qquad + a^{\alpha_2 \alpha_4 \alpha_5} b^{\alpha_1 \alpha_3} + a^{\alpha_1 \alpha_3} b^{\alpha_2 \alpha_4 \alpha_5} - 7 a^{\alpha_1 \alpha_3 \alpha_4} b^{\alpha_2 \alpha_5} - 7 a^{\alpha_2 \alpha_5} b^{\alpha_1 \alpha_3 \alpha_4} \bigr).
  \end{split}
\end{align*}

\subsection{Analysis of the expansion of the star product}

In this subsection we analyze terms that appear in the expansion of the star product.
The term that appears at $\hbar^r$, that is $(a\star b)_r(z,p)$ is a linear combination with numerical coefficients of terms of the form
\begin{equation*}
  \Bigl( \prod_{j=1}^s R(z)^{\pi_j}{}_{\boldsymbol{\rho}_j;\boldsymbol{\nu}_j} \Bigr)
  p_{\boldsymbol{\eta}}
  a(z,p)_{\boldsymbol{\alpha}_1}^{\boldsymbol{\beta}_1}
  b(z,p)_{\boldsymbol{\alpha}_2}^{\boldsymbol{\beta}_2},
\end{equation*}
where we use the same notation for the derivatives of the symbols $a$ and $b$ as in the previous subsection.

We have $|\bs{\rho}_i|=3$, $\pi_i$ are single indices, $i=1,\dots s$.
The following identities are always satisfied:
\begin{align*}
  r &= |\boldsymbol\beta_1|+|\boldsymbol\beta_2|-|\boldsymbol\eta| \\
  r &= 2s+\sum_{i=1}^s|\boldsymbol\nu_i|+|\boldsymbol\alpha_1|+|\boldsymbol\alpha_2|.
\end{align*}
Observe that the first identity can be read off immediately from~\eqref{eq:geom-star-prod-2} and then the second follows by the fact that all indices must be contracted.

These terms can be divided into two kinds:
\begin{enumerate}
  \item
    Terms that have the same form as the terms appearing in the star product on a flat space.
    They satisfy
    \begin{equation*}
      |\boldsymbol\eta|=s=\sum_{j=1}^s|\boldsymbol\nu_j|=0,\quad |\boldsymbol\beta_1|=|\boldsymbol\alpha_2|,\quad
      |\boldsymbol\beta_2|=|\boldsymbol\alpha_1|.
    \end{equation*}
  \item
    Terms that contain the curvature tensor and its covariant derivatives.
    They satisfy
    \begin{align*}
      s &\geq \max(1,|\boldsymbol{\eta}|), \\
      |\boldsymbol{\beta_1}| &\geq \max(1,|\boldsymbol{\eta}|+|\boldsymbol{\alpha_2}|), \\
      |\boldsymbol{\beta_2}| &\geq \max(1,|\boldsymbol{\eta}|+|\boldsymbol{\alpha_1}|).
    \end{align*}
\end{enumerate}

For the star product of several symbols we also can write
\begin{equation*}
  (a_1\star \dotsm \star a_n)(z,p) = \sum_{r=0}^\infty \hbar^r (a_1\star \dotsm \star a_n)_r(z,p) +O(\hbar^\infty),
\end{equation*}
where $(a_1\star \dotsm \star a_n)_r(z,p)$ is a linear combination with numerical coefficients of terms of the form
\begin{equation*}
  \Bigl( \prod_{j=1}^s R(z)_{\boldsymbol{\rho}_j;\boldsymbol{\nu}_j}^{\pi_j} \Bigr)
  p_{\boldsymbol{\eta}}
  a_1(z,p)_{\boldsymbol{\alpha}_1}^{\boldsymbol{\beta}_1} \dotsm a_n(z,p)_{\boldsymbol{\alpha}_n}^{\boldsymbol{\beta}_n}.
\end{equation*}
The identities and bounds for $n=2$ generalize to this case:
\begin{align*}
  r &= 2s+\sum_{i=1}^s|\boldsymbol{\nu}_i|+\sum_{j=1}^n|\boldsymbol{\alpha}_j| = \sum_{j=1}^n|\boldsymbol{\beta}_j|-|\bs{\eta}|, \\
  s &\geq |\boldsymbol{\eta}|.
\end{align*}

\section{Calculation of the star product}
\label{Methods}

In this section we explain the methods that we used to obtain the expansion of the star product.
We also give various intermediate results.

\subsection{Coincidence limits of Synge's world function}
\label{sec:synge-coinc}

In our calculations, we will need a certain family of coincidence limits of covariant derivatives of the Synge's function.
More precisely, we will need
\begin{equation*}
  [\sigma^{\mu}{}_{(\boldsymbol\alpha')(\boldsymbol\beta')}] = [\sigma^{\mu}{}_{(\alpha_1'\dotsm\alpha_m')(\beta_1'\dotsm\beta_n')}],
\end{equation*}
where parentheses around indices indicate symmetrization.

For notational simplicity, below we shall use the same index repeatedly to indicate symmetrization, e.g.
\begin{equation*}
  R^\mu{}_{\alpha\beta\alpha} = \frac12 (R^\mu{}_{\alpha_1\beta\alpha_2} + R^\mu{}_{\alpha_2\beta\alpha_1}).
\end{equation*}
This notation is very convenient for the compact representation of tensorial expressions with multiple overlapping symmetries.

With the help of the tensor algebra package \emph{xAct} for Mathematica~\cite{martin-garcia:aa,nutma:aa}, we obtained the following coincidence limits:
\begin{align*}
  [\sigma^{\mu}{}_{\alpha'\beta'}] &= 0 \\
  [\sigma^{\mu}{}_{\alpha'\alpha'\beta'}] &= \tfrac{2}{3} R^\mu{}_{\alpha \beta \alpha} \\
  [\sigma^{\mu}{}_{\alpha'\beta'\beta'}] &= -\tfrac{1}{3} R^\mu{}_{\beta \alpha \beta} \\
  [\sigma^{\mu}{}_{\alpha'\alpha'\alpha'\beta'}] &= \tfrac{1}{2} R^\mu{}_{\alpha \beta \alpha}{}_{;\alpha} \\
  [\sigma^{\mu}{}_{\alpha'\alpha'\beta'\beta'}] &= \tfrac{5}{6} R^{\mu }{}_{\alpha \beta \alpha}{}_{;\beta} - \tfrac{1}{6} R^\mu{}_{\beta \alpha \beta}{}_{;\alpha} \\
  [\sigma^{\mu}{}_{\alpha'\beta'\beta'\beta'}] &= -\tfrac{1}{2} R^\mu{}_{\beta \alpha \beta}{}_{;\beta} \\
  [\sigma^{\mu}{}_{\alpha'\alpha'\alpha'\alpha'\beta'}] &= \tfrac{2}{5} R^\mu{}_{\alpha \beta \alpha}{}_{;\alpha \alpha} + \tfrac{8}{15} R^\mu{}_{\alpha \gamma \alpha} R^\gamma{}_{\alpha \beta \alpha} \\
  \begin{split}
    [\sigma^{\mu}{}_{\alpha'\alpha'\alpha'\beta'\beta'}] &=
      \tfrac{7}{10} R^\mu{}_{\alpha \beta \alpha}{}_{;\beta \alpha}
    - \tfrac{1}{10} R^\mu{}_{\beta \alpha \beta}{}_{;\alpha \alpha} \\&\quad
    + \tfrac{7}{15} R^\mu{}_{\alpha \gamma \alpha} R^\gamma{}_{\beta \alpha \beta}
    + \tfrac{31}{15} R^\mu{}_{\alpha \gamma \beta} R^\gamma{}_{\alpha \beta \alpha}
    - \tfrac{8}{15} R^\mu{}_{\beta \gamma \alpha} R^\gamma{}_{\alpha \beta \alpha}
  \end{split} \\
  \begin{split}
    [\sigma^{\mu}{}_{\alpha'\alpha'\beta'\beta'\beta'}] &=
    - \tfrac{3}{10} R^\mu{}_{\beta \alpha \beta}{}_{;\alpha \beta}
    + \tfrac{9}{10} R^\mu{}_{\alpha \beta \alpha}{}_{;\beta \beta} \\&\quad
    + \tfrac{7}{15} R^\mu{}_{\beta \gamma \beta} R^\gamma{}_{\alpha \beta \alpha}
    + \tfrac{1}{15} R^\mu{}_{\beta \gamma \alpha} R^\gamma{}_{\beta \alpha \beta}
    - \tfrac{8}{15} R^\mu{}_{\alpha \gamma \beta} R^\gamma{}_{\beta \alpha \beta}
  \end{split} \\
  [\sigma^{\mu}{}_{\alpha'\beta'\beta'\beta'\beta'}] &= -\tfrac{3}{5} R^\mu{}_{\beta \alpha \beta}{}_{;\beta \beta} - \tfrac{7}{15} R^\mu{}_{\beta \gamma \beta} R^\gamma{}_{\beta \alpha \beta}.
\end{align*}
To simplify the resulting expressions, several identities such as the Bianchi identities were applied automatically via the \emph{xAct} Mathematica package.

Let us describe the methods we used to compute the coincidence limits above, and more generally, how one can compute the quantities
\begin{equation}\label{obta1}
  [\sigma_{\alpha_1\dotsm\alpha_n\beta_1'\dotsm\beta_m'}].
\end{equation}

As a first step we note that the following coincidence limits are immediate:
\begin{equation}\label{eq:synge-basic-coinc}
  [\sigma] = 0,
  \quad
  [\sigma_{\mu}] = 0,
  \quad
  [\sigma_{\mu\nu}] = [\sigma_{\mu'\nu'}] = g_{\mu\nu}.
\end{equation}
By raising indices, it follows from~\eqref{eq:synge-basic-coinc}
\begin{equation*}
  [\sigma^{\mu}{}_{\nu}] = \delta^{\mu}{}_{\nu}
\end{equation*}
and thus, applying Synge's rule~\eqref{eq:synge-rule},
\begin{equation*}
  [\sigma^{\mu}{}_{\nu'}] = -\delta^{\mu}{}_{\nu}.
\end{equation*}

\begin{proposition}
  For $n > 1$,
  \begin{equation*}
    [\sigma^{\mu}{}_{(\nu_1\dotsm\nu_n)}] = 0,
    \quad
    [\sigma^{\mu}{}_{(\nu_1'\dotsm\nu_n')}] = 0.
  \end{equation*}
\end{proposition}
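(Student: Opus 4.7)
The plan is to prove both claims by induction on $n \ge 2$. The essential ingredients are the first-order transport identities in~\eqref{eq:synge-ident}, used in the forms $\sigma^\alpha\sigma^\mu{}_\alpha = \sigma^\mu$ and $\sigma^{\alpha'}\sigma^\mu{}_{\alpha'} = \sigma^\mu$, together with the elementary coincidence limits $[\sigma^\alpha]=0$, $[\sigma^\mu{}_\alpha]=\delta^\mu_\alpha$ and their primed analogs (via Synge's rule~\eqref{eq:synge-rule}, which also gives $[\sigma^\mu{}_{\alpha'}]=-\delta^\mu_\alpha$ and $[\sigma^{\alpha'}{}_{\nu'}]=\delta^\alpha_\nu$).

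For the unprimed case, the base $n=2$ is a direct calculation: differentiating $\sigma^\alpha\sigma^\mu{}_\alpha=\sigma^\mu$ twice and taking coincidence yields $2[\sigma^\mu{}_{(\nu_1\nu_2)}]=0$. For the inductive step, I apply $\nabla_{(\nu_1}\dotsm\nabla_{\nu_n)}$ to the transport identity. Because the full symmetrization over all of $\nu_1,\dotsc,\nu_n$ averages out the non-commutativity of the covariant derivatives, the symmetric Leibniz rule is exact (no curvature corrections survive) and produces
\[
  \sigma^\mu{}_{(\nu_1\dotsm\nu_n)}
  = \sum_{k=0}^n \binom{n}{k}\, \sigma^\alpha{}_{(\nu_1\dotsm\nu_k}\, \sigma^\mu{}_{|\alpha|\nu_{k+1}\dotsm\nu_n)}.
\]
Setting $T^\mu_n \defn [\sigma^\mu{}_{(\nu_1\dotsm\nu_n)}]$ and evaluating at the diagonal, the $k=0$ summand vanishes by $[\sigma^\alpha]=0$; the summands with $2\le k\le n-1$ vanish by the inductive hypothesis applied to $[\sigma^\alpha{}_{(\nu_1\dotsm\nu_k)}]$; the $k=1$ summand, using $[\sigma^\alpha{}_\nu]=\delta^\alpha_\nu$ and a small symmetrization count, collapses to $n T^\mu_n$; and the $k=n$ summand, using $[\sigma^\mu{}_\alpha]=\delta^\mu_\alpha$, collapses to $T^\mu_n$. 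The resulting identity $T^\mu_n = (n+1)T^\mu_n$ forces $T^\mu_n = 0$.

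For the primed case I run the same argument on $\sigma^{\alpha'}\sigma^\mu{}_{\alpha'}=\sigma^\mu$; since $\sigma^\mu$ is a scalar in its $y$-slot the primed covariant derivatives literally commute on it, so the symmetric Leibniz identity is immediate. The new ingredient is that the $k=n$ summand now contains $[\sigma^{\alpha'}{}_{(\nu_1'\dotsm\nu_n')}]$, which is the fully-primed analog of the proposition; this is disposed of by running a completely parallel induction on $\sigma^{\mu'}$ starting from the transport identity $\sigma^{\beta'}\sigma^{\mu'}{}_{\beta'}=\sigma^{\mu'}$. With that result in hand the $k=n$ summand vanishes, the $k=1$ summand still contributes $n[\sigma^\mu{}_{(\nu_1'\dotsm\nu_n')}]$, and the remaining summands drop by induction; the resulting $(n-1)[\sigma^\mu{}_{(\nu_1'\dotsm\nu_n')}]=0$ yields the claim for $n\ge 2$. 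The only delicate point I expect to have to argue carefully is the absence of curvature corrections in the symmetric Leibniz identity above — once it is verified on a small example (say $n=3$) and then asserted in general, everything else is clean bookkeeping.
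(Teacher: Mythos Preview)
Your approach is essentially the paper's: differentiate the transport identity $\sigma^\alpha\sigma^\mu{}_\alpha=\sigma^\mu$ (and its primed analogue) $n$ times, expand by Leibniz, take the coincidence limit, and induct on $n$. Two small remarks. First, the ``delicate point'' you flag is not actually delicate: the Leibniz rule $\nabla_\nu(AB)=(\nabla_\nu A)B+A(\nabla_\nu B)$ holds exactly for contracted products, so iterating it never produces curvature terms; symmetrization over the $\nu_i$ is only what lets you regroup the $2^n$ Leibniz summands into your $n{+}1$ binomial buckets. Second, in the primed case you can shortcut the ``parallel induction'' on $\sigma^{\mu'}$: the auxiliary quantity $[\sigma^{\alpha'}{}_{(\nu_1'\dotsm\nu_n')}]$ equals $[\sigma^{\alpha}{}_{(\nu_1\dotsm\nu_n)}]$ by the symmetry $\sigma(x,y)=\sigma(y,x)$, so it already vanishes by the unprimed half.
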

\begin{proof}
  Since the proof of both identities proceeds analogously, we only prove the first identity.
  We apply $n$ symmetrized covariant derivatives to one of the identities of~\eqref{eq:synge-ident}
  \begin{equation*}
    \sigma^{\mu}{}_{\alpha} \sigma^{\alpha} - \sigma^{\mu}=0,
  \end{equation*}
  obtaining
  \begin{equation*}
    0 = \sigma^{\mu}{}_{\alpha} \sigma^{\alpha}{}_{(\nu_1\dotsm\nu_n)} + \sigma^{\mu}{}_{\alpha(\nu_1} \sigma^{\alpha}{}_{\nu_2\dotsm\nu_n)} + \dotsb + \sigma^{\mu}{}_{\alpha(\nu_1\dotsm\nu_n)} \sigma^{\alpha} - \sigma^{\mu}{}_{(\nu_1\dotsm\nu_n)}.
  \end{equation*}
  Then we take the coincidence limit and apply~\eqref{eq:synge-basic-coinc}:
  \begin{align}
    \begin{split}\nonumber
      0 &= \delta^\mu{}_\alpha [\sigma^{\alpha}{}_{(\nu_1\dotsm\nu_n)}] + [\sigma^{\mu}{}_{\alpha(\nu_1}] [\sigma^{\alpha}{}_{\nu_2\dotsm\nu_n)}] \\&\quad + \dotsb + [\sigma^{\mu}{}_{\alpha(\nu_1\dotsm\nu_{n-1}}] \delta^\alpha{}_{\nu_n} - [\sigma^{\mu}{}_{(\nu_1\dotsm\nu_n)}]
    \end{split}
    \\
    \begin{split}\label{eq:synge-n-deriv}
      &= [\sigma^{\mu}{}_{(\nu_1\dotsm\nu_n)}] + [\sigma^{\mu}{}_{\alpha(\nu_1}] [\sigma^{\alpha}{}_{\nu_2\dotsm\nu_n)}] \\&\quad + \dotsb + [\sigma^{\mu}{}_{\alpha(\nu_1\dotsm\nu_{n-2}}] [\sigma^{\alpha}{}_{\nu_{n-1}\nu_n)}].
    \end{split}
  \end{align}
  Specializing to $n=2$, this yields
  \begin{equation*}
    [\sigma^{\mu}{}_{(\nu_1\nu_2)}] = 0.
  \end{equation*}
  The result for $n>2$ is then obtained from~\eqref{eq:synge-n-deriv} by induction.
\end{proof}

Next note that to compute~\eqref{obta1} it is enough to do it first for all primed or all unprimed indices:
\begin{equation}\label{obta2}
  [\sigma_{\alpha_1\dotsm\alpha_n}] = [\sigma_{\alpha_1'\dotsm\alpha_n'}].
\end{equation}
Indeed, by Synge's rule~\eqref{eq:synge-rule},
\begin{equation*}
  [\sigma_{\alpha_1\dotsm\alpha_n\beta_1'\dotsm\beta_m'}]_{;\mu}
  = [\sigma_{\alpha_1\dotsm\alpha_n\mu\beta_1'\dotsm\beta_m'}] + [\sigma_{\alpha_1\dotsm\alpha_n\beta_1'\dotsm\beta_m'\mu'}],
\end{equation*}
so it is easy to add or remove primes.

Note that $\sigma_{\alpha_1\dotsm\alpha_n}$ is not symmetric with respect to permutation of indices except for the first two since covariant derivatives do not in general commute.
For instance, we have
\begin{equation}\label{obta6}
  \sigma_{\alpha\beta\gamma} = \sigma_{\alpha\gamma\beta} + \sigma_\mu R_{\gamma\beta\alpha}{}^\mu.
\end{equation}
For more complicated cases, we have the following lemma.

\begin{lemma}\label{lem:sigma_ordering}
  Let $\boldsymbol{\alpha}=(\alpha_1\dotsm\alpha_n)$ and $\boldsymbol{\nu}=(\nu_1\dotsm\nu_m)$.
  Then
  \begin{equation*}
    \sigma_{\boldsymbol{\alpha}\beta\gamma\boldsymbol{\nu}} =
    \sigma_{\boldsymbol{\alpha}\gamma\beta\boldsymbol{\nu}} + \dotsb,
  \end{equation*}
  where $\dotsb$ indicates terms with $\sigma$'s having at most $n+m$ indices.
\end{lemma}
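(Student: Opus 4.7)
The plan is to reduce the identity to a single application of the commutator of covariant derivatives, applied to the two indices $\beta,\gamma$ in the middle of the string. With the semicolon convention used in~\eqref{obta6} — namely, appending an index on the right corresponds to applying one more \emph{outer} covariant derivative — one can peel off the $m$ outer and the $n$ inner derivatives unchanged, so that
\begin{equation*}
  \sigma_{\boldsymbol{\alpha}\beta\gamma\boldsymbol{\nu}} - \sigma_{\boldsymbol{\alpha}\gamma\beta\boldsymbol{\nu}}
  = \nabla_{\nu_m}\dotsm\nabla_{\nu_1}\bigl([\nabla_\gamma,\nabla_\beta]\,\sigma_{\alpha_1\dotsm\alpha_n}\bigr).
\end{equation*}
The entire statement thus reduces to analyzing this one commutator and the way the outer $m$ derivatives redistribute by Leibniz.

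Next I would invoke the standard identity for the commutator of two covariant derivatives acting on a $(0,n)$-tensor,
\begin{equation*}
  [\nabla_\gamma,\nabla_\beta]\,\sigma_{\alpha_1\dotsm\alpha_n}
  = \sum_{i=1}^n R_{\gamma\beta\alpha_i}{}^{\mu}\, \sigma_{\alpha_1\dotsm\alpha_{i-1}\mu\alpha_{i+1}\dotsm\alpha_n},
\end{equation*}
whose $n=1$, $m=0$ instance is precisely~\eqref{obta6}. Each summand is a product of a Riemann factor with a $\sigma$-factor carrying exactly $n$ indices. Applying $\nabla_{\nu_m}\dotsm\nabla_{\nu_1}$ and distributing over each such product by the Leibniz rule produces a linear combination of terms of the schematic form
\begin{equation*}
  R_{\gamma\beta\alpha_i}{}^{\mu}{}_{;\boldsymbol{\nu}'}\;\sigma_{\alpha_1\dotsm\alpha_{i-1}\mu\alpha_{i+1}\dotsm\alpha_n;\boldsymbol{\nu}''},
\end{equation*}
where $(\boldsymbol{\nu}',\boldsymbol{\nu}'')$ ranges over ordered partitions of $\boldsymbol{\nu}$ into two (possibly empty) subsequences. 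In every such term the $\sigma$-factor carries $n + \abs{\boldsymbol{\nu}''} \le n+m$ indices, which is exactly the bound asserted in the lemma.

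The argument is essentially structural bookkeeping; there is no analytic obstacle. The key point to watch is the index-count: the commutator converts two derivatives into one curvature factor, so the $\sigma$-factor on the right-hand side loses two indices relative to the LHS before the $\boldsymbol{\nu}$-derivatives are redistributed, which is what gives the bound $n+m$ rather than $n+m+2$. A secondary subtlety is that some of the $\nabla_{\nu_j}$'s landing on the Riemann factor could in principle reshuffle further into commutators of higher derivatives of $R$, but such terms are absorbed into the semicolon notation $R_{\cdots;\boldsymbol{\nu}'}$ and never contribute additional indices to the $\sigma$-factor. Hence the sole output of the procedure is a sum of terms of the advertised shape.
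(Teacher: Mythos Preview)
Your proof is correct and follows essentially the same route as the paper: apply the Ricci identity to commute $\nabla_\gamma$ and $\nabla_\beta$ on $\sigma_{\boldsymbol\alpha}$ (the $m=0$ case), then apply $\nabla_{\nu_m}\dotsm\nabla_{\nu_1}$ and use Leibniz to bound the $\sigma$-index count by $n+m$. The only cosmetic difference is that the paper states the $m=0$ identity first as a standalone formula and then differentiates it, whereas you phrase it directly as peeling off the outer derivatives; the substance is identical.
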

\begin{proof}
  We have
  \begin{equation}\label{obta4}
    \sigma_{\boldsymbol{\alpha}\beta\gamma} = \sigma_{\boldsymbol{\alpha}\gamma\beta} + \sum_{j=1}^n \sigma_{\boldsymbol{\alpha}(i,\mu)} R_{\gamma\beta\alpha_i}{}^{\mu},
  \end{equation}
  where $\boldsymbol{\alpha}(i,\mu)$ is the multiindex coinciding with $\boldsymbol{\alpha}$ except that on the $i$th place there is $\mu$ instead of $\alpha_i$.
  This proves the lemma for $m=0$.

  Then we apply the covariant derivatives $\nabla_{\nu_m} \dotsm \nabla_{\nu_1}$ to both sides of~\eqref{obta4}.
  We obtain
  \begin{equation}\label{obta5}
    \sigma_{\boldsymbol{\alpha}\beta\gamma\boldsymbol{\nu}} = \sigma_{\boldsymbol{\alpha}\gamma\beta\boldsymbol{\nu}} + \sum_{j=1}^n \bigl(\sigma_{\boldsymbol{\alpha}(i,\mu)} R_{\gamma\beta\alpha_i}{}^\mu \bigr)_{;\boldsymbol{\nu}}.
  \end{equation}
  Clearly, after applying the Leibniz rule, the second term on the right of~\eqref{obta5} will not contain $\sigma$'s with more than $n+m$ indices.
\end{proof}

Let us now describe a recursive procedure to compute \eqref{obta2} relying on one of the identities~\eqref{eq:synge-ident}, viz.,
\begin{equation*}
  \sigma_{\alpha_1} = \sigma_{\beta\alpha_1} \sigma^{\beta}.
\end{equation*}
Applying $n-1$ covariant derivatives, this yields
\begin{equation*}
  \sigma_{\boldsymbol\alpha} = (\sigma_{\beta\alpha_1} \sigma^{\beta})_{;\alpha_2\dotsm\alpha_n}
\end{equation*}
with $\boldsymbol\alpha = (\alpha_1, \dotsc, \alpha_n)$.
Therefore, by the Leibniz rule and Lemma~\eqref{lem:sigma_ordering},
\begin{equation}\label{eq:sigma_recursion_step}
  \sigma_{\boldsymbol\alpha} = \sigma_{\beta\boldsymbol\alpha} \sigma^\beta + \sum_{i=1}^n \sigma_{\boldsymbol\alpha(i,\beta)} \sigma^\beta{}_{\alpha_i} + \dotsb,
\end{equation}
where, as above, $\boldsymbol{\alpha}(i,\beta)$ is the multiindex coinciding with $\boldsymbol{\alpha}$ except that on the $i$th place there is $\beta$ instead of $\alpha_i$, and $\dotsb$ indicates terms where no factor of $\sigma$ has more than $n-2$ indices.
Then we take the coincidence limit and use the basic coincidence limits~\eqref{eq:synge-basic-coinc} to obtain
\begin{equation*}
  [\sigma_{\boldsymbol\alpha}] = 0 + n [\sigma_{\boldsymbol\alpha}] + [\dotsb],
\end{equation*}
where $[\dotsb]$ indicates the coincidence limit of the $\dotsb$ terms in~\eqref{eq:sigma_recursion_step}.
Since $[\dotsb]$ contains no factor of $\sigma$ with more than $n-2$ indices, this gives the desired recursion.

\subsection{Covariant Taylor expansion}
\label{sec:cov-expansion}

Let $M \ni x \mapsto T(x)\in \T_x^{p,q}M$ be a tensor field.
Note also the following important fact:
\begin{equation*}
  \frac{\rd^n}{\rd \tau^n}\parap{T(x+\tau u)}_{\tau u}\Big|_{\tau=0}=(u\cdot\nabla)^nT(x).
\end{equation*}
Therefore, a (covariant) Taylor expansion for $T$ is given by
\begin{equation}\label{eq:taylor1}
  \parap{T(x+u)}_u \sim \exp(u \cdot \nabla) T(x) = \sum_n \frac{1}{n!} (u \cdot \nabla)^n T(x),
\end{equation}
where it is necessary to first parallel transport $T(x+u)$ to $x$.
Let us rewrite~\eqref{eq:taylor1} in coordinates:
\begin{equation*}
  g(x,x+u)^{\boldsymbol\gamma}{}_{\boldsymbol{\gamma'}} g(x,x+u)_{\boldsymbol{\delta}}{}^{\boldsymbol{\delta'}} T(x+u)_{\boldsymbol{\delta'}}^{\boldsymbol{\gamma'}} \sim \sum_{{\boldsymbol\rho}}\frac{1}{|\boldsymbol{\rho}|!} T(x)_{\boldsymbol{\delta};\boldsymbol{\rho}}^{\boldsymbol\gamma}u^{\boldsymbol{\rho}},
\end{equation*}
where $u^{\boldsymbol\beta} = u^{\beta_1} \dotsm u^{\beta_n}$.
We remark that formulas for remainder term are analogous to the usual Taylor expansion.

We will be especially interested in expansions of bitensors around the diagonal.
Let $(x,y)\mapsto T(x,y)$ be a bitensor.
Then~\eqref{eq:taylor1} can be rewritten as
\begin{equation*}
  \parap{T(x,x+u)}_u \sim [\exp(u \cdot \nabla') T](x) = \sum_n \frac{1}{n!} [(u \cdot \nabla')^n T](x),
\end{equation*}
where $\nabla'$ denotes covariant differentiation with respect to the second argument.
In coordinates, this can also be written as
\begin{equation*}
  g(x,x+u)^{\boldsymbol\gamma}{}_{\boldsymbol{\gamma'}} g(x,x+u)_{\boldsymbol{\delta}}{}^{\boldsymbol{\delta'}} T(x,x+u)_{\boldsymbol\alpha\boldsymbol\delta'}^{\boldsymbol\beta\boldsymbol\gamma'} \sim \sum_{\boldsymbol{\rho'}} \frac{1}{|\boldsymbol\rho|!} [T_{\boldsymbol\alpha\boldsymbol\delta';\boldsymbol\rho'}^{\boldsymbol\beta\boldsymbol{\gamma'}}](x) u^{\boldsymbol\rho}.
\end{equation*}

A particularly efficient approach to calculating coefficients of covariant expansions of many important bitensors is Avramidi's method~\cite{avramidi:aa,avramidi:ab,decanini:aa}, especially the semi-recursive variant presented in~\cite{ottewill:ab}.
Avramidi's method relies on deriving recursion relations for the coefficients from certain transport equations (such as~\eqref{eq:synge-ident}) for the bitensor.
We refer to~\cite{ottewill:ab} for a full explanation and several examples.

The Taylor expansion for tensor fields~\eqref{eq:taylor1} generalizes to tensor-valued functions on the cotangent bundle via the horizontal derivative.
\begin{equation}\label{eq:taylor2}
  \parap[\big]{S(x+u,\parap{p}^u)}_u \sim \exp(u \cdot \nnabla) S(x,p) = \sum_n \frac{1}{n!} (u \cdot \nnabla)^n S(x,p).
\end{equation}

\subsection{Geodesic triangle}
\label{sec:triangle}

Consider three points $x,y,z$ in a geodesically convex neighbourhood of the diagonal.
By connecting these three points by the distinguished geodesics between them, we obtain a geodesic triangle.

We define the following vectors:
\begin{align*}
  v &\defn (y-x) \in \T_xM, \\
  w &\defn (z-x) \in \T_xM, \\
  u' &\defn (z-y) \in \T_yM, \\
  u &\defn \parap{u'}_v \in \T_xM.
\end{align*}
The arrangement of these vectors is schematically depicted in Fig.~\ref{fig:geodesic-triangle}.

\begin{figure}
  \centering
  \begin{tikzpicture}
    \fill (3.5,3) circle [radius=2pt] node[above] {$z$};
    \fill (0,0) circle [radius=2pt] node[below left=-1pt] {$x$};
    \fill (6,0) circle [radius=2pt] node[below right=-1pt] {$y$};

    \draw[->,shorten >=3pt,thick] (0,0) -- node[below] {$v$} (6,0);
    \draw[->,shorten >=3pt,thick] (6,0) -- node[above right=-2pt] {$u'$} (3.5,3);
    \draw[->,shorten >=3pt,thick] (0,0) -- node[above left=-2pt] {$w$} (3.5,3);
    \draw[->,shorten >=3pt,thick] (0,0) -- node[above right=-2pt] {$u$} (-2.7,3.1);
  \end{tikzpicture}
  \caption{\label{fig:geodesic-triangle} A geodesic triangle spanned by three points $x,y,z$. $v$ is the tangent at~$x$ to the geodesic from~$x$ to~$y$; $w$ is the tangent at~$x$ to the geodesic from~$x$ to~$z$; $u'$ is the tangent at~$y$ to the geodesic from~$y$ to~$z$; $u$ is the parallel transport of~$u'$ from~$y$ to~$x$ along the geodesic given by~$v$.}
\end{figure}
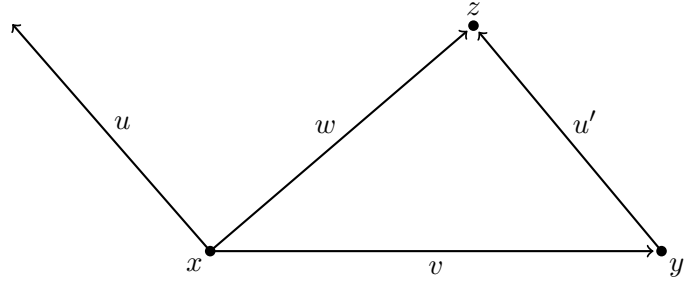

While $w = u + v$ on flat spaces, due to the effects of curvature on parallel transport this is no longer true on generic curved spaces, i.e., the triangle formed by these vectors does not close.

We will consider two cases:
First, suppose that we are given $v,w$, and $u$ is unknown.
In other words,
\begin{equation*}
  u = \parap{(x+w)-(x+v)}_v,
\end{equation*}
or, in terms of the world function with $y=x+v$, $z=x+w$,
\begin{equation*}
  u^\mu = -g(x,y)^\mu{}_{\alpha'} \sigma(y,z)^{\alpha}.
\end{equation*}
We perform two covariant expansions (with base point~$x$) to find
\begin{align*}
  u^\mu
  &\sim -g(x,y)^\mu{}_{\alpha'} \sum_n \frac{1}{n!} (w \cdot \nabla')^n \sigma(y,x)^{\alpha} \\
  &\sim -\sum_{m,n} \frac{1}{m!n!} [(v \cdot \nabla)^m (w \cdot \nabla')^n \sigma^{\mu}](x) \\
  &= -\sum_{\boldsymbol\alpha,\boldsymbol\beta} \frac{1}{|\boldsymbol\alpha|!|\boldsymbol\beta|!} v^{\boldsymbol\alpha} w^{\boldsymbol\beta} [\sigma^{\mu}{}_{(\boldsymbol\alpha)(\boldsymbol\beta')}](x),
\end{align*}
where we used~\eqref{eq:covd_parallel_transport}.

Secondly, suppose that we are given $u,v$, and $w$ is unknown.
In other words,
\begin{equation*}
  w = \Bigl(\bigl((x+v)+\parap{u}^v\bigr)-x\Bigr),
\end{equation*}
or in terms of the world function with $z=(x+v)+\parap{u}^v$,
\begin{equation*}
  w^\mu = -\sigma(x, z)^{\mu}.
\end{equation*}
We perform two covariant expansions (first with base point $y=x+v$ and then with base point~$x$) to find
\begin{align}
  w^\mu
  &\sim -\sum_n \frac{1}{n!} (u' \cdot \nabla')^n \sigma(x, y)^{\mu} \nonumber\\
  &\sim -\sum_{m,n} \frac{1}{m!n!} [(v \cdot \nabla')^m (u \cdot \nabla')^n \sigma^{\mu}](x) \nonumber\\
  &= -\sum_{\boldsymbol\alpha,\boldsymbol\beta}\frac{1}{|\boldsymbol\alpha|!|\boldsymbol\beta|!} v^{\boldsymbol\alpha} u^{\boldsymbol\beta}[\sigma^{\mu}{}_{(\boldsymbol\alpha')(\boldsymbol\beta')}](x), \label{eq:qe}
\end{align}
where we used~\eqref{eq:covd_parallel_transport}.

Set
\begin{equation*}
  w = (u + v) + \delta(u,v),
\end{equation*}
where $\delta(u,v)$ specifies the \emph{geodesic defect}, i.e., the failure of $u,v,w$ to form a triangle due to the effects of curvature.
In other words,
\begin{equation*}
  \delta(u,v) = \Bigl(\bigl((x+v)+\parap{u}^v\bigr)-x\Bigr)-u-v.
\end{equation*}
Since
\begin{equation*}
  [\sigma^{\mu}{}_{(\boldsymbol\nu')}] = \begin{cases}
    -\delta^\mu{}_\nu & \quad\text{for}\;\abs{\boldsymbol\nu} = 1, \\
    0 & \quad\text{otherwise},
  \end{cases}
\end{equation*}
we find from \eqref{eq:qe}
\begin{subequations}\label{eq:geodesic_defect}\begin{align}
  \delta(u,v)^\mu
  &= -\sum_{m,n \geq 1} \frac{1}{m!n!} [(v \cdot \nabla')^m (u \cdot \nabla')^n \sigma^{\mu}] \\
  &= -\sum_{\substack{\boldsymbol\alpha, \boldsymbol\beta \\ \mathclap{\abs{\boldsymbol\alpha}, \abs{\boldsymbol\beta} \geq 1}}} \frac{1}{\abs{\boldsymbol\alpha}!\abs{\boldsymbol\beta}!} u^{\boldsymbol\alpha} v^{\boldsymbol\beta} [\sigma^{\mu}{}_{(\boldsymbol\alpha')(\boldsymbol\beta')}].
\end{align}\end{subequations}

\subsection{Expansion of the Van Vleck--Morette determinant}
\label{sec:cov-exp-vanvleck}

Following~\cite{avramidi:aa,avramidi:ab}, we define the symmetric biscalar
\begin{equation*}
  \zeta(x,y) \defn \log\Delta(x,y)^\frac12.
\end{equation*}
It satisfies the transport equation
\begin{equation}\label{eq:zeta-transport}
  \cD' \zeta = \frac12 (4 - \sigma^{\alpha\rlapprime}{}_{\alpha'}).
\end{equation}

For the remainder of this section, let $u,v \in \T_zM$.
Using Avramidi's method~\cite{avramidi:aa,avramidi:ab,ottewill:ab,decanini:aa} applied to the transport equation~\eqref{eq:zeta-transport}, it is easy to calculate the covariant expansion of $\zeta(z,z+u)$ to high orders.
Up to fourth order, it is given by
\begin{equation}\begin{split}\label{eq:zeta_expansion-1}
  \zeta(z,z+u) &=
  \tfrac{1}{12} R_{\alpha_1 \alpha_2} u^{\alpha_1} u^{\alpha_2}
  + \tfrac{1}{24} R_{\alpha_1 \alpha_2 ; \alpha_3} u^{\alpha_1} u^{\alpha_2} u^{\alpha_3} \\&\quad
  + \bigl( \tfrac{1}{360} R^{\beta}{}_{\alpha_1 \gamma \alpha_2} R^{\gamma}{}_{\alpha_3 \beta \alpha_4} + \tfrac{1}{80} R_{\alpha_1 \alpha_2 ; \alpha_3 \alpha_4} \bigr) u^{\alpha_1} u^{\alpha_2} u^{\alpha_3} u^{\alpha_4}
  + \dotsb
\end{split}\end{equation}
where all tensors here and below are evaluated at $z$ unless otherwise indicated.
For an expansion up to 11th order we refer to~\cite{decanini:aa}.
This calculation can be automated with the \emph{CovariantSeries} package for Mathematica~\cite{ottewill:ab,wardell:aa}.

Applying a (covariant) Taylor expansion to the expansion above, we obtain
\begin{equation*}\begin{split}\MoveEqLeft
  \zeta(z+v,z+v+\parap{u}^v) \\&=
  \tfrac{1}{12} R_{\alpha_1 \alpha_2} u^{\alpha_1} u^{\alpha_2}
  + \tfrac{1}{24} R_{\alpha_1 \alpha_2 ; \alpha_3} u^{\alpha_1} u^{\alpha_2} u^{\alpha_3} + \tfrac{1}{12} R_{\alpha_1 \alpha_2 ; \beta_1} u^{\alpha_1} u^{\alpha_2} v^{\beta_1} \\&\quad
  + \bigl( \tfrac{1}{360} R^{\beta}{}_{\alpha_1 \gamma \alpha_2} R^{\gamma}{}_{\alpha_3 \beta \alpha_4} + \tfrac{1}{80} R_{\alpha_1 \alpha_2 ; \alpha_3 \alpha_4} \bigr) u^{\alpha_1} u^{\alpha_2} u^{\alpha_3} u^{\alpha_4} \\&\quad + \tfrac{1}{24} R_{\alpha_1 \alpha_2 ; \alpha_3 \beta_1} u^{\alpha_1} u^{\alpha_2} u^{\alpha_3} v^{\beta_1} + \tfrac{1}{12} R_{\alpha_1 \alpha_2 ; \beta_1 \beta_2} u^{\alpha_1} u^{\alpha_2} v^{\beta_1} v^{\beta_2}
  + \dotsb
\end{split}\end{equation*}

With the help of Synge's rule, the coefficients of the covariant expansion of $\zeta(z-u,z+u)$ can be obtained from those of $\zeta(z,z+u)$.
A helpful fact for this calculation is the symmetry of $\zeta$.
Among other things, it implies that only even order coefficients are non-vanishing.
We obtain, up to fourth order,
\begin{equation}\begin{split}\label{eq:zeta_expansion-3}
  \zeta(z-u,z+u) &=
  \tfrac{1}{3} R_{\alpha_1 \alpha_2} u^{\alpha_1} u^{\alpha_2} \\&\quad
  + \bigl( \tfrac{2}{45} R^{\beta}{}_{\alpha_1 \gamma \alpha_2} R^{\gamma}{}_{\alpha_3 \beta \alpha_4} + \tfrac{1}{30} R_{\alpha_1 \alpha_2 ; \alpha_3 \alpha_4} \bigr) u^{\alpha_1} u^{\alpha_2} u^{\alpha_3} u^{\alpha_4}
  + \dotsb
\end{split}\end{equation}

Combining the latter expansion with an additional (covariant) Taylor expansion, we find (again up to fourth order)
\begin{equation}\begin{split}\label{eq:zeta_expansion-4}\MoveEqLeft
  \zeta(z+v-\parap{u}^v,z+v+\parap{u}^v) \\&=
  \tfrac13 R_{\alpha_1 \alpha_2} u^{\alpha_1} u^{\alpha_2}
  + \tfrac13 R_{\alpha_1 \alpha_2 ; \beta_1} u^{\alpha_1} u^{\alpha_2} v^{\beta_1} \\&\quad
  + \bigl( \tfrac{2}{45} R^{\beta}{}_{\alpha_1 \gamma \alpha_2} R^{\gamma}{}_{\alpha_3 \beta \alpha_4} + \tfrac{1}{30} R_{\alpha_1 \alpha_2 ; \alpha_3 \alpha_4} \bigr) u^{\alpha_1} u^{\alpha_2} u^{\alpha_3} u^{\alpha_4} \\&\quad + \tfrac16 R_{\alpha_1 \alpha_2 ; \beta_1 \beta_2} u^{\alpha_1} u^{\alpha_2} v^{\beta_1} v^{\beta_2}
  + \dotsb
\end{split}\end{equation}

\subsection{Four geodesic triangles}

The vectors $u_1, u_2, v_1, v_2, w, \tilde{w}$ in the proof of Thm.~\ref{thm:star-product} form several geodesic triangles, in particular, the four triangles shown in different colors in Fig.~\ref{fig:star-product-triangle}.
In formulas, we have
\begin{subequations}\label{eq:4triangles}\begin{align}
  w &= -u_1 + v_2 + \delta(-u_1,v_2), \\
  -w &= -u_2 + v_1 + \delta(-u_2,v_1), \\
  \tilde{w} &= u_2 + v_1 + \delta(u_2,v_1), \\
  \tilde{w} &= u_1 + v_2 + \delta(u_1,v_2),
\end{align}\end{subequations}
where we use the geodesic defect $\delta(\cdot\,,\cdot)$ defined in~\eqref{eq:geodesic_defect}.

Define the even and odd parts of the geodesic defect as
\begin{equation*}
  \delta_\pm(u,v) = \frac12 \bigl( \delta(u,v) \pm \delta(-u,v) \bigr),
  \quad
  u,v \in \T_zM.
\end{equation*}
Then, solving~\eqref{eq:4triangles} for $v_1, v_2, w, \tilde{w}$, we find
\begin{align*}
  v_1 &= u_1 + \delta_-(u_1,v_2) - \delta_+(u_2,v_1), \\
  v_2 &= u_2 - \delta_+(u_1,v_2) + \delta_-(u_2,v_1), \\
  w &= -u_1 + u_2 - \delta_-(u_1,v_2) + \delta_-(u_2,v_1), \\
  \tilde{w} &= u_1 + u_2 + \delta_-(u_1,v_2) + \delta_-(u_2,v_1).
\end{align*}
These equations can be solved perturbatively to obtain an expressions for $v_1,v_2,w$ and~$\tilde{w}$ depending only on~$u_1$ and~$u_2$.
Up to 5th order,
\begin{align}
  \begin{split}\label{eq:v1-expansion}
    v_1^\mu &=
    u_1^\mu
    + \tfrac12 R^\mu{}_{\alpha_1\alpha_2\alpha_3} u_2^{\alpha_1} u_1^{\alpha_2} u_2^{\alpha_3} \\
    &\quad + R^\mu{}_{\alpha_1\alpha_2\alpha_3;\alpha_4} \bigl(-\tfrac{1}{24} u_1^{\alpha_1} u_2^{\alpha_2} u_1^{\alpha_3} u_2^{\alpha_4} + \tfrac{5}{24} u_2^{\alpha_1} u_1^{\alpha_2} u_2^{\alpha_3} u_1^{\alpha_4} \\&\qquad -\tfrac{1}{12} u_1^{\alpha_1} u_2^{\alpha_2} u_1^{\alpha_3} u_1^{\alpha_4} + \tfrac{1}{12} u_2^{\alpha_1} u_1^{\alpha_2} u_2^{\alpha_3} u_2^{\alpha_4} \bigr) \\
    &\quad + R^\mu{}_{\alpha_1\alpha_2\alpha_3;\alpha_4\alpha_5} \bigl( \tfrac{1}{24} u_2^{\alpha_1} u_1^{\alpha_2} u_2^{\alpha_3} u_2^{\alpha_4} u_2^{\alpha_5} - \tfrac{1}{12} u_1^{\alpha_1} u_2^{\alpha_2} u_1^{\alpha_3} u_1^{\alpha_4} u_2^{\alpha_5} \\&\qquad + \tfrac{1}{12} u_2^{\alpha_1} u_1^{\alpha_2} u_2^{\alpha_3} u_1^{\alpha_4} u_1^{\alpha_5} \bigr) \\&\quad + R^\mu{}_{\alpha_1\beta\alpha_2} R^\beta{}_{\alpha_3\alpha_4\alpha_5} \bigl( \tfrac{5}{24} u_2^{\alpha_1} u_2^{\alpha_2} u_2^{\alpha_3} u_1^{\alpha_4} u_2^{\alpha_5} - \tfrac{1}{6} u_2^{\alpha_1} u_1^{\alpha_2} u_1^{\alpha_3} u_2^{\alpha_4} u_1^{\alpha_5} \\&\qquad + \tfrac{1}{12} u_1^{\alpha_1} u_1^{\alpha_2} u_2^{\alpha_3} u_1^{\alpha_4} u_2^{\alpha_5} \bigr)
    + \dotsb
  \end{split}
  \\
  \begin{split}\label{eq:v2-expansion}
    v_2^\mu &=
    u_2^\mu
    + \tfrac12 R^\mu{}_{\alpha_1\alpha_2\alpha_3} u_1^{\alpha_1} u_2^{\alpha_2} u_1^{\alpha_3} \\
    &\quad + R^\mu{}_{\alpha_1\alpha_2\alpha_3;\alpha_4} \bigl(-\tfrac{1}{24} u_2^{\alpha_1} u_1^{\alpha_2} u_2^{\alpha_3} u_1^{\alpha_4} + \tfrac{5}{24} u_1^{\alpha_1} u_2^{\alpha_2} u_1^{\alpha_3} u_2^{\alpha_4} \\&\qquad -\tfrac{1}{12} u_2^{\alpha_1} u_1^{\alpha_2} u_2^{\alpha_3} u_2^{\alpha_4} + \tfrac{1}{12} u_1^{\alpha_1} u_2^{\alpha_2} u_1^{\alpha_3} u_1^{\alpha_4} \bigr) \\
    &\quad + R^\mu{}_{\alpha_1\alpha_2\alpha_3;\alpha_4\alpha_5} \bigl(\tfrac{1}{24} u_1^{\alpha_1} u_2^{\alpha_2} u_1^{\alpha_3} u_1^{\alpha_4} u_1^{\alpha_5} - \tfrac{1}{12} u_2^{\alpha_1} u_1^{\alpha_2} u_2^{\alpha_3} u_2^{\alpha_4} u_1^{\alpha_5} \\&\qquad + \tfrac{1}{12} u_1^{\alpha_1} u_2^{\alpha_2} u_1^{\alpha_3} u_2^{\alpha_4} u_2^{\alpha_5} \bigr) \\&\quad + R^\mu{}_{\alpha_1\beta\alpha_2} R^\beta{}_{\alpha_3\alpha_4\alpha_5} \bigl(\tfrac{5}{24} u_1^{\alpha_1} u_1^{\alpha_2} u_1^{\alpha_3} u_2^{\alpha_4} u_1^{\alpha_5} - \tfrac{1}{6} u_1^{\alpha_1} u_2^{\alpha_2} u_2^{\alpha_3} u_1^{\alpha_4} u_2^{\alpha_5} \\&\qquad + \tfrac{1}{12} u_2^{\alpha_1} u_2^{\alpha_2} u_1^{\alpha_3} u_2^{\alpha_4} u_1^{\alpha_5} \bigr)
    + \dotsb
  \end{split}
  \\
  \begin{split}\label{eq:w-expansion}
    w^\mu &=
    -u_1^\mu + u_2^\mu
    + R^\mu{}_{\alpha_1\alpha_2\alpha_3} \bigl( \tfrac16 u_1^{\alpha_1} u_2^{\alpha_2} u_1^{\alpha_3} - \tfrac16 u_2^{\alpha_1} u_1^{\alpha_2} u_2^{\alpha_3} \bigr) \\
    &\quad + R^\mu{}_{\alpha_1\alpha_2\alpha_3;\alpha_4} \bigl( -\tfrac{1}{6} u_2^{\alpha_1} u_1^{\alpha_2} u_2^{\alpha_3} u_2^{\alpha_4} + \tfrac{1}{6} u_1^{\alpha_1} u_2^{\alpha_2} u_1^{\alpha_3} u_1^{\alpha_4} \bigr) \\
    &\quad + R^\mu{}_{\alpha_1\alpha_2\alpha_3;\alpha_4\alpha_5} \bigl(-\tfrac{7}{120} u_2^{\alpha_1} u_1^{\alpha_2} u_2^{\alpha_3} u_2^{\alpha_4} u_1^{\alpha_5} + \tfrac{1}{120} u_1^{\alpha_1} u_2^{\alpha_2} u_1^{\alpha_3} u_2^{\alpha_4} u_2^{\alpha_5} \\&\qquad + \tfrac{1}{40} u_1^{\alpha_1} u_2^{\alpha_2} u_1^{\alpha_3} u_1^{\alpha_4} u_1^{\alpha_5} + \tfrac{7}{120} u_1^{\alpha_1} u_2^{\alpha_2} u_1^{\alpha_3} u_1^{\alpha_4} u_2^{\alpha_5} \\&\qquad - \tfrac{1}{120} u_2^{\alpha_1} u_1^{\alpha_2} u_2^{\alpha_3} u_1^{\alpha_4} u_1^{\alpha_5} - \tfrac{1}{40} u_2^{\alpha_1} u_1^{\alpha_2} u_2^{\alpha_3} u_2^{\alpha_4} u_2^{\alpha_5} \bigr) \\
    &\quad + R^\mu{}_{\alpha_1\beta\alpha_2} R^\beta{}_{\alpha_3\alpha_4\alpha_5} \bigl(\tfrac{7}{360} u_2^{\alpha_1} u_2^{\alpha_2} u_1^{\alpha_3} u_2^{\alpha_4} u_1^{\alpha_5} - \tfrac{11}{360} u_2^{\alpha_1} u_1^{\alpha_2} u_2^{\alpha_3} u_1^{\alpha_4} u_2^{\alpha_5} \\&\qquad - \tfrac{11}{90} u_1^{\alpha_1} u_2^{\alpha_2} u_2^{\alpha_3} u_1^{\alpha_4} u_2^{\alpha_5} + \tfrac{7}{360} u_1^{\alpha_1} u_1^{\alpha_2} u_1^{\alpha_3} u_2^{\alpha_4} u_1^{\alpha_5} \\&\qquad - \tfrac{7}{360} u_1^{\alpha_1} u_1^{\alpha_2} u_2^{\alpha_3} u_1^{\alpha_4} u_2^{\alpha_5} + \tfrac{11}{360} u_1^{\alpha_1} u_2^{\alpha_2} u_1^{\alpha_3} u_2^{\alpha_4} u_1^{\alpha_5} \\&\qquad + \tfrac{11}{90} u_2^{\alpha_1} u_1^{\alpha_2} u_1^{\alpha_3} u_2^{\alpha_4} u_1^{\alpha_5} - \tfrac{7}{360} u_2^{\alpha_1} u_2^{\alpha_2} u_2^{\alpha_3} u_1^{\alpha_4} u_2^{\alpha_5} \bigr)
    + \dotsb
  \end{split}
  \\
  \begin{split}\label{eq:tilde-w-expansion}
    \tilde{w}^\mu &=
    u_1^\mu + u_2^\mu
    + R^\mu_{\alpha_1\alpha_2\alpha_3} \bigl(\tfrac16 u_1^{\alpha_1} u_2^{\alpha_2} u_1^{\alpha_3} + \tfrac16 u_2^{\alpha_1} u_1^{\alpha_2} u_2^{\alpha_3}\bigr) \\
    %
    &\quad + R^\mu{}_{\alpha_1\alpha_2\alpha_3;\alpha_4;\alpha_5} \bigl(-\tfrac{7}{120} u_2^{\alpha_1} u_1^{\alpha_2} u_2^{\alpha_3} u_2^{\alpha_4} u_1^{\alpha_5} + \tfrac{1}{120} u_1^{\alpha_1} u_2^{\alpha_2} u_1^{\alpha_3} u_2^{\alpha_4} u_2^{\alpha_5} \\&\qquad + \tfrac{1}{40} u_1^{\alpha_1} u_2^{\alpha_2} u_1^{\alpha_3} u_1^{\alpha_4} u_1^{\alpha_5} -\tfrac{7}{120} u_1^{\alpha_1} u_2^{\alpha_2} u_1^{\alpha_3} u_1^{\alpha_4} u_2^{\alpha_5} \\&\qquad + \tfrac{1}{120} u_2^{\alpha_1} u_1^{\alpha_2} u_2^{\alpha_3} u_1^{\alpha_4} u_1^{\alpha_5} + \tfrac{1}{40} u_2^{\alpha_1} u_1^{\alpha_2} u_2^{\alpha_3} u_2^{\alpha_4} u_2^{\alpha_5} \bigr) \\
    &\quad + R^\mu{}_{\alpha_1\beta\alpha_2} R^\beta{}_{\alpha_3\alpha_4\alpha_5} \bigl(\tfrac{7}{360} u_2^{\alpha_1} u_2^{\alpha_2} u_1^{\alpha_3} u_2^{\alpha_4} u_1^{\alpha_5} - \tfrac{11}{360} u_2^{\alpha_1} u_1^{\alpha_2} u_2^{\alpha_3} u_1^{\alpha_4} u_2^{\alpha_5} + \\&\qquad - \tfrac{11}{90} u_1^{\alpha_1} u_2^{\alpha_2} u_2^{\alpha_3} u_1^{\alpha_4} u_2^{\alpha_5} + \tfrac{7}{360} u_1^{\alpha_1} u_1^{\alpha_2} u_1^{\alpha_3} u_2^{\alpha_4} u_1^{\alpha_5} \\&\qquad + \tfrac{7}{360} u_1^{\alpha_1} u_1^{\alpha_2} u_2^{\alpha_3} u_1^{\alpha_4} u_2^{\alpha_5} - \tfrac{11}{360} u_1^{\alpha_1} u_2^{\alpha_2} u_1^{\alpha_3} u_2^{\alpha_4} u_1^{\alpha_5} \\&\qquad - \tfrac{11}{90} u_2^{\alpha_1} u_1^{\alpha_2} u_1^{\alpha_3} u_2^{\alpha_4} u_1^{\alpha_5} + \tfrac{7}{360} u_2^{\alpha_1} u_2^{\alpha_2} u_2^{\alpha_3} u_1^{\alpha_4} u_2^{\alpha_5} \bigr)
    + \dotsb
  \end{split}
\end{align}
Note that the expansions for $v_1$ and $v_2$ can be obtained from one another by exchanging $u_1$ and $u_2$.
Moreover,
\begin{align*}
  \begin{split}
    \delta_+(u_1, v_2)^\mu &=
    -\tfrac{1}{3} R^\mu{}_{\alpha_1 \alpha_2 \alpha_3} u_1^{\alpha_1} u_1^{\alpha_3} u_2^{\alpha_2} \\&\quad
    + R^\mu{}_{\alpha_1 \alpha_2 \alpha_3 ; \alpha_4} \bigl( \tfrac{1}{24} u_2^{\alpha_1} u_1^{\alpha_2} u_2^{\alpha_3} u_1^{\alpha_4} - \tfrac{5}{24} u_1^{\alpha_1} u_2^{\alpha_2} u_1^{\alpha_3} u_2^{\alpha_4} \bigr) \\&\quad
    + R^\mu{}_{\alpha_1 \alpha_2 \alpha_3 ; \alpha_4 \alpha_5} \bigl(- \tfrac{1}{60} u_1^{\alpha_1} u_2^{\alpha_2} u_1^{\alpha_3} u_1^{\alpha_4} u_1^{\alpha_5} - \tfrac{3}{40} u_1^{\alpha_1} u_2^{\alpha_2} u_1^{\alpha_3} u_2^{\alpha_4} u_2^{\alpha_5} \\&\qquad + \tfrac{1}{40} u_2^{\alpha_1} u_1^{\alpha_2} u_2^{\alpha_3} u_1^{\alpha_4} u_2^{\alpha_5} \bigr) \\&\quad
    + R^\mu{}_{\alpha_1 \nu \alpha_2} R^\nu{}_{\alpha_3 \alpha_4 \alpha_5} \bigl( \tfrac{1}{6} u_1^{\alpha_1} u_1^{\alpha_2} u_1^{\alpha_3} u_1^{\alpha_4} u_2^{\alpha_5} - \tfrac{1}{45} u_1^{\alpha_1} u_1^{\alpha_2} u_1^{\alpha_3} u_2^{\alpha_4} u_1^{\alpha_5} \\&\qquad
    + \tfrac{2}{45} u_1^{\alpha_1} u_2^{\alpha_2} u_2^{\alpha_3} u_1^{\alpha_4} u_2^{\alpha_5} - \tfrac{1}{180} u_2^{\alpha_1} u_1^{\alpha_2} u_2^{\alpha_3} u_1^{\alpha_4} u_2^{\alpha_5} \\&\qquad - \tfrac{7}{180} u_2^{\alpha_1} u_2^{\alpha_2} u_1^{\alpha_3} u_2^{\alpha_4} u_1^{\alpha_5} \bigr)
    + \dotsb
  \end{split}
  \\
  \begin{split}
    \delta_-(u_1, v_2) &=
    \tfrac{1}{6} R^\mu{}_{\alpha_1 \alpha_2 \alpha_3} u_2^{\alpha_1} u_1^{\alpha_2} u_2^{\alpha_3} \\&\quad
    + R^\mu{}_{\alpha_1 \alpha_2 \alpha_3 ; \alpha_4} \bigl(- \tfrac{1}{12} u_1^{\alpha_1} u_2^{\alpha_2} u_1^{\alpha_3} u_1^{\alpha_4} + \tfrac{1}{12} u_2^{\alpha_1} u_1^{\alpha_2} u_2^{\alpha_3} u_2^{\alpha_4} \bigr) \\&\quad
    + R^\mu{}_{\alpha_1 \alpha_2 \alpha_3 ; \alpha_4 \alpha_5} \bigl(- \tfrac{7}{120} u_1^{\alpha_1} u_2^{\alpha_2} u_1^{\alpha_3} u_1^{\alpha_4} u_2^{\alpha_5} + \tfrac{1}{120} u_2^{\alpha_1} u_1^{\alpha_2} u_2^{\alpha_3} u_1^{\alpha_4} u_1^{\alpha_5} \\&\qquad + \tfrac{1}{40} u_2^{\alpha_1} u_1^{\alpha_2} u_2^{\alpha_3} u_2^{\alpha_4} u_2^{\alpha_5} \bigr) \\&\quad
    + R^\mu{}_{\alpha_1 \nu \alpha_2} R^\nu{}_{\alpha_3 \alpha_4 \alpha_5} \bigl( \tfrac{7}{360} u_1^{\alpha_1} u_1^{\alpha_2} u_2^{\alpha_3} u_1^{\alpha_4} u_2^{\alpha_5} - \tfrac{41}{360} u_1^{\alpha_1} u_2^{\alpha_2} u_1^{\alpha_3} u_2^{\alpha_4} u_1^{\alpha_5} \\&\qquad - \tfrac{1}{12} u_1^{\alpha_1} u_2^{\alpha_2} u_1^{\alpha_3} u_1^{\alpha_4} u_2^{\alpha_5} + \tfrac{1}{6} u_2^{\alpha_1} u_1^{\alpha_2} u_1^{\alpha_3} u_1^{\alpha_4} u_2^{\alpha_5} \\&\qquad + \tfrac{2}{45} u_2^{\alpha_1} u_1^{\alpha_2} u_1^{\alpha_3} u_2^{\alpha_4} u_1^{\alpha_5} + \tfrac{7}{360} u_2^{\alpha_1} u_2^{\alpha_2} u_2^{\alpha_3} u_1^{\alpha_4} u_2^{\alpha_5} \bigr)
    + \dotsb
  \end{split}
\end{align*}
and $\delta_\pm(u_2, v_1)$ are obtained by exchanging $u_1$ and $u_2$.

\subsection{Expansion of the geometric factor}
\label{sec:jacobian}

To calculate the Jacobian determinant in the proof of Thm.~\ref{thm:star-product}, we use
\begin{align*}
  -\tfrac12(w - \tilde{w}) &= u_1 + \delta_-(u_1, v_2), \\
  \tfrac12(w + \tilde{w}) &= u_2 + \delta_-(u_2, v_1).
\end{align*}
Therefore,
\begin{equation*}
  \abs[\bigg]{\frac{\partial(w,\tilde{w})}{\partial(u_1,u_2)}} = 2^d \abs[\bigg]{\one + \frac{\partial\bigl(\delta_-(u_1, v_2),\delta_-(u_2, v_1)\bigr)}{\partial(u_1,u_2)}},
\end{equation*}
with $v_1$ and $v_2$ understood as functions of $u_1, u_2$.

Using the series expansion of the logarithm, it holds
\begin{equation*}
  \det(\one + A) \sim \exp\Bigl(-\sum_k \frac{(-1)^k}{k} \tr(A^k)\Bigr)
\end{equation*}
for a square matrix~$A$, and thus we find (up to fourth order)
\begin{align*}
  \smash{\abs[\bigg]{\frac{\partial(w,\tilde{w})}{\partial(u_1,u_2)}}} = 2^d \exp\bigl[&
  R_{\alpha_1 \alpha_2} \bigl( \tfrac16 u_1^{\alpha_1} u_1^{\alpha_2} + \tfrac16 u_2^{\alpha_1} u_2^{\alpha_2} \bigr) \\&
  + R_{\alpha_1 \alpha_2 ; \alpha_3} \bigl( \tfrac{1}{12} u_1^{\alpha_1} u_1^{\alpha_2} u_1^{\alpha_3} - \tfrac{1}{12} u_1^{\alpha_1} u_1^{\alpha_2} u_2^{\alpha_3} \\&\quad + \tfrac{1}{6} u_1^{\alpha_1} u_2^{\alpha_2} u_1^{\alpha_3} + \tfrac{1}{6} u_2^{\alpha_1} u_1^{\alpha_2} u_2^{\alpha_3} \\&\quad - \tfrac{1}{12} u_2^{\alpha_1} u_2^{\alpha_2} u_1^{\alpha_3} + \tfrac{1}{12} u_2^{\alpha_1} u_2^{\alpha_2} u_2^{\alpha_3}) \\&
  + R_{\alpha_1 \alpha_2 ; \alpha_3 \alpha_4} \bigl( \tfrac{1}{40} u_1^{\alpha_1} u_1^{\alpha_2} u_1^{\alpha_3} u_1^{\alpha_4} - \tfrac{1}{30} u_1^{\alpha_1} u_1^{\alpha_2} u_2^{\alpha_3} u_2^{\alpha_4} \\&\quad + \tfrac{1}{10} u_1^{\alpha_1} u_2^{\alpha_2} u_1^{\alpha_3} u_2^{\alpha_4} + \tfrac{1}{10} u_2^{\alpha_1} u_1^{\alpha_2} u_2^{\alpha_3} u_1^{\alpha_4} \\&\quad - \tfrac{1}{30} u_2^{\alpha_1} u_2^{\alpha_2} u_1^{\alpha_3} u_1^{\alpha_4} + \tfrac{1}{40} u_2^{\alpha_1} u_2^{\alpha_2} u_2^{\alpha_3} u_2^{\alpha_4} \bigr) \\&
  + R_{\mu \alpha_1} R^\mu{}_{\alpha_2 \alpha_3 \alpha_4} \bigl( \tfrac{1}{9} u_1^{\alpha_1} u_2^{\alpha_2} u_1^{\alpha_3} u_2^{\alpha_4} + \tfrac{1}{9} u_2^{\alpha_1} u_1^{\alpha_2} u_2^{\alpha_3} u_1^{\alpha_4} \bigr) \\&
  + R^\mu{}_{\alpha_1 \nu \alpha_2} R^\nu{}_{\alpha_3 \mu \alpha_4} \bigl( \tfrac{1}{180} u_1^{\alpha_1} u_1^{\alpha_2} u_1^{\alpha_3} u_1^{\alpha_4} + \tfrac{1}{45} u_1^{\alpha_1} u_1^{\alpha_2} u_2^{\alpha_3} u_2^{\alpha_4} \\&\quad + \tfrac{1}{45} u_1^{\alpha_1} u_2^{\alpha_2} u_1^{\alpha_3} u_2^{\alpha_4} + \tfrac{49}{180} u_1^{\alpha_1} u_2^{\alpha_2} u_2^{\alpha_3} u_1^{\alpha_4} \\&\quad + \tfrac{1}{180} u_2^{\alpha_1} u_2^{\alpha_2} u_2^{\alpha_3} u_2^{\alpha_4} \bigr)
  + \dotsb
  \bigr].
\end{align*}
Naturally, the result is invariant under exchange of $u_1$ and $u_2$.

Next we calculate
\begin{align*}
  \Delta(z-w,z+\tilde{w})^\frac12
  &= \Delta(z+v_2-\parap{u_1}^{v_2},z+v_2+\parap{u_1}^{v_2})^\frac12 \\
  &=
  \begin{aligned}[t]
    \exp\bigl[&
      \tfrac{1}{3} R_{\alpha_1 \alpha_2} u_1^{\alpha_1} u_1^{\alpha_2}
      + \tfrac{1}{3} R_{\alpha_1 \alpha_2 ; \alpha_3} u_1^{\alpha_1} u_1^{\alpha_2} u_2^{\alpha_3} \\&
      + \bigl( \tfrac{2}{45} R^\mu{}_{\alpha_1 \nu \alpha_2} R^\nu{}_{\alpha_3 \mu \alpha_4} + \tfrac{1}{30} R_{\alpha_1 \alpha_2 ; \alpha_3 \alpha_4} \bigr) u_1^{\alpha_1} u_1^{\alpha_2} u_1^{\alpha_3} u_1^{\alpha_4} \\& + \tfrac{1}{6} R_{\alpha_1 \alpha_2 ; \alpha_3 \alpha_4} u_1^{\alpha_1} u_1^{\alpha_2} u_2^{\alpha_3} u_2^{\alpha_4}
      + \dotsb
    \bigr],
  \end{aligned}
  \\
  \Delta(z+w,z+\tilde{w})^\frac12
  &= \Delta(z+v_1-\parap{u_2}^{v_1},z+v_1+\parap{u_2}^{v_1})^\frac12 \\
  &=
  \begin{aligned}[t]
    \exp\bigl[&
      \tfrac{1}{3} R_{\alpha_1 \alpha_2} u_2^{\alpha_1} u_2^{\alpha_2}
      + \tfrac{1}{3} R_{\alpha_1 \alpha_2 ; \alpha_3} u_2^{\alpha_1} u_2^{\alpha_2} u_1^{\alpha_3} \\&
      + \bigl( \tfrac{2}{45} R^\mu{}_{\alpha_1 \nu \alpha_2} R^\nu{}_{\alpha_3 \mu \alpha_4} + \tfrac{1}{30} R_{\alpha_1 \alpha_2 ; \alpha_3 \alpha_4} \bigr) u_2^{\alpha_1} u_2^{\alpha_2} u_2^{\alpha_3} u_2^{\alpha_4} \\& + \tfrac{1}{6} R_{\alpha_1 \alpha_2 ; \alpha_3 \alpha_4} u_2^{\alpha_1} u_2^{\alpha_2} u_1^{\alpha_3} u_1^{\alpha_4}
      + \dotsb
    \bigr],
  \end{aligned}
  \\
  \Delta(z-w,z+w)^{-\frac12} &=
  \begin{aligned}[t]
    \exp\bigl[&
      - \tfrac{1}{3} R_{\alpha_1 \alpha_2} (u_1^{\alpha_1} - u_2^{\alpha_1}) (u_1^{\alpha_2} - u_2^{\alpha_2}) \\&
      - \bigl(\tfrac{2}{45} R^\mu{}_{\alpha_1 \nu \alpha_2} R^\nu{}_{\alpha_3 \mu \alpha_4} + \tfrac{1}{30} R_{\alpha_1 \alpha_2 ; \alpha_3 \alpha_4} \bigr) \\&\quad\times (u_1^{\alpha_1} - u_2^{\alpha_1}) (u_1^{\alpha_2} - u_2^{\alpha_2}) (u_1^{\alpha_3} - u_2^{\alpha_3}) (u_1^{\alpha_4} - u_2^{\alpha_4})
      \\&
      + \tfrac{1}{9} R_{\mu \alpha_1} R^\mu{}_{\alpha_2 \alpha_3 \alpha_4} (u_1^{\alpha_1} - u_2^{\alpha_1}) (u_1^{\alpha_2} u_2^{\alpha_3} u_1^{\alpha_4} - u_2^{\alpha_2} u_1^{\alpha_3} u_2^{\alpha_4})
    \bigr],
  \end{aligned}
  \\
  \Delta(z,z+\tilde{w})^{-1} &=
  \begin{aligned}[t]
    \exp\bigl[&
      -\tfrac{1}{6} R_{\alpha_1 \alpha_2} (u_1^{\alpha_1} + u_2^{\alpha_1}) (u_1^{\alpha_2} + u_2^{\alpha_2}) \\&
      - \tfrac{1}{12} R_{\alpha_1 \alpha_2 ; \alpha_3} (u_1^{\alpha_1} + u_2^{\alpha_1}) (u_1^{\alpha_2} + u_2^{\alpha_2}) (u_1^{\alpha_3} + u_2^{\alpha_3}) \\&
      - \bigl( \tfrac{1}{180} R^{\mu}{}_{\alpha_1 \nu \alpha_2} R^\nu{}_{\alpha_3 \mu \alpha_4} + \tfrac{1}{40} R_{\alpha_1 \alpha_2 ; \alpha_3 \alpha_4} \bigr) \\&\quad\times (u_1^{\alpha_1} + u_2^{\alpha_1}) (u_1^{\alpha_2} + u_2^{\alpha_2}) (u_1^{\alpha_3} + u_2^{\alpha_3}) (u_1^{\alpha_4} + u_2^{\alpha_4}) \\&
      - \tfrac{1}{18} R_{\mu \alpha_1} R^\mu{}_{\alpha_1 \alpha_2 \alpha_3} (u_1^{\alpha_1} + u_2^{\alpha_1}) (u_1^{\alpha_2} u_2^{\alpha_3} u_1^{\alpha_4} + u_2^{\alpha_2} u_1^{\alpha_3} u_2^{\alpha_4})
    \bigr].
  \end{aligned}
\end{align*}
where we used~\eqref{eq:zeta_expansion-1}, \eqref{eq:zeta_expansion-3} and~\eqref{eq:zeta_expansion-4} together with~\eqref{eq:v1-expansion}, \eqref{eq:v2-expansion}, \eqref{eq:w-expansion} and~\eqref{eq:tilde-w-expansion}.

All together, we obtain
\begin{align*}
  \Lambda(z,u_1,u_2)
  &= 2^{-d} \abs*{\frac{\partial(w,\tilde{w})}{\partial(u_1,u_2)}} \frac{\Delta(z-w,z+\tilde{w})^\frac12 \Delta(z+w,z+\tilde{w})^\frac12}{\Delta(z-w,z+w)^\frac12 \Delta(z,z+\tilde{w})} \\
  &=
  \begin{aligned}[t]
    \exp\bigl[&
    \tfrac{1}{3} R_{\alpha_1 \alpha_2} u_1^{\alpha_1} u_2^{\alpha_2}
    + R_{\alpha_1 \alpha_2 ; \alpha_3} \bigl( \tfrac{1}{6} u_1^{\alpha_1} u_1^{\alpha_2} u_2^{\alpha_3} + \tfrac{1}{6} u_2^{\alpha_1} u_2^{\alpha_2} u_1^{\alpha_3} \bigr) \\&
    + R_{\alpha_1 \alpha_2 ; \alpha_3  \alpha_4} \bigl(
      \tfrac{1}{120} u_1^{\alpha_1} u_1^{\alpha_2} u_1^{\alpha_3} u_2^{\alpha_4}
      + \tfrac{1}{120} u_1^{\alpha_1} u_1^{\alpha_2} u_2^{\alpha_3} u_1^{\alpha_4} \\&\quad
      + \tfrac{1}{60} u_1^{\alpha_1} u_2^{\alpha_2} u_1^{\alpha_3} u_1^{\alpha_4}
      - \tfrac{1}{60} u_1^{\alpha_1} u_2^{\alpha_2} u_2^{\alpha_3} u_1^{\alpha_4} \\&\quad
      - \tfrac{1}{60} u_1^{\alpha_1} u_2^{\alpha_2} u_1^{\alpha_3} u_2^{\alpha_4}
      + \tfrac{3}{40} u_1^{\alpha_1} u_1^{\alpha_2} u_2^{\alpha_3} u_2^{\alpha_4} \\&\quad
      + \tfrac{3}{40} u_2^{\alpha_1} u_2^{\alpha_2} u_1^{\alpha_3} u_1^{\alpha_4}
      + \tfrac{1}{120} u_2^{\alpha_1} u_2^{\alpha_2} u_2^{\alpha_3} u_1^{\alpha_4} \\&\quad
      + \tfrac{1}{120} u_2^{\alpha_1} u_2^{\alpha_2} u_1^{\alpha_3} u_2^{\alpha_4}
      + \tfrac{1}{60} u_1^{\alpha_1} u_2^{\alpha_2} u_2^{\alpha_3} u_2^{\alpha_4} \bigr) \\&
    + R_{\mu \alpha_1} R^{\mu}{}_{\alpha_2 \alpha_3 \alpha_4} \bigl(
      \tfrac{1}{18} u_1^{\alpha_1} u_1^{\alpha_2} u_2^{\alpha_3} u_1^{\alpha_4}
      - \tfrac{1}{18} u_1^{\alpha_1} u_2^{\alpha_2} u_1^{\alpha_3} u_2^{\alpha_4} \\&\quad
      - \tfrac{1}{18} u_2^{\alpha_1} u_1^{\alpha_2} u_2^{\alpha_3} u_1^{\alpha_4}
      + \tfrac{1}{18} u_2^{\alpha_1} u_2^{\alpha_2} u_1^{\alpha_3} u_2^{\alpha_4} \bigr) \\&
    + R^{\mu}{}_{\alpha_1 \nu \alpha_2} R^{\nu}{}_{\alpha_4 \mu \alpha_3} \bigl(
      \tfrac{7}{45} u_2^{\alpha_1} u_1^{\alpha_2} u_1^{\alpha_3} u_1^{\alpha_4}
      - \tfrac{7}{90} u_1^{\alpha_1} u_2^{\alpha_2} u_2^{\alpha_3} u_1^{\alpha_4} \\&\quad
      + \tfrac{31}{180} u_1^{\alpha_1} u_2^{\alpha_2} u_1^{\alpha_3} u_2^{\alpha_4}
      - \tfrac{7}{90} u_1^{\alpha_1} u_1^{\alpha_2} u_2^{\alpha_3} u_2^{\alpha_4} \\&\quad
      + \tfrac{7}{45} u_1^{\alpha_1} u_2^{\alpha_2} u_2^{\alpha_3} u_2^{\alpha_4} \bigr)
    + \dotsb
    \bigr].
  \end{aligned}
\end{align*}

\subsection{Expansions of the remaining factors}

Inserting in $\e^{2\i (w+u_1-u_2) \cdot p}$ the expansion of~$w$ in terms of~$u_1$ and $u_2$, as given by~\eqref{eq:w-expansion}, we obtain
\begin{equation*}\begin{split}
  \e^{2\i (w+u_1-u_2) \cdot p}
  = \smash{\exp\Bigl[}& \i p_\mu \smash{\Bigl(}
  R^\mu{}_{\alpha_1\alpha_2\alpha_3} \bigl(\tfrac13 u_1^{\alpha_1} u_2^{\alpha_2} u_1^{\alpha_3} - \tfrac13 u_2^{\alpha_1} u_1^{\alpha_2} u_2^{\alpha_3}\bigr) \\
  &\quad + R^\mu{}_{\alpha_1\alpha_2\alpha_3;\alpha_4} \bigl(-\tfrac13 u_2^{\alpha_1} u_1^{\alpha_2} u_2^{\alpha_3} u_2^{\alpha_4} + \tfrac13 u_1^{\alpha_1} u_2^{\alpha_2} u_1^{\alpha_3} u_1^{\alpha_4} \bigr) \\
  &\quad + R^\mu{}_{\alpha_1\alpha_2\alpha_3;\alpha_4\alpha_5} \bigl(- \tfrac{7}{60} u_2^{\alpha_1} u_1^{\alpha_2} u_2^{\alpha_3} u_2^{\alpha_4} u_1^{\alpha_5} + \tfrac{1}{60} u_1^{\alpha_1} u_2^{\alpha_2} u_1^{\alpha_3} u_2^{\alpha_4} u_2^{\alpha_5} \\&\qquad + \tfrac{1}{20} u_1^{\alpha_1} u_2^{\alpha_2} u_1^{\alpha_3} u_1^{\alpha_4} u_1^{\alpha_5} + \tfrac{7}{60} u_1^{\alpha_1} u_2^{\alpha_2} u_1^{\alpha_3} u_1^{\alpha_4} u_2^{\alpha_5} \\&\qquad - \tfrac{1}{60} u_2^{\alpha_1} u_1^{\alpha_2} u_2^{\alpha_3} u_1^{\alpha_4} u_1^{\alpha_5} - \tfrac{1}{20} u_2^{\alpha_1} u_1^{\alpha_2} u_2^{\alpha_3} u_2^{\alpha_4} u_2^{\alpha_5} \bigr) \\
  &\quad + R^\mu{}_{\alpha_1\beta\alpha_2} R^\beta{}_{\alpha_3\alpha_4\alpha_5} \bigl(\tfrac{7}{180} u_2^{\alpha_1} u_2^{\alpha_2} u_1^{\alpha_3} u_2^{\alpha_4} u_1^{\alpha_5} - \tfrac{11}{180} u_2^{\alpha_1} u_1^{\alpha_2} u_2^{\alpha_3} u_1^{\alpha_4} u_2^{\alpha_5} \\&\qquad - \tfrac{11}{45} u_1^{\alpha_1} u_2^{\alpha_2} u_2^{\alpha_3} u_1^{\alpha_4} u_2^{\alpha_5} + \tfrac{7}{180} u_1^{\alpha_1} u_1^{\alpha_2} u_1^{\alpha_3} u_2^{\alpha_4} u_1^{\alpha_5} \\&\qquad - \tfrac{7}{180} u_1^{\alpha_1} u_1^{\alpha_2} u_2^{\alpha_3} u_1^{\alpha_4} u_2^{\alpha_5} + \tfrac{11}{180} u_1^{\alpha_1} u_2^{\alpha_2} u_1^{\alpha_3} u_2^{\alpha_4} u_1^{\alpha_5} \\&\qquad + \tfrac{11}{45} u_2^{\alpha_1} u_1^{\alpha_2} u_1^{\alpha_3} u_2^{\alpha_4} u_1^{\alpha_5} - \tfrac{7}{180} u_2^{\alpha_1} u_2^{\alpha_2} u_2^{\alpha_3} u_1^{\alpha_4} u_2^{\alpha_5} \bigr) + \dotsb \smash{\Bigr) \Bigr]}
\end{split}\end{equation*}

Expanding $a(z+v_1,\parap{p}_{v_1})$ and $b(z+v_2,\parap{p}_{v_2})$ around~$z$ (using the horizontal derivative, see~\eqref{eq:taylor2}), and replacing~$v_1$ resp.~$v_2$ by their expansions with respect to~$u_1$ and~$u_2$, as given by~\eqref{eq:v1-expansion} and~\eqref{eq:v2-expansion}, we obtain
\begin{align*}
  \begin{split}
    a(z+v_1,\parap{p}_{v_1}) &=
    \bigl( a + a_{;\alpha} u_1^\alpha + \tfrac12 a_{;\alpha_1\alpha_2} u_1^{\alpha_1} u_1^{\alpha_2} + \tfrac16 a_{;\alpha_1\alpha_2\alpha_3} u_1^{\alpha_1} u_1^{\alpha_2} u_1^{\alpha_3} \\&\qquad + \tfrac{1}{24} a_{;\alpha_1\alpha_2\alpha_3\alpha_4} u_1^{\alpha_1} u_1^{\alpha_2} u_1^{\alpha_3} u_1^{\alpha_4} + \dotsb \bigr) \\
    &\quad + R^\beta{}_{\alpha_1\alpha_2\alpha_3} \bigl( \tfrac12 a_{;\beta} u_2^{\alpha_1} u_1^{\alpha_2} u_2^{\alpha_3} + \tfrac12 a_{;\beta\alpha_4} u_2^{\alpha_1} u_1^{\alpha_2} u_2^{\alpha_3} u_1^{\alpha_4} + \dotsb \bigr) \\
    &\quad + R^\beta{}_{\alpha_1\alpha_2\alpha_3;\alpha_4} \Bigl( a_{;\beta} \bigl(\tfrac{1}{12} u_2^{\alpha_1} u_1^{\alpha_2} u_2^{\alpha_3} u_2^{\alpha_4} - \tfrac{1}{24} u_1^{\alpha_1} u_2^{\alpha_2} u_1^{\alpha_3} u_2^{\alpha_4} \\&\qquad + \tfrac{5}{24} u_2^{\alpha_1} u_1^{\alpha_2} u_2^{\alpha_3} u_1^{\alpha_4} - \tfrac{1}{12} u_1^{\alpha_1} u_2^{\alpha_2} u_1^{\alpha_3} u_1^{\alpha_4}\bigr) + \dotsb \Bigr) + \dotsb
  \end{split}
  \\
  \begin{split}
    b(z+v_2,\parap{p}_{v_2}) &=
    \bigl(b + b_{;\alpha} u_2^\alpha + \tfrac12 b_{;\alpha_1\alpha_2} u_2^{\alpha_1} u_2^{\alpha_2} + \tfrac16 b_{;\alpha_1\alpha_2\alpha_3} u_2^{\alpha_1} u_2^{\alpha_2} u_2^{\alpha_3} \\&\qquad + \tfrac{1}{24} b_{;\alpha_1\alpha_2\alpha_3\alpha_4} u_2^{\alpha_1} u_2^{\alpha_2} u_2^{\alpha_3} u_2^{\alpha_4} + \dotsb \bigr) \\
    &\quad + R^\beta{}_{\alpha_1\alpha_2\alpha_3} \bigl(\tfrac12 b_{;\beta} u_1^{\alpha_1} u_2^{\alpha_2} u_1^{\alpha_3} + \tfrac12 b_{;\beta\alpha_4} u_1^{\alpha_1} u_2^{\alpha_2} u_1^{\alpha_3} u_2^{\alpha_4} + \dotsb \bigr) \\
    &\quad + R^\beta{}_{\alpha_1\alpha_2\alpha_3;\alpha_4} \Bigl( b_{;\beta} \bigl(\tfrac{1}{12} u_1^{\alpha_1} u_2^{\alpha_2} u_1^{\alpha_3} u_1^{\alpha_4} - \tfrac{1}{24} u_2^{\alpha_1} u_1^{\alpha_2} u_2^{\alpha_3} u_1^{\alpha_4} \\&\qquad + \tfrac{5}{24} u_1^{\alpha_1} u_2^{\alpha_2} u_1^{\alpha_3} u_2^{\alpha_4} - \tfrac{1}{12} u_2^{\alpha_1} u_1^{\alpha_2} u_2^{\alpha_3} u_2^{\alpha_4}\bigr) + \dotsb \Bigr) + \dotsb
  \end{split}
\end{align*}
where we recall that $a_{;\alpha}$, $b_{;\alpha}$ etc.\ denote the horizontal derivatives of the symbols $a$ and $b$.

\bigskip
\paragraph{Acknowledgments}
All authors gratefully acknowledge the financial support of the National Science Center (NCN), Poland, under the grant UMO-2014/15/B/ST1/00126.
The work of D.S.\ was partially supported by the NCN grant DEC-2015/16/S/ST1/00473 and the DFG project ``Solutions and Stability of the semiclassical Einstein equation – a phase space approach''.
We thank Wojciech Kami\'{n}ski for useful explanations.

\small

\end{document}